\renewcommand{\vec}[1]{{\bf #1}}
\newcommand{\be}{\begin{equation}}
\newcommand{\ee}{\end{equation}}
\newcommand{\<}{\langle}
\renewcommand{\>}{\rangle}
\newcommand{\Tr}{{\rm Tr\,}}
\newtheorem{theoremS}{Theorem}
\newtheorem{thrm}{Theorem}
\newtheorem{prop}{Proposition}
\newtheorem{cor}{Corollary}
\newtheorem{lem}{Lemma}
\newtheorem{dfn}{Definition}
\newtheorem{notn}{Notation}
\newcounter{mycounter}
\date{} % Empty date removes it from the title page
\title{Complexity of Local Quantum Circuits\\ under Nonunital Noise}
\author{ 
Oles Shtanko\thanks{IBM Quantum, IBM Research – Almaden, San Jose CA, 95120, USA (oles.shtanko@ibm.com)}, \;
Kunal Sharma\thanks{IBM Quantum, IBM Thomas J. Watson Research Center, Yorktown Heights, NY 10598, USA (kunals@ibm.com)}}
\begin{document}

\maketitle

\begin{abstract}
It is widely accepted that noisy quantum devices are limited to logarithmic depth circuits unless mid-circuit measurements and error correction are employed. However, this conclusion holds only for unital error channels, such as depolarizing noise. Building on the idea of the ``quantum refrigerator'' \cite{ben2013quantum}, we improve upon previous results and show that geometrically local circuits in the presence of nonunital noise, in any dimension $d\geq 1$, can correct errors without mid-circuit measurements
and extend computation to any depth, with only polylogarithmic overhead in the depth and the number of qubits. This implies that local quantum dynamics subjected to sufficiently weak nonunital noise is computationally  universal and nearly as hard to simulate as noiseless dynamics. Additionally, we quantify the contraction  property of local random circuits in the presence of nonunital noise. 
\end{abstract}

\vspace{0.4cm}

\section{Introduction}

Noise significantly affects the performance of quantum algorithms \cite{cai2024shor} and is one of the most critical challenges for every quantum experiment today \cite{preskill2018quantum}. Despite these obstacles, there has been a recent surge in noisy quantum hardware, which has demonstrated the ability to generate results in the high-complexity regime, approaching the frontier of classical simulation techniques \cite{kim2023evidence, robledo2024chemistry}. An interesting open question is whether noisy systems can provide an advantage as the coherence times keep improving and the number of qubits scales up, before quantum error correction becomes available. Interestingly, the resolution of this question seems to depend on the nature of the noise being considered.

Unital noise models, such as depolarizing noise, are widely used in the field due to their simplicity: they can be simulated by summing stochastic unitary trajectories. Their simple representation not only makes it easy to analyze noisy quantum circuits analytically \cite{aharonov1996limitations, aharonov2023polynomial,deshpande2022tight,fontana2023classical, schuster2024polynomial} but has also recently enabled the successful implementation of error mitigation techniques on current noisy quantum hardware \cite{van2023probabilistic,kim2023evidence}.
However, such noise rarely reflects the actual physics of quantum devices \cite{pino2021demonstration,krantz2019quantum}. Furthermore, this noise generally leads to an irreversible increase in the entropy of the system until it reaches the trivial maximally mixed state. As a result, the output of such noisy random circuits approaches a uniform distribution exponentially with increasing circuit depth \cite{aharonov1996limitations,pastawski2009how, gao2018efficient,aharonov2023polynomial}, rendering circuits with superlogarithmic depth classically simulable. 

The effect of more general, \textit{nonunital} noise is less studied. In \cite{mele2024noise}, it was shown that, similar to unital noise, the output distribution of any circuit subject to general noise converges at logarithmic depth—but only for noise strength values exceeding a certain constant value. For sufficiently weak noise, however, this convergence result does not hold. The seminal work \cite{ben2013quantum} provided strong evidence that circuits with all-to-all connectivity, when subjected to strictly nonunital noise, can efficiently simulate noiseless circuits. Here, we generalize this result to arbitrary local quantum circuits in dimensions $d\geq 1$.

Circuits exhibiting such complexity must be carefully constructed to exploit properties of noise. One might ask whether a similar result holds for \textit{random} circuits. As it was shown in \cite{fefferman2023effect}, anti-concentration—typically observed in the noiseless case \cite{dalzell2022random}—does not occur when nonunital noise is present, which implies that current efficient classical algorithms for sampling from noisy circuits do not directly apply to this case  
 \cite{aharonov2023polynomial}.  However, as the circuit depth increases, the distance between any two input states inevitably collapses exponentially with the number of qubits after a certain depth, even under nonunital noise \cite{mele2024noise}. In this work, we demonstrate that for general nonunital noise, this depth is at least logarithmic for local architectures, while for $n$-qubit Haar random operations it is constant. 

\subsection{Overview of the result}

In our work, we address two main questions. First, we ask: does the result of Ref.~\cite{ben2013quantum} generalize to arbitrary architectures (including 1D) with the same polylogarithmic overhead? We answer this question positively in Theorem~\ref{thm:main_main} below. We show that this is possible if the noise strength is below a certain critical value that depends on the introduced measure of noise nonunitality and the dimension of the lattice. This result establishes that the complexity of nonunital noise is essentially independent of architecture and can be applied to all existing hardware architectures.

To prove this result, we employ the same quantum refrigerator construction as in Ref.~\cite{ben2013quantum}, but we incorporate the locality structure of the circuit and account for errors occurring during qubit swaps. We also derive how the resulting noise threshold depends on the lattice dimension. In the limit of all-to-all connectivity, we restore the original result from \cite{ben2013quantum}.

In the second part, we examine random circuits affected by general types of noise. In Theorem~\ref{thm_low_bound}, we demonstrate that the distance between two initially orthogonal product states in the computational basis contracts at a rate bounded above by a constant, for local circuits. This result extends the results of Ref.~\cite{deshpande2022tight} to nonunital noise models and complements the results of \cite{mele2024noise}. 
Moreover, in Theorem~\ref{thrm:all_to_all_gates} we provide an example of noisy all-to-all Haar random circuit, in which the distance between two states becomes exponentially small in the number of qubits already at a constant depth.

\subsection{Worst-case complexity}

We define a $d$-dimensional local noisy quantum circuit with noise generated by the map $\mathcal{N}$ if it can be represented as a sequence of $m$-qubit unitary operations applied between the neighboring nodes of a $d$-dimensional lattice, each followed by a product of single-qubit CPTP maps $\mathcal{N}^{\otimes m}$ acting on the same $m$ qubits. We assume that the idling qubits are affected by the same noise channel $\mathcal N$. This work focuses on general, nonunital noise. Following the common definition, we define the noise as nonunital if it does not preserve the identity operator.
\begin{dfn}\label{def:nonunital_channel} We define a channel $\mathcal N$ to be nonunital if $\|\mathcal N(I)-I\|_1>0$, where $I$ is the identity operator.
\end{dfn}
Common examples of nonunital channels considered here include the replacement channel (see Definition~\ref{dfn:rep_channel}) and the amplitude damping channel (see Definition~\ref{dfn:gen_damping}). Generally, the noise strength of a channel $\mathcal{N}$ is defined as 
\begin{align}\label{eq:noise-strength}
\kappa = \|\mathcal{N} - \mathcal{I}\|_{\diamond}~,    
\end{align} 
where $\mathcal{I}$ denotes the identity channel. Additionally, we assume that, for a typical channel, the distance between any pair of states contracts at a rate proportional to $\kappa$. Let $\mathsf{S}_N = \{\rho: \rho \geq 0, \Tr \rho = 1, {\rm dim}(\rho) = N\}$ denote a set of density matrices on an $N$-dimensional Hilbert space. Then the contraction condition for $\mathcal{N}$  can be formulated as follows:
\begin{dfn}\label{def:contracting}
We define a channel to be contracting if for any $\rho, \sigma \in \mathsf S_2$, there exists $\Delta > 0$ such that
\be\label{eq:assump_contr}
\|\mathcal N(\rho - \sigma)\|_1 \leq (1 - \Delta \kappa)\|\rho - \sigma\|_1,
\ee
where $\kappa = \|\mathcal{N} - \mathcal{I}\|_{\diamond}$.
\end{dfn}

This condition applies to most quantum channels, except for a small number of special cases, such as unitary channels and dephasing noise $\mathcal{N}_{\rm deph}(\rho) := \frac{\gamma}{2} Z\rho Z + \left(1 - \frac{\gamma}{2}\right)\rho$, where $\gamma \in [0, 1]$ and $Z$ is the Pauli-Z operator, which preserves the distance between any two states in the computational basis. 

Finally, let $\sigma^* \in \mathsf{S}_2$ to be a fixed point of the channel $\mathcal{N}$, defined by $\mathcal{N}(\sigma^*) = \sigma^*$. The fixed point can also be characterized by the purity parameter defined as 
\begin{align}\label{eq:eta}
\eta := \sqrt{2\Tr[(\sigma^*)^2] - 1}~.
\end{align}

From Definition~\ref{def:contracting}, it follows that $\mathcal N$ must have a unique fixed point. Moreover, the parameter $\eta$ for the fixed point of a nonunital channel must satisfy $\eta > 0$ since $\eta = 0$ implies a maximally mixed state which is excluded by Definition~\ref{def:nonunital_channel}.
Thus, the parameter $\eta$ serves as an alternative measure of the nonunitality of a channel. 
Our goal is to show that for any $\eta>0$ there exists sufficiently small $\kappa(\eta)$ so that simulating the resulting noisy dynamics is nearly as hard as simulating the noiseless dynamics. In particular, we prove the following result.

\begin{figure}[t!]
    \centering
    \includegraphics[width=0.6\textwidth]{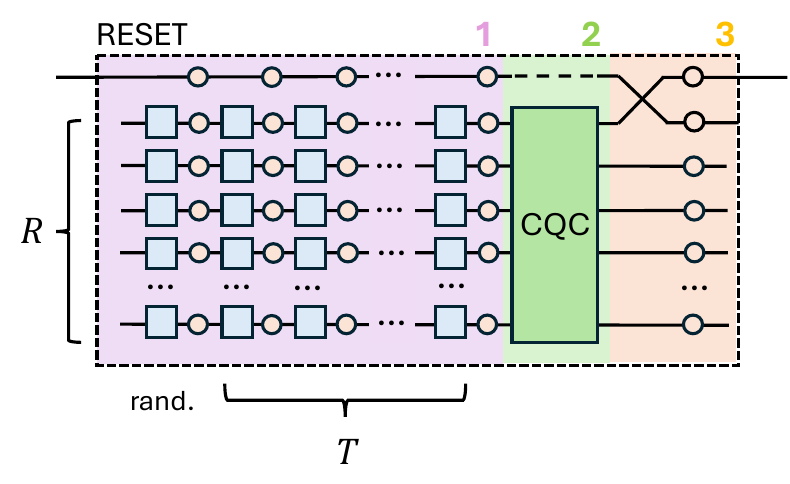}
    \caption{\textbf{RESET circuit.} Blue boxes show the single qubit gates, circles show the noise channel $\mathcal N(\cdot)$, green box shows the Compound Quantum Compressor (CQC) circuit. The RESET is implemented in three steps, marked by different colors.}
    \label{fig:reset}
\end{figure}

\begin{thrm}[Informal]\label{thm:main_main}
Consider $d$-dimensional quantum circuits with noise generated by a contracting noise channel $\mathcal{N}$ with strength $\kappa$ and a unique fixed point $\sigma^*$ with purity parameter $\eta$. Then, for an arbitrary $\eta > 0$,  there exists nonzero $\kappa = O(\eta^{\mu/d})$ with $\mu > \mu_0 = \log 3 / \log(3/2)$, such noisy circuits allow for the simulation of noiseless unitary circuits with a logarithmic overhead.
\end{thrm}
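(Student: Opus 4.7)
The plan is to adapt the ``quantum refrigerator'' architecture of Ref.~\cite{ben2013quantum} from all-to-all connectivity to a $d$-dimensional local architecture, using the contracting nonunital noise itself as an entropy sink that manufactures a supply of nearly-pure ancilla qubits. These ancillas then drive a standard fault-tolerant simulation of the target noiseless circuit, which is what ultimately yields the logarithmic depth overhead.

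First, I would recall the basic cooling primitive. A qubit idled under $\mathcal{N}$ for $O(1/\kappa)$ steps relaxes to within $O(\kappa)$ of the fixed point $\sigma^*$, whose Bloch polarization equals $\eta$. A three-qubit unitary then compresses the polarizations of three such qubits onto a single output, boosting its bias toward a pure state by approximately a factor $3/2$ in the small-bias regime. Iterating this primitive recursively to depth $k$ gives the Compound Quantum Compressor of Figure~\ref{fig:reset}: it consumes $3^k$ lukewarm inputs to output a single ancilla whose residual deviation from a pure state is $O((2/3)^k)$ in the noiseless limit.

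The new technical content is to realize this recursion on a $d$-dimensional lattice while tracking the noise inflicted during qubit routing. The $3^k$ inputs needed at recursion level $k$ live in a region of diameter $\ell_k = O(3^{k/d})$, so the level-$k$ compression unitary must be implemented with $\mathrm{SWAP}$ chains of length $O(\ell_k)$, each followed by $\mathcal{N}$. By the contraction bound of Definition~\ref{def:contracting}, these routing SWAPs degrade the polarization by $O(\kappa\, 3^{k/d})$ per ancilla. Requiring this degradation to be dominated by the polarization gain $\sim (3/2)^k \eta$ at level $k$ gives the condition $\kappa\, 3^{k/d} \lesssim (3/2)^k \eta$. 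Choosing $k \sim \log(1/\eta)/\log(3/2)$ so that the polarization saturates and unwinding the recursion yields $\kappa = O(\eta^{\mu/d})$, with $\mu$ strictly exceeding $\mu_0 = \log 3/\log(3/2)$ once the per-level loss from imperfect routing is accounted for. With a steady supply of such ancillas at constant, arbitrarily small error, I would close the argument by invoking a standard fault-tolerant construction compatible with nearest-neighbor layouts to simulate any $n$-qubit, depth-$T$ noiseless circuit with only $\mathrm{polylog}(nT)$ overhead.

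I expect the main obstacle to lie in the middle step: showing that the many small errors incurred while routing ancillas through the long SWAP chains at the top levels of the recursion do not accumulate faster than the geometric gain in polarization. This requires analyzing the recursion in a norm sensitive to deviation from $\sigma^*$ rather than from a pure state, applying the contraction inequality of Definition~\ref{def:contracting} at every SWAP, and then balancing the two competing geometric series whose ratio is exactly what pins down the exponent $\mu > \mu_0$ and the dimensional factor $1/d$ in the final noise tolerance.
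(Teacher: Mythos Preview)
Your overall architecture matches the paper's---settling under the nonunital noise, recursive $3$-to-$1$ compression, then measurement-free fault tolerance---but there is a genuine gap in the settling step that breaks the argument as written. You propose to idle each ancilla for $O(1/\kappa)$ steps so that it relaxes essentially all the way to $\sigma^*$ with polarization $\approx\eta$, and only then feed it into the compressor. The problem is that during those $O(1/\kappa)$ idle steps the \emph{data} qubits---which carry the logical state and cannot be reset---accumulate noise $\kappa\cdot O(1/\kappa)=O(1)$. The effective per-RESET noise rate $\kappa'$ seen by the fault-tolerant layer is therefore of order one regardless of how small $\kappa$ is, and the threshold theorem never applies. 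The routing analysis you correctly flag as the main obstacle only controls the \emph{ancilla} polarization; it says nothing about data-qubit survival during the RESET, and your balance condition $\kappa\,3^{k/d}\lesssim(3/2)^k\eta$ omits this constraint entirely.

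The paper's resolution is precisely the trade-off you are missing: it settles for a much shorter time $T\propto(\kappa\eta)^{-\mu/(\mu+d)}\ll 1/\kappa$ (Lemma~\ref{lem:settling}), accepting a correspondingly small input polarization $z_{\rm in}=\Theta(\kappa\eta T)=\Theta((\kappa\eta)^{d/(\mu+d)})$, and compensates with a deeper compression tree consuming $N_a=\tilde O((\kappa\eta)^{-\mu d/(\mu+d)})$ ancillas. With this choice the settling time and the SWAP-routing depth $N_a^{1/d}\log N_a$ are of the same order, and the total RESET depth satisfies $\kappa T_{\rm RES}=\tilde O\bigl((\kappa\eta^{-\mu/d})^{d/(\mu+d)}\bigr)$, which is genuinely small under the hypothesis $\kappa=O(\eta^{\mu/d})$; only then does Theorem~\ref{thm:threshold} apply. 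The strict inequality $\mu>\mu_0$ also emerges from this joint balance---specifically from the self-consistency condition $\kappa N_a^{1/d}\le g$ in Lemma~\ref{lem:noisy_lemma}, which forces a small $\delta>0$ in the compression exponent---rather than simply from ``per-level loss from imperfect routing'' as you suggest.
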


The formal statement, along with a proof, can be found in Section~\ref{sec:main_result}. Let us outline the steps of the proof. First, we need to show that using nonunital noise and ancilla qubits, it is possible to generate an approximate RESET that returns the target qubit to the state $|0\>$. To emulate RESETs, we use a modified Quantum Refrigerator (QR) algorithm originally introduced in Ref.~\cite{ben2013quantum}. Our algorithm consists of three independent steps (see Fig.~\ref{fig:reset}).

In the first step, we first erase the ancilla qubits by applying random unitary gates. These gates should be chosen independently at random for each instance of the circuit. Therefore, after the first step, the ancilla qubits approach the maximally mixed state. This step ensures the absence of correlated errors when the ancilla are reused for quantum error correction. Then, we apply a sequence of single-qubit gates followed by the noise channels until the ancilla qubits reach a certain nonzero polarization. In particular, we show that for nonunital noise, one can construct a single-qubit noisy unitary circuit of depth $T$ such that the output state $\rho_T$ satisfies the following equation  (see Lemma~\ref{lem:settling})
\be\label{eq:colleN_purity32}
z_T :=\Tr(Z\rho_T) = \Omega(\kappa\eta T),
\ee
where $Z$ is Pauli-Z matrix. These ``settled'' ancilla qubits enter the second stage, called compression. In this stage, we use algorithmic cooling  to produce fewer ancilla qubits with higher polarization \cite{fernandez2004algorithmic}. Given $N_a$ ancilla qubits with polarization $z_{\rm in}$ as in Eq.~\eqref{eq:colleN_purity32}, we get the following condition on $N_a$ to obtain a single ancilla with polarization $z_{\rm out}$   using the compression step (see Lemma~\ref{lem:noisy_lemma}),
\be\label{eq:num_anc_intr}
N_a \leq C \frac{\left(-\log 3[1-z_{\rm out}-O(g)]\right)^{\alpha}}{z_{\rm in}^{\mu+O(g)}}~,
\ee
where $\mu = \log 3/\log(3/2) + \delta$ for $\delta>0$, $C$ is a constant, and $g$ is a small parameter chosen such that $\kappa N_a^{1/d}\leq g$. Vanishing $g$ implies that the noise introduced by swapping of qubits within the ancilla blocks is negligible.  
Combining the results in Eq.~\eqref{eq:colleN_purity32} and \eqref{eq:num_anc_intr}, we choose the settling depth
\be
T \propto (\kappa\eta)^{-\mu/(\mu+d)},
\ee
where $\eta$ is the purity of the fixed point of $\mathcal{N}$. Then, we obtain the required number of ancilla per RESET operation to be (see Lemma~\ref{lem:cqc_circ})
\be\label{eq:ancilla_number}
N_{a} =O\left(\frac{\log^{\alpha}(\kappa\eta^{-\nu/d}) }{(\kappa\eta)^{\nu d/(\nu+d)}}\right).
\ee

In the last step, we exchange the output of the compression cycle with the target qubit. Combining the idle error for the logical qubits and swap times, we obtain the noise cost of producing a RESET. In particular, the new effective noise channel per layer $\mathcal N'$ will have the rate $\kappa' = \|\mathcal N'-\mathcal I\|_{\diamond}$ described by the expression (see Corollary~\ref{thm:reset})
\be\label{eq:effective_noise}
\kappa' = \tilde O\left((\kappa\eta^{-\mu/d})^{d/(\mu+d)}\right).
\ee
While this stregnth is larger than the original strength $\kappa$, the ability to implement RESETs allows us to implement quantum error correction and thus reduce the noise strength to an arbitrarily small value. Of course, this is only possible if the original noise strength satisfies the asymptotic behavior $\kappa = O(\eta^{\mu/d})$, otherwise, the expression inside big-O scaling is not bounded.

In particular, we show that for sufficiently small $\kappa'$ in Eq.~\eqref{eq:effective_noise}, local noisy circuits with RESET operations can implement fault-tolerant quantum computing with logarithmic overhead in the number of qubits \cite{gottesman2000fault,aliferis2005quantum,boykin2010algorithms,heussen2024measurement}. Thus, by adding effective RESETs, a noisy circuit of depth $D$ on $N$ qubits can be used to simulate a noiseless circuit of depth $D' < D$ on $N' < N$ qubits. The overhead for this simulation is the product of the overheads required for implementing RESETs and for quantum error correction. The latter can be estimated as $\log (N'D'/\epsilon)$, where $\epsilon$ is the error in simulating a unitary circuit (see Theorem~\ref{thm:threshold}). The former, which corresponds to the number of ancillas needed per logical qubit, is simply given by $N_a$ in Eq.~\eqref{eq:ancilla_number}.

 It is important to note that, unlike the case described in Ref.~\cite{ben2013quantum}, the $\kappa = O(\eta^{\mu/d})$ threshold depends on the purity $\eta$ of the fixed point of the noise and the dimensionality of the lattice $d$. This is due to the fact that, when noise is too strong, errors accumulate during the swaps that connect logical qubits across different ancilla blocks faster than the ancillas can cool under the same noise channel. However, below the noise threshold, the resulting overhead for implementing RESETs remains constant, and the scheme works even for $d=1$. This approach contrasts with the local scheme proposed in Ref.~\cite{ben2013quantum}, which requires polynomial overhead and only works for systems in $d \geq 2$.

\subsection{Average-case results}

In the previous section, we demonstrated that local circuits subjected to nonunital noise remain as powerful as noiseless circuits, even when the degree of nonunitality is arbitrarily small. The goal of this section is to explore a scenario where the gates are  generated randomly. This result can be seen as a generalization of Ref.~\cite{deshpande2022tight} applied to unital noise and complements Ref.~\cite{mele2024noise} which establishes similar upper bounds.

To define noisy parallel random circuits, we first consider an architecture which is defined as an arrangement of gates without specifying the exact gate operations. We focus on architectures that are parallel, meaning the architecture can be divided into parallel layers such that each qubit is involved in exactly one gate per layer. The number of such layers is referred to as the parallel depth, denoted by $D$. In a noisy architecture, it is further assumed that noise is homogeneous, i.e. each gate is followed by the same two-qubit quantum noise channel. Unlike in the previous section, we allow the noise to be a more general form of a two-qubit channel that is symmetric under the swapping of qubits. Our results can be generalized to the case where the noise is neither symmetric nor homogeneous.

A random circuit is defined as a collection of circuits with a specified noisy architecture in which the gate unitaries are distributed according to the Haar measure. We now state the following result: if starting from distinct pure computational basis states, these states remain efficiently distinguishable for logarithmically long times.

\begin{thrm}\label{thm_low_bound}
Consider a noisy random unitary circuit consisting of $D$ parallel layers of two-qubit gates. For any bit strings $\vec{z}, \vec{z}' \in \{0,1\}^n$, with $\vec{z} \neq \vec{z}'$, the corresponding initial quantum states $\rho = |\vec{z}\rangle \langle \vec{z}|$ and $\sigma = |\vec{z}'\rangle \langle \vec{z}'|$ satisfy
\be
\frac{1}{2} \mathbb{E}_{\mathcal B} \|\mathcal C(\rho - \sigma)\|_1 \geq e^{-\Gamma D},
\ee
where $\Gamma>0$ is a constant, $\mathcal{C}$ is a noisy random circuit channel, and $\mathbb{E}_{\mathcal B}$  denotes the expectation over the random two-qubit unitaries.
\end{thrm}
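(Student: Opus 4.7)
The plan is to reduce the theorem to a second-moment estimate that can be evaluated by Haar calculus. Let $i$ be any qubit on which $\vec z$ and $\vec z'$ differ; this exists since $\vec z\neq\vec z'$. Combining H\"older's inequality $\|\mathcal C(\Delta)\|_1\geq |\Tr(Z_i\,\mathcal C(\Delta))|$ with the elementary bound $|y|\geq y^2/\max|y|$ and the pointwise estimate $|\Tr(Z_i\,\mathcal C(\Delta))|\leq 2$ gives
\begin{equation*}
\tfrac{1}{2}\,\mathbb E_{\mathcal B}\|\mathcal C(\Delta)\|_1\;\geq\;\tfrac{1}{4}\,\mathbb E_{\mathcal B}\!\left[\Tr(Z_i\,\mathcal C(\Delta))^2\right]\;=\;\tfrac{1}{4}\,\Tr\!\left[(Z_i\otimes Z_i)\,\mathbb E_{\mathcal B}[\mathcal C^{\otimes 2}](\Delta^{\otimes 2})\right],
\end{equation*}
so it suffices to lower bound the doubled-channel second moment on the right.

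Because layers are independent, $\mathbb E_{\mathcal B}[\mathcal C^{\otimes 2}]$ factorizes across layers, and within each layer it factorizes over the disjoint two-qubit gates. On each pair, the Haar average $\mathbb E_U[U^{\otimes 2}(\cdot)(U^\dagger)^{\otimes 2}]$ projects the $16$-dimensional doubled operator space onto the two-dimensional commutant $\mathrm{span}(I,\mathbb S)$, where $\mathbb S$ is the swap of the two copies of the pair; composing this projection with the doubled noise $\mathcal N^{\otimes 2}$ yields a fixed $2\times 2$ transfer matrix $M=M(\mathcal N)$ on the coefficients $(c_I,c_{\mathbb S})$, whose entries depend only on the scalars $\alpha=\Tr[\mathcal N(I)^2]$ and $\beta=\sum_{ij}\|\mathcal N(|i\rangle\langle j|)\|_2^2$. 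Iterating $D$ layers backwards from the top-layer observable $Z_i\otimes Z_i$ rewrites the second moment as a classical partition function of an $\{I,\mathbb S\}$-valued spin model on the space--time lattice of pairs inside the backward causal cone of $i$, with top boundary $(c_I,c_{\mathbb S})=(-1/15,\,4/15)$ on the pair of $i$ and $(1,0)$ elsewhere, and bottom boundary determined by the Pauli decomposition of $\Delta^{\otimes 2}$, which---because $\Delta$ is a difference of computational-basis product states---involves only $Z$-type Paulis with equal magnitudes $|a_P|^2=4/4^n$.

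The first-moment selection rule $\delta_{P,Q}$ from the Haar average makes every summand in this partition function nonnegative, so it can be lower bounded by the contribution of the single ``$\mathbb S$-worldline'' tracking qubit $i$ through the $D$ layers. For unital $\mathcal N$ this is easy: $M$ is upper triangular with eigenvalues $1$ and $\lambda=(\beta-1)/15$, the boundary vector $(-1/15,\,4/15)=-\tfrac{1}{15}(1,-4)$ is exactly the $\lambda$-eigenvector of $M$, and the identity pairs outside the lightcone carry the $M$-invariant direction $(1,0)$. Combining these with the fact that the Pauli weights $|a_P|^2=4/4^n$ exactly cancel against the trivial per-identity-pair factor $16^{n/2-1}$ coming from the observable normalization yields the lower bound $c\,\lambda^{\Theta(D)}$ with $c=c(\mathcal N)>0$ independent of $n$, and therefore $\tfrac12\mathbb E_{\mathcal B}\|\mathcal C(\Delta)\|_1\geq e^{-\Gamma D}$ with $\Gamma=\Theta(|\log\lambda|)$. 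The main obstacle is extending this to nonunital $\mathcal N$: then $M$ is no longer triangular, the direction $(1,0)$ is not $M$-invariant, and one must verify both that the decomposition of $(-1/15,\,4/15)$ into eigenvectors of $M$ has a component with eigenvalue bounded away from zero and that the identity-pair contribution still collapses to an $n$-independent constant. The swap-symmetry of $\mathcal N$ and explicit bounds on the spectrum of $M$ in terms of $\alpha$ and $\beta$ are the levers one has to exploit, and this is where I expect most of the technical work to reside.
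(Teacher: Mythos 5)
Your reduction to the second moment $\frac14\mathbb{E}_{\mathcal B}\bigl[\Tr(Z_i\,\mathcal C(\Delta))^2\bigr]$ is sound, and the rewriting as a two-replica $\{I,\mathbb S\}$ spin model is the right kind of object to look at, but the argument stops short precisely at the two places you flag. The first gap is the assertion that every summand of the partition function is nonnegative, so that keeping only the $\mathbb S$-worldline gives a lower bound. The Haar selection rule $\delta_{P,Q}$ guarantees the \emph{total} second moment is a nonnegative square; it does not guarantee that the individual weights of spin configurations are nonnegative once the noise $\mathcal N^{\otimes 2}$ is composed into the transfer matrix $M$, because $M$ acts on the non-orthogonal basis $\{I,\mathbb S\}$ and need not have nonnegative entries in it. For Haar circuits without noise there is a known gauge making the two-replica process stochastic, but you neither invoke it nor explain why it survives arbitrary CPTP noise. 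The second gap is the one you yourself call ``the main obstacle'': the extension to nonunital $\mathcal N$, which you leave open. That is fatal for the statement at hand, since the theorem is deliberately stated for general (in particular nonunital) noise — the unital case is both easier and not what the paper is trying to establish — so a proof that only covers unital channels proves a strictly weaker statement.

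For comparison, the paper avoids both issues by working directly with the marginal Hilbert--Schmidt distances $S_{G,k}=\tfrac12\mathbb{E}_{\mathcal B}\|\Tr_{F\setminus G}(\rho_k-\sigma_k)\|_2^2$ for two-qubit subsets $G=\{i,j\}$. It derives an \emph{exact} two-term linear recurrence (Eqs.~\eqref{lem_step_statement} and~\eqref{eq:cross_gate_expr}) whose coefficients depend on the Kraus operators of $\mathcal N$ with no unitality assumption, and then obtains a lower bound by dropping $S_{G\cup\Omega,k-1}$, which is manifestly nonnegative because it is itself a squared Hilbert--Schmidt norm and not a signed partition-function weight (see Eqs.~\eqref{eq:option1_99s} and~\eqref{eq:option2_99s}). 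Following one pair through the brickwork — the rigorous analogue of your $\mathbb S$-worldline — and applying $T_{G,k}\geq S_{G,k}$ from Eq.~\eqref{TS_ineq} closes the proof. If you want to complete your version, the missing ingredient is a positivity (or at least sign-controlled) lemma for the noisy two-replica transfer matrix in a suitable basis, valid for arbitrary swap-symmetric two-qubit noise; the paper's recourse to manifestly nonnegative HS distances is exactly a device that makes this lemma unnecessary.
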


The proof of this theorem can be found in Section~\ref{sec:log_depth_bound}. This result relies on the inherent locality of the gates and does not apply to more general, non-local quantum gates. For example, consider an all-to-all architecture where the system is subjected to a single all-to-all Haar-random unitary gate per layer. For such a circuit, we demonstrate the following result.
\begin{thrm}\label{thrm:all_to_all_gates}
Consider an $n$-qubit all-to-all random circuit of depth $D > 1$ subject to noise generated by a replacement channel with the rate parameter $0 < \gamma \leq 1$. Then
\be
\frac{1}{2} \mathbb{E} \|\mathcal{C}(\rho - \sigma)\|_1 \leq O\left(2^{n/2} \left(1 - \frac{\gamma}{2}\right)^{n(D - 1)/2}\right),
\ee
where $\mathcal{C}$ denotes the channel corresponding to the noisy random circuit, and $\mathbb{E}$ denotes the expectation over the gate unitaries.
\end{thrm}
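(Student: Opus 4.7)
The plan is to reduce the trace-norm average to a Hilbert--Schmidt computation, where Haar moments are tractable via the Weingarten formula, and then to pay a single factor of $\sqrt{d}$ (with $d=2^n$) to convert back. Write $\Delta_k$ for the operator after $k$ of the $D$ layers, so $\Delta_0=\rho-\sigma$ and $\Delta_k=\mathcal{N}^{\otimes n}(U_k\Delta_{k-1}U_k^\dagger)$. Using $\|A\|_1\le\sqrt{d}\,\|A\|_2$ together with Jensen's inequality,
\[
\mathbb{E}\|\Delta_D\|_1\;\le\;\sqrt{d\,\mathbb{E}\|\Delta_D\|_2^{\,2}}.
\]
The core task is thus to establish an iterated Hilbert--Schmidt contraction $\mathbb{E}\|\Delta_D\|_2^{\,2}\le 2(1-\gamma/2)^{nD}$, which is slightly stronger than what the claim demands.

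For the per-layer contraction I fix $\Delta_{k-1}$ (which is traceless, since $\mathcal{N}$ preserves trace and $\Tr(\rho-\sigma)=0$) and invoke the Haar second-moment identity $\mathbb{E}_{U_k}[U_k\Delta_{k-1}U_k^\dagger\otimes U_k\Delta_{k-1}U_k^\dagger]=\tfrac{\Tr(\Delta_{k-1}^{\,2})}{d^2-1}(F-d^{-1}I)$, with $F$ the swap across two copies of the $n$-qubit Hilbert space. This gives
\[
\mathbb{E}_{U_k}\|\Delta_k\|_2^{\,2}\;=\;\frac{\Tr(\Delta_{k-1}^{\,2})}{d^2-1}\,\Tr\!\bigl[(\mathcal{N}^{\otimes n}\otimes\mathcal{N}^{\otimes n})(F-d^{-1}I)\,F\bigr].
\]
Since $F=\bigotimes_i F_i$ the trace factorizes qubit-wise. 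Expanding $F_i=\tfrac12\sum_{P\in\{I,X,Y,Z\}}P\otimes P$ and using $\mathcal{N}(I)=I+\gamma Z$ and $\mathcal{N}(P)=(1-\gamma)P$ for non-identity Pauli $P$, a direct computation yields $\Tr[(\mathcal{N}\otimes\mathcal{N})(F_i)F_i]=4(1-\tfrac{3\gamma}{2}+\gamma^2)$ and $\Tr[(\mathcal{N}\otimes\mathcal{N})(I\otimes I)F_i]=2(1+\gamma^2)$, so the $n$-qubit trace equals $4^n(1-3\gamma/2+\gamma^2)^n-(1+\gamma^2)^n$. An elementary manipulation using $1-3\gamma/2+\gamma^2\le 1-\gamma/2$ for $\gamma\in[0,1]$ (which follows from $\gamma^2\le\gamma$) together with $1+\gamma^2>1-\gamma/2$ then shows that the overall ratio is bounded by $(1-\gamma/2)^n$, yielding $\mathbb{E}\|\Delta_k\|_2^{\,2}\le (1-\gamma/2)^n\,\mathbb{E}\|\Delta_{k-1}\|_2^{\,2}$.

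Iterating $D$ times starting from $\|\Delta_0\|_2^{\,2}=\Tr[(\rho-\sigma)^2]=2$ produces $\mathbb{E}\|\Delta_D\|_2^{\,2}\le 2(1-\gamma/2)^{nD}$, and substituting back into the first display gives
\[
\tfrac12\mathbb{E}\|\Delta_D\|_1\;\le\;\tfrac{1}{\sqrt 2}\,2^{n/2}(1-\gamma/2)^{nD/2}\;=\;O\!\bigl(2^{n/2}(1-\gamma/2)^{n(D-1)/2}\bigr),
\]
using $(1-\gamma/2)^{nD/2}\le(1-\gamma/2)^{n(D-1)/2}$. The hard part will be the per-qubit Pauli calculation and the accompanying algebraic step that replaces the raw per-layer factor $(1-3\gamma/2+\gamma^2)^n$ by the cleaner $(1-\gamma/2)^n$; the decoupling of layers in the Haar second-moment recursion and the tensor-product factorization of the noise channel make the remaining bookkeeping routine.
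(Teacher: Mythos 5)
Your proof is correct and takes a genuinely simpler route than the paper's. The paper conjugates $X_D^2$ by the inverse noiseless circuit, uses operator concavity of $\Tr\sqrt{\cdot}$, and then evaluates the expectation $\mathbb{E}_k\,\mathcal{U}_k^\dag[(\mathcal N^{\otimes n}\mathcal U_k[X])^2]$ as an operator, which requires the third Haar moment (Weingarten coefficients $\alpha_n,\gamma_n,\omega_n,\delta_n$ worked out in the appendix); this yields a $2\times2$ transfer matrix whose entries are then analyzed asymptotically. You instead pay a single $\|A\|_1\le\sqrt{d}\,\|A\|_2$ conversion up front and then only need the Haar \emph{second} moment to iterate the scalar quantity $\mathbb{E}\|\Delta_k\|_2^2$, with an explicit per-qubit Pauli computation that produces the closed-form contraction factor $\big[4^n(1-3\gamma/2+\gamma^2)^n-(1+\gamma^2)^n\big]/(4^n-1)$ and then the clean bound $(1-\gamma/2)^n$ via $\gamma^2\le\gamma$ and $1+\gamma^2\ge1-\gamma/2$. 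Your intermediate bound is slightly weaker ($\sqrt{d\Tr M}$ rather than $\Tr\sqrt{M}$), but the final asymptotic $O(2^{n/2}(1-\gamma/2)^{n(D-1)/2})$ matches; indeed you obtain the stronger $nD/2$ exponent because the iteration starts decaying at the first layer rather than losing a factor in the geometric-series formula the paper gets for its $B$ coefficient. One small caveat: you fixed the replacement target to $|0\>\<0|$ so $\mathcal N(I)=I+\gamma Z$, while Definition~\ref{dfn:rep_channel} allows $\sigma^*=\tfrac12(I+\eta Z)$ for any purity $\eta\in[0,1]$. Repeating your per-qubit trace with $\mathcal N(I)=I+\gamma\eta Z$ gives $\Tr[(\mathcal N\otimes\mathcal N)(F_i)F_i]=4-6\gamma+3\gamma^2+\gamma^2\eta^2$ and $\Tr[(\mathcal N\otimes\mathcal N)(I\otimes I)F_i]=2(1+\gamma^2\eta^2)$; since $\gamma(3+\eta^2)/4\le1$ and $1+\gamma^2\eta^2\ge1-\gamma/2$ the same $(1-\gamma/2)^n$ contraction follows, so the argument covers the general case with a one-line modification.
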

The proof of this theorem can be found in Section~\ref{sec:const_depth_rc}. This result demonstrates that for all-to-all circuits, the output depends only on the last \textit{constant} number of layers, independent of the noise model or its strength. This result exhibits a tighter scaling than existing bounds for nonunital noise in local circuits \cite{mele2024noise}, although it is derived specifically for the replacement channel. We show that this result can also be extended to the general amplitude damping.

\subsection{Discussion}

In this work, we proved that quantum circuits subjected to nonunital noise are as hard to simulate classically as noiseless dynamics. This result is based on a construction that allows implementing RESET operations using ancilla qubits while maintaining coherence in the remaining qubits. Our results go beyond those in \cite{ben2013quantum}, which required either all-to-all connectivity or a polynomial overhead for local circuits. 
Moreover, our results can be extended to continuous-time dynamics using techniques developed in~\cite{trivedi2022transitions}. 

Our results have several implications. First, the claim that noisy quantum computers are only useful for logarithmic depth \cite{preskill2018quantum} may be misleading, as it relies on the intuition derived from depolarizing and other types of unital noise. One can envision a broad class of circuits in which nonunital dissipation processes are integrated into the design to provide a near-term quantum advantage.
However, it remains unclear whether these circuits can be efficiently learned.

Another implication is the new evidence suggesting that a universal classical algorithm is unlikely to simulate natural dissipative systems interacting with their environment (e.g., a  thermal bath). This underscores the importance of quantum computing for studying states like Gibbs states \cite{chen2023efficient}, which characterize thermal equilibrium in many-body physics.

It is worth noting that the noise threshold obtained in our work using a concatenated error correction scheme is quite low. Thus, an open question remains as to whether it is possible to construct circuits with higher tolerance to nonunital noise. Nonetheless, our result reveals an intriguing behavior: even noise that is $\eta$-close to depolarizing noise allows fault-tolerant quantum computation with local circuits without measurements and feedback, provided $\eta$ is nonzero.

\setcounter{thrm}{0}

\section{Worst-case complexity}

In this section, we present a step-by-step procedure to construct a local \textit{noisy} unitary circuit capable of simulating arbitrary local \textit{noiseless} unitary circuits with polylogarithmic overhead, assuming the noise is strictly nonunital. First, we describe a scheme for a quantum refrigerator, a subcircuit that takes $R$ qubits in a fully mixed state and returns one of them in an \textit{almost} pure state using a sequence of algorithmic cooling procedures \cite{fernandez2004algorithmic}. Using this construction, we show that noisy unitary circuits can simulate other noisy unitary circuits with added RESETs. Finally, we connect the output of such RESET-enabled unitary circuits to the output of noiseless circuits using the threshold theorem \cite{aharonov1997fault,gottesman2000fault}.

\subsection{Noise-enabled RESETs}
\label{sec:reset_explained}

We define a noise-enabled RESET operation as an algorithm that resets a target qubit to the $|0\>$ state using noisy unitary operations only. Such an algorithm operates in three stages: (1) a settling stage, where a set of ancilla qubits is randomized into a maximally mixed state and then exposed to nonunital noise for some depth $T$; (2) a cooling stage, which takes $R$ qubits and returns one qubit in a (nearly) pure state by pushing the entropy to the remaining ancilla qubits; and (3) a stage that swaps the target qubit with the output from the compression stage. Below, we begin our discussion with the settling process in (1), followed by the quantum compressor in (2).

\subsubsection{Settling process}
\label{sec:cooling_process}

We define a $d$-dimensional local noisy quantum circuit generated by the map $\mathcal{N}$ if it can be represented as a sequence of $m$-qubits unitary operations applied between the nodes of a $d$-dimensional lattice, each followed by the product single-qubit CPTP map $\mathcal{N}^{\otimes m}$ acting on the same $m$ qubits. The parameter $\kappa = \|\mathcal{N} - \mathcal{I}\|_{\diamond}$ characterizes the strength of the noise, where $\mathcal{I}$ denotes the single-qubit identity channel. Additionally, we assume that the noise channel $\mathcal{N}$ has a fixed point, denoted by $\sigma^*$, with a fixed-point purity defined as $\eta := \sqrt{2\Tr[\sigma^*]^2 - 1} > 0$.

\begin{dfn}\label{def:diag_chan} Consider a channel $\mathcal N$ with unique fixed point $\sigma^*$, then we define its diagonalized form as 
\begin{align}\label{eq:N'}
    \mathcal N'= \mathcal U\mathcal N \mathcal U^\dag ~,
\end{align} 
where $\mathcal U$ is a unitary that diagonalizes $\sigma^*$, so that the diagonal state $\tilde \sigma^* = \mathcal U(\sigma^*)$ is the fixed point of $\mathcal N'$.
\end{dfn}

Note that $\tilde \sigma^*$ has the same purity as the state $\sigma^*$.  
It is worth noting that channels in a diagonalized form do not necessarily yield a diagonal density matrix when acting on a diagonal input state. For instance, consider the channel $\mathcal{N}' = \sum_{i=1,2} K_i \rho K_i^\dagger$, where $K_1 = \frac{1}{\sqrt{2}}(|0\rangle + |1\rangle)\langle 0|$ and $K_2 = |1\rangle \langle 1|$. This channel is in a diagonal form with a unique fixed point $\sigma^* = |1\rangle \langle 1|$. However, when it acts on an input state such as $|0\>\<0|$, it generates a superposition in the computational basis.

The performance of the settling process can be described in terms of  \textit{polarizations} of the input state $z_{\rm in}$ and the output state $z_{\rm out}$, defined for certain input state $\rho_{\rm in}$ and the output state $\rho_{\rm out}$ as
\be\label{eqs:inout-sdc0}
z_{\rm in} := \Tr (Z\rho_{\rm in}), \qquad z_{\rm out} := \Tr (Z\rho_{\rm out}).
\ee
The diagonalized form of a nonunital channel with a unique fixed point always results in the convergence of $ z_{\rm out} $ to the value $ \eta $, as stated in the following lemma.

\begin{lem}[Contractivity of polarizations] \label{def:contr_chan} 
Consider a nonunital noise channel $\mathcal N$ with a unique fixed point of purity $\eta > 0$, as defined in Eq.~\eqref{eq:eta}, and a diagonal density matrix $\rho_{\rm in}$. Let $\rho_{\rm out} := \mathcal N'(\rho_{\rm in})$, where $\mathcal N'$ is the diagonalized form of the channel $\mathcal N$, as defined in Def.~\ref{def:diag_chan}. Then
\be\label{eq:monotonical_contr}
z_{\rm out} = \eta + \chi(z_{\rm in} - \eta),
\ee
where $\chi < 1$.
\end{lem}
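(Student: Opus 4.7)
The plan is to reduce the statement to a simple one-parameter linear analysis and then extract strict contraction from the uniqueness of the fixed point. Since any diagonal single-qubit density matrix is parameterized by its polarization, $\rho_{\rm in}=\tfrac{1}{2}(I+z_{\rm in}Z)$, and $\mathcal N'$ is a linear (CPTP) map, the $Z$-expectation of the output must be an affine function of $z_{\rm in}$, i.e.\ $z_{\rm out}=\alpha+\beta z_{\rm in}$ for some constants $\alpha,\beta$ depending only on $\mathcal N'$. The output need not be diagonal, but this does not matter — we only care about $\Tr(Z\rho_{\rm out})$.

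Next I would pin down the constants using the fixed point. Because $\mathcal N'$ is obtained from $\mathcal N$ by unitary conjugation, it has a unique fixed point $\tilde\sigma^*=\mathcal U(\sigma^*)$, which is diagonal by construction and has polarization $\eta$. Applying $z_{\rm out}=\alpha+\beta z_{\rm in}$ to the fixed point gives $\eta=\alpha+\beta\eta$, so $\alpha=\eta(1-\beta)$, and the relation rearranges exactly to $z_{\rm out}-\eta=\beta(z_{\rm in}-\eta)$. Setting $\chi:=\beta$ recovers the stated form.

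The remaining and main task is to establish the strict bound $\chi<1$. The weak bound $|\chi|\le 1$ is immediate from non-expansiveness of CPTP maps under the trace norm: for diagonal $\rho_{\rm in}$ and diagonal $\tilde\sigma^*$ one has $\|\rho_{\rm in}-\tilde\sigma^*\|_1=|z_{\rm in}-\eta|$, while for any $2\times 2$ Hermitian matrix the trace norm is at least the absolute value of the difference of diagonals, so $\|\mathcal N'(\rho_{\rm in}-\tilde\sigma^*)\|_1\geq |z_{\rm out}-\eta|=|\chi||z_{\rm in}-\eta|$. Hence $|\chi|\le 1$.

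To exclude $\chi=1$, suppose for contradiction that $\chi=1$. Taking $z_{\rm in}=1$ forces the output to have diagonal $(1,0)$; any $2\times 2$ PSD matrix with this diagonal must have vanishing off-diagonals (the determinant $-|b|^2$ must be nonnegative), so $\mathcal N'(|0\rangle\langle 0|)=|0\rangle\langle 0|$. The same argument at $z_{\rm in}=-1$ gives $\mathcal N'(|1\rangle\langle 1|)=|1\rangle\langle 1|$. Thus both computational basis states are fixed points of $\mathcal N'$; since at least one of them differs from $\tilde\sigma^*$ (whether $\eta<1$ or $\eta=1$), this contradicts uniqueness of the fixed point. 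Therefore $\chi<1$, completing the proof. The subtlety worth flagging is exactly this last step: a priori non-expansiveness only delivers $|\chi|\le 1$, and strictness relies crucially on the hypothesis that the fixed point is unique, which is what the nonunitality and the framework of Definition~\ref{def:contracting} provide.
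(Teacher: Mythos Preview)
Your proof is correct and follows the same overall plan as the paper: parameterize the diagonal input as $\rho_{\rm in}=\tfrac12(I+z_{\rm in}Z)$, use linearity of $\mathcal N'$ together with the fixed-point identity to obtain the affine relation $z_{\rm out}=\eta+\chi(z_{\rm in}-\eta)$, and then argue strictness. The paper makes the coefficient explicit, $\chi=\tfrac12\Tr(Z\,\mathcal N'(Z))$, whereas you leave it abstract; this is immaterial for the lemma itself, though the explicit form is used later in the paper (Lemma~\ref{lem:settling}).

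The genuine difference is in how strictness $\chi<1$ is established. The paper argues via the Bloch-vector contraction $\sum_i n_i^2\le 1$ that $\chi=1$ forces $\mathcal N'(Z)=Z$, whence $\mathcal N'(I)=I$, contradicting nonunitality. You instead plug in $z_{\rm in}=\pm1$ and use positivity of the output to conclude that $|0\rangle\langle0|$ and $|1\rangle\langle1|$ are both fixed points, contradicting uniqueness. Both routes are valid under the stated hypotheses; yours is slightly more elementary (it avoids the Pauli-decomposition bound and works directly with the uniqueness assumption), while the paper's version yields the explicit expression for $\chi$ along the way.
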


\begin{proof}
We begin with the explicit form of the state $\rho_{\rm in}$ expressed in terms of the polarization $z_{\rm in}$. Since $\rho_{\rm in} = \hat{\mathcal D}(\rho_{\rm in})$, it must have the form
\be
\rho_{\rm in} = \frac{1}{2}\Bigl(I + z_{\rm in} Z\Bigr).
\ee
For this input state, the output of the quantum channel is
\be\label{eq:a-qwduic}
\rho_{\rm out} := \mathcal N'(\rho_{\rm in}) = \frac{1}{2} \Bigl(\mathcal N'(I) + z_{\rm in} \mathcal N'(Z)\Bigr).
\ee
Next, we consider $\rho^*$ to be the fixed point of the channel $\mathcal N'$, which has purity $\eta$ and is diagonal. This allows us to express it as
\be
\mathcal N'(\rho^*) = \rho^* = \frac{1}{2}\Bigl(I + \eta Z\Bigr).
\ee
Using the linearity of quantum maps, and by solving for $\mathcal N'\left(\frac{1}{2}\Bigl(I + \eta Z\Bigr)\right) = \frac{1}{2}\Bigl(I + \eta Z\Bigr)$, we derive the following expression: 
\be\label{eq:NI-tre123}
\mathcal N'(I)  = I + \eta Z - \eta \mathcal N'(Z).
\ee
Substituting $\mathcal N'(I)$ from this relation into Eq.~\eqref{eq:a-qwduic}, we obtain
\be
\rho_{\rm out} = \frac{1}{2} \Bigl(I + \eta Z + (z_{\rm in} - \eta)\mathcal N'(Z)\Bigr).
\ee
By applying the formula $z_{\rm out} = \Tr(\rho_{\rm out} Z)$, we get the expression
\be\label{eqs:chi_def}
z_{\rm out} = \eta + \chi (z_{\rm in} - \eta), \qquad \chi := \frac{1}{2} \Tr(Z \mathcal N'(Z)).
\ee
The final step is to show that $\chi < 1$. To prove this, assume the contrary, i.e., $\chi = 1$, and demonstrate that it leads to a contradiction. Since $\mathcal N'$ is a CPTP map, we can express
\be
\mathcal N'(Z) = \sum_{P_i \in \{X,Y,Z\}} n_i P_i, \qquad n_i := \frac{1}{2} \Tr[P_i \mathcal N'(Z)], \qquad \sum_i n_i^2 \leq 1.
\ee
If $\chi \equiv n_3 = 1$, it implies that $\mathcal N'(Z) = Z$, and by Eq.~\eqref{eq:NI-tre123}, this results in $\mathcal N'(I) = I$, implying that the noise channel $\mathcal N'$ violates Definition~\ref{def:nonunital_channel}. This contradiction shows that $\chi < 1$ must hold to satisfy the conditions of the lemma. Therefore, by combining $\chi<1$ condition with Eq.~\eqref{eqs:chi_def} we get the desired result in Eq.~\eqref{eq:monotonical_contr}. 

\end{proof}

The result of Lemma~\ref{def:contr_chan} demonstrates that it is possible to design a settling process that guarantees a specific level of polarization at the output. In the following lemma, we outline this process and derive an expression for the output polarization.

\begin{lem}[On output of settling process]\label{lem:settling}
There exists a depth-$T$ quantum circuit subject to general contracting nonunital noise $\mathcal N$ characterized by strength $\kappa$ and purity parameter $\eta$ that takes qubits in maximally mixed state and outputs them in a state with polarization
\be\label{eq:settling_process}
z_{\rm out} \geq \eta \Delta \kappa T + O(\eta \kappa^2 T^2).
\ee
where $\Delta$ is introduced in Defintion~\ref{def:contracting}.
\end{lem}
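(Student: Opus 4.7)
The plan is to exhibit a circuit whose effective action on a maximally mixed single-qubit input is $T$-fold iteration of the diagonalized channel $\mathcal N'$ from Definition~\ref{def:diag_chan}, and then extract the polarization lower bound from the Bloch-sphere contraction of $\mathcal N'$ toward its fixed point $\tilde\sigma^\ast = (I + \eta Z)/2$. Since $I/2$ is unitarily invariant, the only essential unitaries in the depth-$T$ circuit are those that implement the diagonalizing conjugation $\mathcal U\,\mathcal N\,\mathcal U^\dagger = \mathcal N'$ layer-by-layer; unitary conjugations commute through the trace norm, so the contraction rate of $\mathcal N'$ is identical to that of $\mathcal N$. This reduces the problem to analyzing $\rho_T := (\mathcal N')^T(I/2)$.

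Next, I would iterate the contraction condition of Definition~\ref{def:contracting} applied to the pair $(\rho_t, \tilde\sigma^\ast)$. The fixed point $\tilde\sigma^\ast$ is preserved by $\mathcal N'$, so
\be
\|\rho_T - \tilde\sigma^\ast\|_1 \;\leq\; (1-\Delta\kappa)^T \,\|I/2 - \tilde\sigma^\ast\|_1 \;=\; (1-\Delta\kappa)^T\,\eta,
\ee
using that $I/2 - \tilde\sigma^\ast = -\tfrac{\eta}{2}Z$ has trace norm $\eta$. Combining this with the Hölder-type bound $|z_T - \eta| = |\Tr[Z(\rho_T-\tilde\sigma^\ast)]| \leq \|\rho_T-\tilde\sigma^\ast\|_1$ gives
\be
z_T \;\geq\; \eta\bigl(1 - (1-\Delta\kappa)^T\bigr).
\ee

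The final step is to expand the right-hand side in the small-$\Delta\kappa T$ regime. Using $(1-\Delta\kappa)^T \leq e^{-\Delta\kappa T}$ followed by the elementary inequality $e^{-y}\leq 1 - y + y^2/2$ for $y\geq 0$, one gets $(1-\Delta\kappa)^T \leq 1 - \Delta\kappa T + (\Delta\kappa T)^2/2$, hence
\be
z_T \;\geq\; \eta\Delta\kappa T - \tfrac{1}{2}\eta\Delta^2\kappa^2 T^2 \;=\; \eta\Delta\kappa T + O(\eta\kappa^2 T^2),
\ee
which is precisely Eq.~\eqref{eq:settling_process}. The argument is essentially mechanical once the circuit has been identified with $(\mathcal N')^T$. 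The only mild subtlety is the bookkeeping of the diagonalizing rotation $\mathcal U$ relative to the noisy-layer convention (unitary followed by noise): concretely, I would absorb the inner $\mathcal U^\dagger\mathcal U$ pairs between adjacent layers into the identity and add a terminal $\mathcal U$ so that the physical polarization $\Tr(Z\rho_T)$ is measured in the eigenbasis of $\sigma^\ast$, in line with the convention already adopted in Lemma~\ref{def:contr_chan}. No delicate estimates are needed beyond iterating the single-step contraction, so I do not expect a substantive obstacle.
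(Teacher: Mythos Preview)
Your argument is correct and arrives at the same key inequality as the paper, namely $z_T \geq \eta\bigl(1-(1-\Delta\kappa)^T\bigr)$, followed by the same small-$\kappa T$ expansion. The route differs slightly: the paper tracks the scalar polarization via Lemma~\ref{def:contr_chan}, which requires diagonal inputs and therefore forces an explicit diagonalization unitary $\hat{\mathcal D}$ between successive applications of $\mathcal N'$; it then bounds the single-step factor $\chi=\tfrac12\Tr\bigl(Z\,\mathcal N'(Z)\bigr)\leq 1-\Delta\kappa$ using Definition~\ref{def:contracting}. You instead apply Definition~\ref{def:contracting} directly to the pair $(\rho_t,\tilde\sigma^\ast)$ in trace norm and extract $z_T$ via $|\Tr[Z(\rho_T-\tilde\sigma^\ast)]|\leq\|\rho_T-\tilde\sigma^\ast\|_1$, which sidesteps the need for intermediate diagonalization altogether. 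Your version is marginally cleaner for this lemma in isolation; the paper's formulation with $\hat{\mathcal D}$ and the $\chi$-recurrence is chosen for consistency with its later reuse during the swap phase of the noisy compressor (Lemma~\ref{lem:noisy_lemma}), where the same polarization recurrence is invoked again.
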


\begin{proof} 
Let $\rho_0$ denote a single-qubit maximally mixed state. We begin with 
$\rho^{\otimes N}_0$ on $N$ qubits. At the $k$th layer, the state of each qubit undergoes the single-qubit unitary operation $\mathcal U_k$ followed by the noise channel $\mathcal N$ according to the rules of the noisy architecture. We set the unitaries to be $\mathcal U_1 = \mathcal U^\dag \mathcal G_1$ and $\mathcal U_k = \mathcal U^\dag \mathcal G_k \mathcal U$ for $2 \leq k \leq T$, where $\mathcal G_k$ is a unitary transformation that is chosen to diagonalize its input. This step is possible because we know the input of this sequence of gates. 
Here, $\mathcal U$ is the unitary transformation that puts the channel in the diagonalized form (see Definition~\ref{def:diag_chan}). While the unitary $\mathcal G_1$ has no effect if we start from a maximally mixed state, we retain it for the sake of generality. 

This process can be put in a concise mathematical form using the following definition:

\begin{dfn} We define a diagonalization procedure $\rho' = \hat{\mathcal D}(\rho)$ as a state-specific unitary rotation such that $\rho'_{01}=\rho'_{10} = 0$ and $\rho'_{11}\geq \rho'_{00}$. 
\end{dfn}

Note that the diagonalization procedure requires knowledge of the input state and is therefore \textit{not} a linear map. This notation allows us to write the process in a form of the map $\mathcal E^{\otimes N}_T$, where the individual single-qubit maps are expressed as
\be
\mathcal E_T := \mathcal N \mathcal U_T\dots \mathcal N \mathcal U_1 =  \mathcal U^\dag ( \mathcal N' \hat{\mathcal D})^T.
\ee
The channel $\mathcal E_T$ can be de-facto transformed into \be\label{eqs:ia09dcew}
\mathcal E'_T :=  \mathcal U\mathcal E_T = ( \mathcal N' \hat{\mathcal D})^T 
\ee
by implementing an additional unitary $\mathcal U$ ``virtually'' by incorporating it into any gate following the settling procedure. This approach avoids introducing an additional noise channel $\mathcal N$, which would otherwise should be accounted for if the transformation was performed directly. 

Consider the ancilla qubits initially in a maximally mixed state ($\rho_0 =\frac 12 I_{2\times 2}$, $z_0 := \Tr(Z \rho_0) = 0$) and subject to a large number of layers $T \gg 1$, as described above. Consider $|\psi_0\>$ and $|\psi_1\>$ to be the eigenstates of the fixed point $\sigma^*$ of the channel $\mathcal N$. Then we can rewrite $\mathcal U^\dag(Z)=|\psi_0\>\<\psi_0|-|\psi_1\>\<\psi_1|$. Following the Definition~\ref{def:contracting}, the parameter in Eq.~\eqref{eqs:chi_def} satisfies
\be
\begin{split}
\chi := \frac{1}{2} \Tr(Z \mathcal N'(Z)) &= \frac{1}{2} \Tr(Z \mathcal U\mathcal N \mathcal U^\dag (Z)) = \frac{1}{2} \Tr\Bigl(\mathcal U^\dag(Z) \mathcal N \bigl(|\psi_0\>\<\psi_0|-|\psi_1\>\<\psi_1|\bigl)\Bigl)\\ 
&\leq \frac 12\| \mathcal N\bigl(|\psi_0\>\<\psi_0|-|\psi_1\>\<\psi_1|\bigl)\|_1
\leq \frac 12(1 - \kappa \Delta) \||\psi_0\>\<\psi_0|-|\psi_1\>\<\psi_1|\|_1 \leq 1 - \kappa \Delta,
\end{split}
\ee
where, as before, $\kappa = \|\mathcal N - \mathcal I\|_{\diamond}$. 
Next, by Lemma~\ref{def:contr_chan}, the state of the ancilla after $k$ cycles of the settling evolution described by the transformation $\mathcal E'_k$ in Eq.~\eqref{eqs:ia09dcew} has polarization $z_k = \Tr \left[ Z \mathcal E'_k(\rho_k)\right]$, which satisfies the inequality
\be
z_{k+1} \geq \eta + \chi(z_k - \eta).
\ee
This inequality implies that, by solving the recurrence condition for $z_0 = 0$, the polarization after $T$ cycles satisfies
\be
z_T \geq \eta \bigl(1 - \chi^{T}\bigr) \geq \eta \Bigl(1 - (1 - \kappa \Delta)^{T}\Bigr).
\ee
Assuming that the product $\kappa T$ is small, the state of the ancilla after $T$ settling steps is given by
\be\label{eq:z_in_setimate}
z_T = \eta \Delta \kappa T + O(\eta \kappa^2 T^2)~,
\ee
where we used the fact that the diagonalization procedure $\hat {\mathcal D}$ does not decrease the polarization.
\end{proof}

Thus, the settling procedure yields ancilla qubits with a small polarization on the order of $\sim \eta \kappa T$. The next step is to design a circuit that uses these ancilla qubits to produce a single ancilla qubit with higher purity.

\subsubsection{Quantum compressor}
\label{sec:QCp_details}

The compression stage can be realized using a circuit referred to below as a quantum compressor (QCp). In the ideal case, without the presence of noise, this circuit implements a unitary transformation $U_C$ such that when applied to a product of identical single-qubit states $ \rho_{\rm in} $, it operates as
\begin{equation}
\Tr_A \Bigl(U_{C}  
\rho^{\otimes N_{\rm in}}_{\rm in}U_C^\dag\Bigl) = \rho^{\otimes N_{\rm out}}_{\rm out}
,
\end{equation}
where $N_{\rm in}$ is the number of input qubits, $N_{\rm out} < N_{\rm in}$ is the number of output qubits, and $\Tr_A$ denotes the partial trace over the $N_{\rm in} - N_{\rm out}$ ancilla qubits. 

We consider a particular type of QCp characterized by the relation
\be\label{eq:compression_relation}
z_{\rm out} =  y z_{\rm in} + O(z_{\rm in}^2),
\ee
where $z_{\rm in}$ and $z_{\rm out}$ are input and output polarization defined in Eq.~\eqref{eqs:inout-sdc0}, $y > 1$ is the gain parameter, indicating that the QCp outputs a state of higher purity. A key parameter is the ratio of the number of input qubits $N_{\rm in}$ and the output qubits $N_{\rm out}$, denoted as
\be\label{eq:R}
R := \frac{N_{\rm in}}{N_{\rm out}}.
\ee
Based on entropy considerations, this ratio must always satisfy $R \geq y^2$ \cite{fernandez2004algorithmic} and it generally depends on the input and output polarizations. With fixed polarization values, the objective is to design a QCp that minimizes $R$.

For example, a QCp for three input qubits can be realized by implementing a CNOT operation on qubit 2 controlled by qubit 3 and then CSWAP on qubits 1 and 3 controlled by qubit 2 (here, the qubits 1 and 3 are swapped of value of qubit 2 is $|0\>$) \cite{fernandez2004algorithmic}. This procedure is described by the unitary
\be
U_C = \begin{pmatrix}
1 & 0 & 0 & 0 & 0 & 0 & 0 & 0 \\
0 & 0 & 0 & 0 & 1 & 0 & 0 & 0 \\
0 & 0 & 1 & 0 & 0 & 0 & 0 & 0 \\
0 & 1 & 0 & 0 & 0 & 0 & 0 & 0 \\
0 & 0 & 0 & 1 & 0 & 0 & 0 & 0 \\
0 & 0 & 0 & 0 & 0 & 0 & 0 & 1 \\
0 & 0 & 0 & 0 & 0 & 0 & 1 & 0 \\
0 & 0 & 0 & 0 & 0 & 1 & 0 & 0
\end{pmatrix}
\ee
For this three-qubit QCp, if $\rho_{\rm in}$ is chosen in the diagonal (computational) basis, the compression relation in Eq.~\eqref{eq:compression_relation} takes the form
\be\label{eq:3qubit_compr_relation}
z_{\rm out} = \frac 32 z_{\rm in} - \frac 12 z_{\rm in}^3.
\ee
The realization of a larger QCp can be achieved by systematically combining several smaller units, such as the 3-qubit processor described above. To illustrate this, consider a compressor consisting of $ k $ successive cycles. During each cycle, the qubits output from the previous cycle are organized into groups of three. Each of these trios is then compressed by the elementary three-qubit compressor and passed on to the next cycle. The input/output ratio, $R$ as in Eq.~\eqref{eq:R} of such a quantum compressor can be expressed as $ R = 3^k $. We call such an algorithm a \textit{compound quantum compressor} (CQC).

\begin{figure}[t!]
    \centering
    \includegraphics[width=0.9\textwidth]{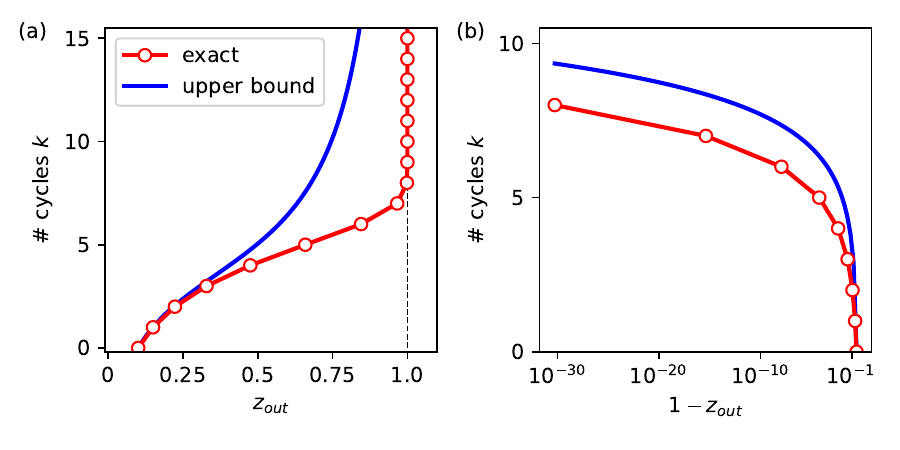}
    \caption{\textbf{Bounds on the number of cycles.} (a) A comparison between the exact number of algorithmic compression cycles (represented by red dots) required to achieve the target polarization $z_{\rm out}$ starting from the initial value $z_{\rm in} = 0.1$ and the upper bound given in Eq.~\eqref{eq:k_small_eta} (dashed line). The bound shows a strong agreement for small values of $z_{\rm out}$. (b) A similar comparison for the exact number of cycles required to achieve $p_{\rm out} = 1 - z_{\rm out}$ starting from the initial value  $p_{\rm in} = 0.3$, alongside the upper bound from Eq.~\eqref{eq:k_small_p}. This bound effectively captures the rapid double-exponential convergence of $p_{\rm out}$ for $p_{\rm in}>1/3$ as the number of cycles increases.}
    \label{fig:noiseless_qd}
\end{figure}

\subsubsection{Compound Quantum Processor (CQC) depth}

The minimum depth required for the CQC to operate depends on the available circuit architecture. Assuming all-to-all connectivity, the depth of such a global CQC, denoted as $N_{\infty}$, is equal to the total depth of all intermediate subcycles. This leads to the expression
\be
N_\infty = N_C \log_3 R,
\ee
where $N_C$ is the depth for one cycle and then we repeat only for $\log_3 R$ cycles. For a spatially local lattice of dimension $d<\infty$, the depth must account for the swaps between the input and output positions of the qubits. We further assume that the qubits form a spatially local subset that fits into a $d$-dimensional cube of linear size $\Theta(R^{1/d})$. In this scenario, each qubit must undergo no more than $\propto R^{1/d}$ swaps to reach the input position for the next cycle. This gives the asymptotic depth of the spatially local CQC as
\be
N_d = N_\infty + \Theta(R^{1/d} \log_3 R) = \Theta(R^{1/d} \log_3 R).
\ee
Note that $R^{1/d}\log_3 R$ comes from the fact that at each step, we need roughly $R^{1/d}\cdot k$ swaps, and we replaced $k = \log_3 R$. 
This depth plays an important role, as the remaining qubits decohere while the compression takes place.

\subsubsection{Input/output ratio for ideal CQC's}\label{sec:noiseless-CQC}

Next, we focus on establishing a bound on $R$, given the polarization of the input state $z_{\rm in}$ and the target polarization of the output state $z_{\rm out}$. We first consider the case of noiseless CQC. In the following section, we extend this result to account for the presence of noise. Additionally, we only consider the compound compressors built from the 3-qubit QCp’s discussed earlier. Such noiseless CQC is characterized by the following Lemma:

\begin{lem}[Noiseless CQC]
There exists a constant $C > 0$, such that for any $0<z_{\rm in}< z_{\rm out}$, $\frac{2}{3} < z_{\rm out} < 1$, and $\delta>0$ there exists a noisy compressor that takes $R$ qubits with polarization $z_{\rm in}$ and returns one of the qubits with polarization $z_{\rm out}$, as defined in Eq.~\eqref{eqs:inout-sdc0}, that satisfies
\be\label{eqs:R_noiseless_sodi}
R \leq C \frac{\left(-\log(3[1 - z_{\rm out}])\right)^{\alpha}}{z_{\rm in}^{\mu_0+\delta}},
\ee
where $\alpha = \log_2 3 \approx 1.59$, $\mu_0 = \frac{\alpha}{\alpha - 1} \approx 2.71$.
\end{lem}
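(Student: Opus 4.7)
The plan is to build the CQC by recursively applying the elementary three-qubit QCp, whose polarization map is $f(z) = \tfrac{3}{2}z - \tfrac{1}{2}z^3$ from Eq.~\eqref{eq:3qubit_compr_relation}. A depth-$k$ CQC consumes $R = 3^k$ input qubits, starts at $z_0 = z_{\rm in}$, and iterates $z_{k+1} = f(z_k)$, so the problem reduces to bounding the smallest $k$ for which $z_k \geq z_{\rm out}$. Switching to $p := 1-z$ gives the equivalent recursion $p_{k+1} = \tfrac{1}{2}p_k^2(3-p_k) \leq \tfrac{3}{2}p_k^2$, which exposes two qualitatively different regimes: a low-polarization regime where the per-cycle gain $f(z)/z = \tfrac{3}{2}(1-z^2/3)$ tends to the optimum $3/2$, and a high-polarization regime (once $p < 2/3$, i.e.\ $z > 1/3$) where $\tfrac{3}{2}p$ is a squaring map and $p_k$ decays double-exponentially.

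I would split the trajectory at thresholds $z_{\rm in} \leq z_\epsilon \leq z_\star \leq z_{\rm out}$, with $z_\star = 1/2$ fixed and $z_\epsilon = \sqrt{3\epsilon}$ controlled by a tunable parameter $\epsilon = \epsilon(\delta)$. On $[z_{\rm in}, z_\epsilon]$ the per-cycle gain is at least $\lambda_\epsilon := \tfrac{3}{2}(1-\epsilon)$, so $k_a \leq \log(z_\epsilon/z_{\rm in})/\log \lambda_\epsilon$ and thus $3^{k_a} \leq (z_\epsilon/z_{\rm in})^{\log 3/\log \lambda_\epsilon}$; since $\log 3/\log \lambda_\epsilon \to \mu_0$ as $\epsilon \to 0$, picking $\epsilon$ small enough (depending only on $\delta$) makes this exponent at most $\mu_0 + \delta$. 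On the intermediate segment $[z_\epsilon, z_\star]$ the gain is bounded below by the fixed constant $\tfrac{3}{2}(1-z_\star^2/3) = 11/8$, so only $k_b = O(\log(1/\epsilon))$ further cycles are needed, contributing an $\epsilon$-dependent multiplicative factor $3^{k_b}$ that does not depend on $z_{\rm in}$ or $z_{\rm out}$. On $[z_\star, z_{\rm out}]$ I would set $s_k := \tfrac{3}{2}p_k$, so that $p_{k+1} \leq \tfrac{3}{2}p_k^2$ becomes the clean squaring inequality $s_{k+1} \leq s_k^2$; starting from $s = 3/4$ this yields $s_{k_0+j} \leq (3/4)^{2^j}$, and enforcing $s \leq \tfrac{3}{2}(1-z_{\rm out})$ gives $2^{k_c} = O\bigl(-\log(3(1-z_{\rm out})/2)\bigr)$ and hence $3^{k_c} = O\bigl((-\log(3(1-z_{\rm out})/2))^{\alpha}\bigr)$ with $\alpha = \log_2 3$. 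For $z_{\rm out}$ bounded away from $2/3$, one has $-\log(3(1-z_{\rm out})/2) \leq C_0 \bigl(-\log(3(1-z_{\rm out}))\bigr)$ up to a universal constant, matching the bound in the lemma.

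Multiplying the three contributions gives $R = 3^{k_a+k_b+k_c}$ of the claimed form, with all $\epsilon$-dependent prefactors (including the $z_\epsilon$ factor from the first segment and $3^{k_b}$ from the intermediate segment) absorbed into $C = C(\delta)$. The hard part is the coordination between the two thresholds: driving the exponent in the first segment down to $\mu_0+\delta$ forces $\epsilon$, and thus $z_\epsilon$, to be small, which simultaneously inflates the intermediate-segment cost $3^{k_b}$; the argument works only because $3^{k_b}$ is independent of both $z_{\rm in}$ and $z_{\rm out}$, so it enters solely through the $\delta$-dependent constant. A secondary, minor issue is the corner $z_{\rm out}$ just above $2/3$, where the constant inside the logarithm becomes relevant, but in that range $k_c = O(1)$ and the overhead is again absorbed into $C$.
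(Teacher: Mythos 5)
Your proposal matches the paper's proof in structure and content: both split the trajectory into a low-polarization segment controlled by the per-cycle gain $f(z)/z$ with exponent driven to $\mu_0+\delta$ by shrinking the first threshold, a constant-cost intermediate segment, and a high-polarization segment where the depolarization $p=1-z$ obeys a squaring recursion yielding the double-exponential convergence and the $\bigl(-\log p_{\rm out}\bigr)^\alpha$ dependence, with all threshold-dependent overhead absorbed into $C(\delta)$. The only differences are cosmetic: you use the tighter $p_{k+1}\leq\tfrac{3}{2}p_k^2$ in place of the paper's looser $3p_k^2$ (the paper's stated exact relation $p_{\rm out}=p_{\rm in}^2(3-2p_{\rm in})$ is in fact a typo for $\tfrac{1}{2}p_{\rm in}^2(3-p_{\rm in})$, though the inequality it uses remains valid), which is what lets you start the squaring regime at $z_\star=1/2$ rather than $z_2>2/3$.
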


\begin{proof} Consider the CQC described in Section~\ref{sec:QCp_details}, which comprises $k$ cycles of individual 3-qubit QCp layers. Let the input and output polarizations of the QCps for the $m$th cycle be denoted as $z^{(m)}_{\rm in}$ and $z^{(m)}_{\rm out}$, respectively, where $1 \leq m \leq k$. Then, they are related by
\be\label{eqs:zm_rule_xcs}
z^{(m)}_{\rm out} = \frac 32 z^{(m)}_{\rm in} - \frac 12 \left(z^{(m)}_{\rm in}\right)^3,
\ee
which satisfies $z^{(m)}_{\rm out} \geq z^{(m)}_{\rm in}$, indicating that the QCp increases the polarization after each cycle. The output of the $m$-th cycle becomes the input for the $(m+1)$-th cycle, i.e., $z^{(m+1)}_{\rm in} = z^{(m)}_{\rm out}$. These two conditions together imply that $z^{(m+1)}_{\rm out} \geq z^{(m)}_{\rm out}$, leading to non-decreasing polarization across the cycles. Thus, the outputs of the $m$-th cycle can be expressed as
\be
z_{\rm out}^{(m)} =\left(\frac 32-\frac{(z^{(m)}_{\rm in})^2}{2}\right)z^{(m)}_{\rm in} \geq \left(\frac 32-\frac{z_{\rm out}^2}{2}\right)z^{(m)}_{\rm in},
\ee
where we used the property $z^{(m)}_{\rm in}\leq z^{(m)}_{\rm out}\leq z_{\rm out}:=z^{(k)}_{\rm out}$. Using this recurrent relation sequentially, starting with $z^{(1)}_{\rm in} \equiv z_{\rm in}$, the $k$-cycle CQC satisfies
\be\label{eq:inequalities_small_eta}
 z_{\rm out} \geq \left(\frac 32-\frac{z_{\rm out}^2}{2}\right)^k z_{\rm in}.
\ee
 From this expression, the minimum number of cycles $k$ needed to reach the target output polarization satisfies the inequality
\be\label{eq:k_small_eta}
k\leq \frac{\log (z_{\rm out}/z_{\rm in})}{\log (3/2-z_{\rm out}^2/2)}.
\ee
For reference, a comparison between this bound and the exact numerical solution is presented in Fig.~\ref{fig:noiseless_qd}a. This bound is tight if $z_{\text{in}}$ is small and $z_{\text{out}}$ is not too close to unity.

Using Eq.~\eqref{eq:k_small_eta}, we get the following upper bound on $R$
\be\label{eq:small_eta_R}
R = \frac{N_{\rm in}}{N_{\rm out}} = 3^{k} \leq \left(\frac{z_{\rm out}}{z_{\rm in}}\right)^{\mu_0+\delta},\qquad \mu_0 = \frac{\log 3}{\log (3/2)} \approx 2.7095,
\ee
where the correction in the exponent is
\be\label{eqs:delta_param}
\delta := \frac{\log3}{\log(3/2-z_{\rm out}^2/2)}-\frac{\log 3}{\log(3/2)} = O(z_{\rm out}^2).
\ee
As we have seen in Fig.~\ref{fig:noiseless_qd}a, in the scenario where $z_{\text{out}}$ approaches unity, this inequality becomes loose. To target the regime where $1-z_{\text{out}}$ is small, we first introduce depolarization  parameter, defined as
\be
p_{\rm in} := 1-z_{\rm in}, \qquad p_{\rm out} := 1-z_{\rm out}~.
\ee
Then we can rewrite Eq.~\eqref{eqs:zm_rule_xcs} as the update rule
\be\label{eq:noiseless_relations_p}
p^{(m)}_{\rm out} = \left(p^{(m)}_{\rm in}\right)^2(3-2p^{(m)}_{\rm in})\leq 3(p^{(m)}_{\rm in})^2.
\ee
Applying this inequality for the $m$th cycle of the compound compressor and using the fact that $p^{(m+1)}_{\rm in} = p^{(m)}_{\rm out}$, we get the recurrence relation
\be
p^{(m+1)}_{\rm in}\leq 3(p^{(m)}_{\rm in})^2.
\ee
The output of the $k$th cycle satisfies the inequality
\be\label{eq:k_small_p}
p_{\rm out} := p^{(k)}_{\rm out} \leq \frac 13(3p_{\rm in})^{2^k}.
\ee
The value of depolarization decreases monotonically, as long as $p^{(m)}_{\rm in} \leq p_{\rm in} < \frac{1}{3}$. Note that we only invoke this analysis in the third compressor as we describe below. Therefore, we did not require this condition separately in the Lemma statement. Therefore, the number of steps required to achieve the target value of depolarization is bounded as
\be
k \leq \log_2 \left(\frac{\log3\,p_{\rm out}}{\log3\,p_{\rm in}}\right). 
\ee
This bound is also illustrated in Fig.~\ref{fig:noiseless_qd}b, alongside the exact solution. As a result, we obtain the following bound on $R$:
\be\label{eq:small_p_R}
R = 3^{k} \leq \left(\frac{\log 3\,(1-z_{\rm out})}{\log 3\,(1-z_{\rm in})}\right)^\alpha,\qquad \alpha = \log_2 3\approx 1.585,
\ee
Note the logarithmic scaling with $p_{\rm out} = 1-z_{\rm out}$.

We can now combine the two results to derive a hybrid bound on $R$. We consider three compressor steps. The first compressor operates with a small input polarization $z_{\rm in}\ll 1$, producing a constant output polarization of $z_1$ with a compression ratio of $R_1$. The second compressor takes $z_1$ as an input and outputs polarization $z_2>2/3$ with ratio $R_2$, while the third compressor processes qubits with polarization $z_2$ and produces an output polarization of $z_{\rm out} \approx 1$ with ratio $R_3$. For $R_1$ and $R_2$ we use Eq.~\eqref{eq:small_eta_R} and for $R_3$ we use Eq.~\eqref{eq:small_p_R}. Note that $R_2$ is independent of both $z_{\rm in}$ and $z_{\rm out}$, and can be treated as a constant. Consequently, the resulting expression for the input/output ratio of the full compressor satisfies
\be\label{eq:generalization_QCp_performance_bef}
R = R_1R_2R_3 \leq \left(\frac{z_1}{z_{\rm in}}\right)^{\mu_0+\delta} \left(\frac{z_2}{z_{1}}\right)^{\mu_0+\delta'} \left(\frac{\log 3\,(1-z_{\rm out})}{\log 3\,(1-z_2)}\right)^\alpha \leq C \frac{\left(-\log 3(1-z_{\rm out}\right))^\alpha}{z_{\rm in}^{\mu_0+\delta}},
\ee

Here, $C = z_1^{\delta-\delta'}z_2^{\mu_0+\delta'}(-\log(3(1-z_2)))^{-\alpha}$, $\delta = O(z^2_1)$, and $\delta'  = O(z^2_2)$ are independent of both $z_{\rm in}$ and $z_{\rm out}$, and $z_2$ is a constant. By assigning $z_1$ an arbitrarily small constant value, we can achieve arbitrarily small $\delta$, see Eq.~\eqref{eqs:delta_param}, though with an increased value of $C$. This expression completes our proof.
\end{proof} 

The inequality in Eq.~\eqref{eqs:R_noiseless_sodi} highlights an important feature of CQC's: the input-output ratio scales polynomially with the input polarization, but only logarithmically with the output depolarization that constitutes the final RESET error. This logarithmic growth allows the resulting RESET error to be comparable to the noise error on the other qubits, i.e. $1 - z_{\rm out} \sim \kappa$, while allowing the parameter $\kappa R^{1/d}$ to remain small when $\kappa$ is arbitrarily small. As we will see, controlling the strength of this parameter is crucial to the construction of the noisy compressor.

\subsubsection{Input/output ratio for noisy CQC's}

Our next step is to extend the results from the previous section to the noisy CQC. As stated above, we assume that noise is modeled by single-qubit channels $\mathcal{N}$, which act after each two-qubit layer and satisfy the condition $\|\mathcal{N} - \mathcal{I}\|_{\diamond} = \kappa$,
where $\mathcal{I}$ is the single-qubit identity channel, and $\kappa > 0$ represents the noise strength. Recall that we consider $\sigma^*$ as a fixed point of the noise channel $\mathcal{N}$ with a fixed purity parameter $\eta := \sqrt{2\Tr[\sigma^*]^2 - 1} > 0$.

\medskip 

We start by establishing a noisy compression relation for a single noisy QCp. We then establish a bound on the input/output ratio $R$ for a noisy CQC's. 

\begin{lem}[Noisy compression for polarizations]\label{lem:compr_rel_z}
Consider a noisy quantum circuit (QCp) in Eq.~\eqref{eq:3qubit_compr_relation} composed with a constant number of unitary gates subject to noise modeled by a channel $\mathcal N$ with noise strength $\kappa$ and fixed-point purity $\eta$. There exist constants $a, b \geq 0$ such that
\be
z_{\rm out} \geq \left(\frac{3}{2} - a\kappa\right) z_{\rm in} - b\kappa\eta.
\ee
\end{lem}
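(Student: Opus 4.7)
The plan is to propagate polarization through the noisy three-qubit circuit in the Pauli/Bloch basis, separating the multiplicative contraction caused by $\mathcal{N}$ from its nonunital additive shift. The observation responsible for the factor $\eta$ in the lower bound is that while $\|\mathcal{N} - \mathcal{I}\|_{\diamond} = \kappa$ only forces the nonunital shift to be at most $O(\kappa)$ a priori, the fixed-point relation $\mathcal{N}(\sigma^*) = \sigma^*$ sharpens it to $O(\kappa\eta)$.

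First I would parameterize the channel in affine form: for $\rho = \frac{1}{2}(I + \vec{r}\cdot\vec{\sigma})$, write $\mathcal{N}(\rho) = \frac{1}{2}(I + (M\vec{r} + \vec{c})\cdot\vec{\sigma})$ with $M\in\mathbb{R}^{3\times 3}$ and $\vec{c}\in\mathbb{R}^{3}$. The bound $\|\mathcal{N}-\mathcal{I}\|_{\diamond}\leq\kappa$ implies $\|M-I\|=O(\kappa)$ and $|\vec{c}|=O(\kappa)$, and the fixed-point equation $(I-M)\vec{n}=\vec{c}$ with $|\vec{n}|=\eta$ then sharpens the shift bound to $|\vec{c}|=O(\kappa\eta)$.

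Second I would set the noiseless baseline using the Pauli expansion of $\rho_0^{\otimes 3}$ with $\rho_0 = \frac{1}{2}(I + z_{\rm in} Z)$: the ideal unitary $U_C$ maps this to a state whose reduced density matrix on the output qubit has polarization $\frac{3}{2}z_{\rm in} - \frac{1}{2}z_{\rm in}^{3}$, as recorded in Eq.~\eqref{eq:3qubit_compr_relation}. Next I would insert noise layers $\mathcal{N}^{\otimes 3}$ between the constantly many unitary layers making up $U_C$ and propagate. In the Pauli-string picture, each noise layer contracts the existing non-identity Pauli coefficients by $1 - O(\kappa)$ (via the tensor powers of $M$) and injects new coefficients of magnitude $O(\kappa\eta)$ from the identity component (via $\vec{c}$). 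Since unitaries act as norm-preserving orthogonal transformations on Pauli coefficients, they neither amplify these shifts nor the contractions, and with only $O(1)$ layers the composite effect on the output polarization is a factor $1 - O(\kappa)$ times the ideal value plus an additive correction of magnitude at most $O(\kappa\eta)$.

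Finally, extracting $\Tr(Z\rho_{\rm out})$ yields $z_{\rm out}\geq (1 - O(\kappa))(\tfrac{3}{2}z_{\rm in} - \tfrac{1}{2}z_{\rm in}^{3}) - O(\kappa\eta)$, and since $|\tfrac{1}{2}z_{\rm in}^{3}|\leq\tfrac{1}{2}|z_{\rm in}|$ this rewrites as $(3/2 - a\kappa)z_{\rm in} - b\kappa\eta$ for suitable constants $a,b\geq 0$ absorbing the $O(\kappa)$ prefactor and the cubic term. The main obstacle is making rigorous the claim that the unitary layers do not amplify the $O(\kappa\eta)$ shifts at the level of three-qubit Pauli strings: one must check that a perturbation of that magnitude injected at any intermediate stage, after propagation through the remaining fixed unitaries and noise layers, still contributes at most $O(\kappa\eta)$ to the output polarization. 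This is a finite bookkeeping exercise using only that $U_C$ is a fixed unitary of constant operator norm and that $O(1)$ noise layers compose multiplicatively.
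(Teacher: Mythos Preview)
Your approach is correct and essentially the same as the paper's: both extract the crucial $O(\kappa\eta)$ bound on the nonunital shift from the fixed-point relation---you via the affine Bloch identity $(I-M)\vec n=\vec c$ with $\|I-M\|=O(\kappa)$ and $|\vec n|=\eta$, the paper via the equivalent operator identity $\mathcal N(I)-I=\eta\bigl[\mathcal N(|\psi_1\rangle\langle\psi_1|)-|\psi_1\rangle\langle\psi_1|\bigr]-\eta\bigl[\mathcal N(|\psi_0\rangle\langle\psi_0|)-|\psi_0\rangle\langle\psi_0|\bigr]$ together with the diamond-norm bound on each bracket. The rest is the same finite-layer bookkeeping, done layer-by-layer by you and via a double expansion in $z_{\rm in}$ and $\kappa$ in the paper.

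One small slip in your last line: absorbing $-\tfrac12 z_{\rm in}^{3}$ using $|z_{\rm in}^{3}|\le|z_{\rm in}|$ would knock the leading coefficient down from $\tfrac32$ to $1$, which is not what the lemma claims. The paper's write-up in fact drops the cubic term without comment; the honest statement carries either an extra $-\tfrac12 z_{\rm in}^{3}$ or an additional $-O(z_{\rm in}^{2})$ in the coefficient, and this is exactly how it is used downstream (where $z_{\rm in}$ is small and the correction is folded into the $\tfrac32-a\kappa-bz_{\rm out}/c^{2}$ factor).
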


\begin{proof} Let $\rho$ denote an input state and let $Z_{\rm out}$ denote the Pauli Z operartor on output qubits. Let $\mathcal C$ denote the noisy circuit. Then, the output polarization can be expressed as
\be
z_{\rm out} = \Tr(Z_{\rm out} \mathcal C(\rho)).
\ee
Next, we decompose the initial state as
\be
\rho = \rho_0 + z_{\rm in} \delta\rho_1 + z^2_{\rm in}\delta\rho_2,
\ee
where $\rho_0$ is the maximally mixed state, $\delta\rho_1 = \frac 12 \sum_i Z_i$, and $z^2_{\rm in}\delta \rho_2 : = \rho_0-z_{\rm in}\delta \rho_1$ accounts for the rest. The noisy circuit channel is given by
\be
\mathcal C = \prod_{k=1}^{N_g} \left( \prod_{j \in G_k} \mathcal N_j \right)\mathcal U_k =  \mathcal C_0 + \kappa \delta \mathcal C_1 + \kappa^2 \delta \mathcal C_2,
\ee
where $\mathcal N_j = \mathcal I \otimes \dots \otimes (\mathcal N)_j \otimes \dots \otimes \mathcal I$ is the noise operator acting on the $j$-th qubit, and $N_g$ is the number of gates. The products $\prod_k$ are taken in numerical order. Here, $\mathcal C_0$ represents the noiseless circuit channel, and the correction channel is defined as
\be
\kappa \delta \mathcal C_1 := \sum_{k=1}^{N_g}\sum_{j \in G_k}\prod_{l = k+1}^{N_g}\mathcal U_l \delta \mathcal N_j \prod_{l = 1}^{k}\mathcal U_l, \qquad \delta \mathcal N_j := \mathcal N_j - \hat{\mathcal I},
\ee
with $G_k$ being the list of qubits participating in gate $k$, and $\kappa^2 \delta\mathcal C_2:= \mathcal C-\kappa \delta\mathcal C_1$ accounting for the remaining terms. We denote $\hat{\mathcal I}$ as the identity operator on all qubits in the circuit, distinguishing it from the single-qubit identity operator $\mathcal I$. Combining these two decompositions, we arrive at
\be
\begin{split}
z_{\rm out} = \Tr(Z_{\rm out} \mathcal C_0(\rho)) & + \kappa \Tr(Z_{\rm out} \delta \mathcal C_1(\rho_0)) + z_{\rm in} \kappa \Tr(Z_{\rm out} \delta \mathcal C_1(\delta \rho_1)) \\
& + \kappa z^2_{\rm in} \Tr(Z_{\rm out} \delta \mathcal C_1(\delta \rho_2)) + \kappa^2 \Tr(Z_{\rm out} \delta \mathcal C_2(\rho_0)) \\
& + z_{\rm in} \kappa^2 \Tr(Z_{\rm out} \delta \mathcal C_2(\delta \rho_1)) + z^2_{\rm in} \kappa^2 \Tr(Z_{\rm out} \delta \mathcal C_2(\delta \rho_2)),
\end{split}
\ee
Since $\mathcal C_0$ is a noiseless unitary channel, the first term simplifies to
\be
\Tr(Z_{\rm out} \mathcal C_0(\rho)) = \frac 32 z_{\rm in} - \frac 12 z^3_{\rm in},
\ee
as it does not depend on noise and should match the term from the noiseless circuit expansion, as in Eq.~\eqref{eq:3qubit_compr_relation}. The second term can be rewritten as
\be\label{eqs:proof1psw}
\kappa \Tr(Z_{\rm out} \delta \mathcal C_1(\rho_0)) = \sum_{k=1}^{N_g}\sum_{j \in G_k}\Tr \left( O_k(\mathcal N_j(\hat I) - \hat I) \right),
\ee
where $O_k = \left(\prod_{l = {k+1}}^{N_g} \mathcal U_l \right)^\dag (Z_{\rm out})$ is an operator with eigenvalues $\pm 1$, and $\hat I$ is the identity operator on all $m$ qubits (the hat is again used to distinguish it from the single-qubit operator). Next, we use the fact that the noisy channel $\mathcal N$ has a fixed point $\sigma^*$ with purity parameter $\eta$, implying
\be
\mathcal N(\sigma^*) = \sigma^*, \qquad \sigma^* = \frac 12 I + \frac \eta 2 \left( |\psi_0\>\<\psi_0| - |\psi_1\>\<\psi_1| \right),
\ee
where $|\psi_i\>$ are eigenstates of the operator $\sigma^*$ ordered such that the corresponding eignvalues are sorted in decreasing order. This leads us to
\be
\mathcal N(I) - I = -\left( \eta \mathcal N(|\psi_0\>\<\psi_0|) - \eta |\psi_0\>\<\psi_0| \right) + \eta \mathcal N(|\psi_1\>\<\psi_1|) - \eta |\psi_1\>\<\psi_1|,
\ee
and thus
\be
\begin{split}
\Tr \left( O_k (\mathcal N_j(\hat I) - \hat I) \right) & = \eta \Tr \left( -O_k (\mathcal N_j(\rho^0_j) - \rho^0_j) \right) + \eta \Tr \left( O_k (\mathcal N_j(\rho^1_j) - \rho^1_j) \right),
\end{split}
\ee
where $\rho_j^q := I \otimes \dots \otimes |\psi_q\>\<\psi_q|_j \otimes \dots \otimes I$. By the definition of the diamond norm, we have
\be
\forall O, \|O\| = 1: \qquad \Tr \left( O(\mathcal N_k(\rho^0_k) - \rho^0_k) \right) \geq -\|\mathcal N - \mathcal I\|_{\diamond} = -\kappa.
\ee
This leads to the statement
\be\label{eqs:proof2psw}
\kappa \Tr(Z_{\rm out} \delta \mathcal C_1(\rho_0)) \geq -2N_c \kappa \eta,
\ee
where $N_c$ is the total count of single-qubit noisy channels in the quantum circuit. Similarly to the eqution above, we can show that
\be
\kappa^2 \Tr(Z_{\rm out} \delta \mathcal C_2(\rho_0)) = O(\eta \kappa^2).
\ee
Finally, since the quantum circuit acts on a constant number of qubits and contains a constant number of gates, we find
\be
\begin{split}
& \kappa \Tr(Z_{\rm out} \delta \mathcal C_1(\delta \rho_1)) \sim \kappa \Tr(Z_{\rm out} \delta \mathcal C_1(\delta \rho_2)) = O(\kappa), \\
& \kappa^2 \Tr(Z_{\rm out} \delta \mathcal C_2(\delta \rho_1)) \sim \kappa^2 \Tr(Z_{\rm out} \delta \mathcal C_2(\delta \rho_2)) = O(\kappa^2).
\end{split}
\ee
As a result, we obtain
\be
z_{\rm out} \geq \left(\frac 32 + O(\kappa)\right)z_{\rm in} - 2N_c \kappa \eta + O(\eta \kappa^2) + O(z_{\rm in} \kappa^2).
\ee
Since both $\kappa, z_{\rm out} \leq 1$, we can always find constants $a,b \geq 0$ such that
\be
\begin{split}
&\left(\frac 32 + O(\kappa)\right)z_{\rm in} + O(z_{\rm in} \kappa^2) \geq \left(\frac 32 - a \kappa\right)z_{\rm in}, \\
&-2N_c \kappa \eta + O(\eta \kappa^2) \geq -b \kappa \eta.
\end{split}
\ee
This expression concludes our proof. 
\end{proof}

\begin{lem}[Noisy compression relation for depolarization]\label{lem:compr_rel_p}
Consider a noisy quantum circuit (QCp) in Eq.~\eqref{eq:3qubit_compr_relation} composed with a constant number of unitary gates subject to noise modeled by a channel $\mathcal N$ with strength $\kappa = \|\mathcal N-\mathcal I\|_\diamond$ and fixed-point purity $\eta$. Then there exists a constant $a'> 0$ such that
\be
p_{\rm out} \leq 3p^2_{\rm in} +  a'\kappa.
\ee
where $p_{\rm in/out}:=1-z_{\rm in/out}$ are input/output depolarizations.
\end{lem}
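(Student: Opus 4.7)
The plan is to reuse the channel decomposition from the previous lemma but reverse the direction of the inequality: instead of lower-bounding $z_{\rm out}$, I would upper-bound $p_{\rm out}=1-z_{\rm out}$. Writing the noisy 3-qubit QCp channel as $\mathcal C = \mathcal C_0 + \kappa\,\delta\mathcal C_1 + \kappa^2\,\delta\mathcal C_2$, where $\mathcal C_0$ is the ideal CNOT+CSWAP unitary and $\kappa\,\delta\mathcal C_1$ collects all first-order insertions of $\delta\mathcal N_j = \mathcal N_j - \hat{\mathcal I}$ (each with $\|\delta\mathcal N_j\|_\diamond = \kappa$), and applying this to $\rho_{\rm in}^{\otimes 3}$, I obtain
\[
p_{\rm out} = 1 - \Tr(Z_{\rm out}\mathcal C_0(\rho_{\rm in}^{\otimes 3})) - \kappa\,\Tr(Z_{\rm out}\,\delta\mathcal C_1(\rho_{\rm in}^{\otimes 3})) - \kappa^2\,\Tr(Z_{\rm out}\,\delta\mathcal C_2(\rho_{\rm in}^{\otimes 3})).
\]
The noiseless piece is handled by the identity $\Tr(Z_{\rm out}\mathcal C_0(\rho_{\rm in}^{\otimes 3}))=\tfrac32 z_{\rm in}-\tfrac12 z_{\rm in}^3$, which rewritten in depolarization variables is exactly $1 - p_{\rm in}^2(3-2p_{\rm in})$, giving the noiseless bound $1-\Tr(Z_{\rm out}\mathcal C_0(\rho_{\rm in}^{\otimes 3})) \leq 3p_{\rm in}^2$ via Eq.~\eqref{eq:noiseless_relations_p}.

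Next, I would control the noise corrections uniformly in $p_{\rm in}$. Using $\|Z_{\rm out}\|\leq 1$, $\|\rho_{\rm in}^{\otimes 3}\|_1=1$, and the triangle inequality summed over the constant number $N_c$ of single-qubit noise insertions, the Hölder-type estimate $|\Tr(Z_{\rm out}\,X)|\leq \|X\|_1$ together with $\|\kappa\,\delta\mathcal C_1\|_\diamond \leq N_c \kappa$ yields $|\kappa\,\Tr(Z_{\rm out}\,\delta\mathcal C_1(\rho_{\rm in}^{\otimes 3}))|\leq N_c\kappa$, and the analogous argument gives $|\kappa^2\,\Tr(Z_{\rm out}\,\delta\mathcal C_2(\rho_{\rm in}^{\otimes 3}))|\leq \binom{N_c}{2}\kappa^2$. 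Combining these with the noiseless identity produces
\[
p_{\rm out} \leq 3p_{\rm in}^2 + N_c\,\kappa + O(\kappa^2),
\]
and since $\kappa\leq 1$ the $O(\kappa^2)$ remainder can be absorbed into a slightly larger constant, giving the claimed $p_{\rm out}\leq 3p_{\rm in}^2 + a'\kappa$.

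The main obstacle, relative to Lemma~\ref{lem:compr_rel_z}, is essentially absent here: because we are seeking an upper bound on depolarization (equivalently a lower bound on polarization shifted by the noiseless quadratic term), we never have to isolate the nonunital contribution proportional to $\eta$. Noise can only add depolarization, so it is sufficient to bound the noise corrections by $|\cdot|$ via the diamond norm, without tracking signs or fixed-point structure. The one point to verify is that $N_c$ is genuinely a constant independent of $p_{\rm in}$; this is immediate because the QCp consists of a fixed CNOT plus a fixed CSWAP on three qubits, so the number of noise events is bounded by a universal constant, and the resulting $a'$ depends only on that constant.
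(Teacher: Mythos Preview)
Your proposal is correct and follows essentially the same route as the paper: split $\mathcal C$ into the ideal unitary channel $\mathcal C_0$ plus noise corrections, use the noiseless identity $1-z_{\rm out}=p_{\rm in}^2(3-2p_{\rm in})\le 3p_{\rm in}^2$, and bound the correction terms by $N_c\kappa+O(\kappa^2)$ via the diamond norm and $\|Z_{\rm out}\|\le 1$. The paper's version is slightly terser (it lumps all noise terms into a single $\delta\mathcal C=\mathcal C-\mathcal C_0$ rather than separating first and second order), but the logic and constants are the same.
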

\begin{proof}Let us formally divide the circuit map into two parts:
\be
\mathcal C = \mathcal C_0 + \delta\mathcal C, \qquad \delta \mathcal C := \mathcal C - \mathcal C_0.
\ee
We can then rewrite the output polarization as
\be
p_{\rm out} = \Tr(P_{\rm out} \mathcal C(\rho)) = \Tr(P_{\rm out} \mathcal C_0(\rho)) + \Tr(P_{\rm out} \delta \mathcal C(\rho)).
\ee
Since $\mathcal C_0$ represents the noiseless circuit, we can use the ideal relation
\be
\Tr(P_{\rm out} \mathcal C_0(\rho)) = 3p^2_{\rm in} - 2p^3_{\rm in}.
\ee
Given that the circuit involves a finite number of qubits and gates, we have
\be
\Tr(P_{\rm out} \delta \mathcal C(\rho)) \leq N_c \kappa + O(\kappa^2).
\ee
where, as previous, $N_c$ is the total count of single-qubit noisy channels in the quantum circuit. Thus, the final expression becomes
\be
p_{\rm out} = 3p^2_{\rm in} + N_c \kappa + O(\kappa^2) \leq a'\kappa + 3p^2_{\rm in},
\ee
for some constant $a' > 0$. This completes the proof.
\end{proof}

Next, we use this result to find a bound on the input/output ratio for the compound compressor.

\begin{lem}[Noisy CQC]\label{lem:noisy_lemma}
Consider a single-qubit noise model with strength $\kappa$ and fixed-point purity $\eta$, as in Eqs.~\eqref{eq:noise-strength} and \eqref{eq:eta}, respectively. There exist $C,g_0>0$ such that for any $0<g\leq g_0$, and when $z_{\rm in} = \Omega(\sqrt{\kappa\eta})$, $\frac{11}{12} + O(g) < z_{\rm out}<1-O(g)$ and $\kappa < gR^{-1/d}$, one can construct a noisy compound quantum compressor (CQC), as described in Section~\ref{sec:QCp_details}, with in/out ratio $R$ in Eq.~\eqref{eq:R}, satisfying
\be\label{eqs:noisy_input_output_r2}
R \leq C \frac{\left(-\log 3[1-z_{\rm out}-O(g)]\right)^{\alpha}}{z_{\rm in}^{\mu_0+O(g)}},
\ee
where  $\alpha = \log_23$, $\mu_0 = \log 3/\log(3/2)$.
\end{lem}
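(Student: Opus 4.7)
My plan is to mirror the three-stage structure of the noiseless CQC proof and propagate the noise corrections from Lemma~\ref{lem:compr_rel_z} and Lemma~\ref{lem:compr_rel_p} cycle by cycle. First I would quantify the effective per-cycle noise. Each CQC cycle consists of the constant-depth 3-qubit QCp plus the $O(R^{1/d})$ swaps required to route the surviving qubits into neighboring positions in a $d$-dimensional local architecture, so the total single-qubit noise experienced during one cycle scales as $\kappa_{\rm eff} = O(\kappa R^{1/d}) = O(g)$ under the hypothesis. Lemma~\ref{lem:compr_rel_z} and Lemma~\ref{lem:compr_rel_p} may therefore be invoked at every cycle with noise strength $O(g)$.

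For the small-polarization stages (Stages 1 and 2), I would iterate the recurrence from Lemma~\ref{lem:compr_rel_z},
\be
z^{(m+1)} \geq \left(\tfrac{3}{2} - a\kappa_{\rm eff}\right) z^{(m)} - b\kappa_{\rm eff}\eta,
\ee
and use the assumption $z_{\rm in} = \Omega(\sqrt{\kappa\eta})$ to show that the additive noise term is dominated by a small fraction of the multiplicative amplification at every step. The recurrence then collapses to $z^{(m+1)} \geq (\tfrac{3}{2} - O(g))\, z^{(m)}$, which after $k_1 + k_2$ cycles reaches a constant target polarization $z_2$ chosen so that $p_2 = 1 - z_2 < 1/6 + O(g)$ matches the entry point of Stage 3. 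Counting cycles and using $\log 3 / \log(3/2 - O(g)) = \mu_0 + O(g)$ gives $R_1 R_2 \leq C_{12}\, (z_2/z_{\rm in})^{\mu_0 + O(g)}$.

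For the close-to-unity stage (Stage 3), I would iterate the depolarization recurrence from Lemma~\ref{lem:compr_rel_p},
\be
p^{(m+1)} \leq 3 (p^{(m)})^2 + a'\kappa_{\rm eff},
\ee
starting from $p^{(1)} \leq 1/6 + O(g)$. The quadratic term drives $p^{(m)}$ down super-exponentially until it nears an effective noise floor of order $a'\kappa_{\rm eff} = O(g)$, which the hypothesis $z_{\rm out} < 1 - O(g)$ keeps above the target. The noiseless-style double-log counting then yields $k_3 \leq \log_2\!\bigl[\log(1/[3(p_{\rm out} - O(g))])/\log(1/[3 p^{(1)}])\bigr]$, giving $R_3 \leq C_3 \bigl(-\log 3[1 - z_{\rm out} - O(g)]\bigr)^{\alpha}$.

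Finally, $R = R_1 R_2 R_3$ combined exactly as in the noiseless proof, with $z_2$, $p^{(1)}$, and all intermediate constants absorbed into $C$, gives the stated bound. The main obstacle is the consistent bookkeeping of the two $O(g)$ corrections across the regimes: the small-polarization analysis requires $z_{\rm in}$ bounded below by $\sqrt{\kappa\eta}$ so the additive noise term does not overwhelm the multiplicative amplification over the $\Theta(\log(1/z_{\rm in}))$ cycles, while the close-to-unity analysis requires $z_{\rm out}$ bounded away from $1$ by $O(g)$ so the quadratic contraction is not stalled by the noise floor. Ensuring that a single parameter $g$ satisfies both constraints simultaneously, and that the self-referential bound $\kappa R^{1/d} < g$ remains consistent with the output bound on $R$ itself, is the delicate part of the argument.
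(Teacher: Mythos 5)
Your central simplification---lumping the per-cycle noise into a single effective parameter $\kappa_{\rm eff} = O(\kappa R^{1/d})$ and re-invoking Lemma~\ref{lem:compr_rel_z} with $\kappa_{\rm eff}$ in place of $\kappa$---breaks down in the small-polarization stages. Your recurrence is $z^{(m+1)} \geq (\tfrac{3}{2} - a\kappa_{\rm eff})z^{(m)} - b\kappa_{\rm eff}\eta$, and you propose to absorb the additive term using $z_{\rm in} = \Omega(\sqrt{\kappa\eta})$. But that hypothesis only controls $\kappa\eta$, not $\kappa_{\rm eff}\eta = \kappa R^{1/d}\eta$. Concretely, with $z^{(m)} \geq z_{\rm in} \geq c\sqrt{\kappa\eta}$ and $\kappa R^{1/d} < g$,
\begin{equation}
\frac{b\kappa_{\rm eff}\eta}{z^{(m)}} \;\leq\; \frac{b}{c}\,R^{1/d}\sqrt{\kappa\eta} \;<\; \frac{b}{c}\,g\sqrt{\eta/\kappa},
\end{equation}
and the lemma contains no hypothesis bounding $\eta/\kappa$. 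In the regime where the lemma is actually deployed (Lemma~\ref{lem:cqc_circ} sets $\kappa = O(\eta^{\mu/d})$ with $\mu/d > 1$, so $\kappa \ll \eta$), the factor $\sqrt{\eta/\kappa}$ diverges; the effective gain of your recurrence then turns negative and the iteration gives nothing.

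What you are missing is that the paper does \emph{not} apply Lemma~\ref{lem:compr_rel_z} to the whole cycle. It applies it only to the constant-depth 3-qubit QCp, so the additive loss is $O(\kappa\eta)$, which $z_{\rm in} = \Omega(\sqrt{\kappa\eta})$ really does control (giving $-b\kappa\eta \geq -b(z^{(m)}_{\rm in})^2/c^2 \geq -b(z_{\rm out}/c^2)z^{(m)}_{\rm in}$, an effective multiplicative loss). The $O(R^{1/d})$ swap layers are then handled separately: the single-qubit state is re-diagonalized after each layer and Lemma~\ref{def:contr_chan} is invoked, which gives $z_{k+1} \geq \eta + \chi(z_k - \eta) \geq \chi z_k$ (using $\eta \geq 0$), hence a purely multiplicative cost $(1-\kappa\Delta)^{O(R^{1/d})} \geq 1 - f\kappa R^{1/d} = 1-O(g)$ with no additive $\eta$-proportional term at all. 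The worst-case additive bound in Lemma~\ref{lem:compr_rel_z} accounts for the possibility that the noise's fixed point is misaligned with the QCp's output basis; during swaps the diagonalization removes that misalignment, so reusing the worst-case bound there is strictly wasteful and in fact fatal. Your Stage~3 depolarization analysis is fine, since there the paper's separate swap treatment also produces an additive $O(\kappa R^{1/d})$ term, matching your $a'\kappa_{\rm eff}$; the gap is confined to Stages 1 and 2.
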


\begin{proof} The proof construction is similar to the one used in Section~\ref{sec:noiseless-CQC}.  First, consider a 3-qubit QCp being an element of the $m$th cycle of noisy CQC. We assume that such a compressor has input polarization $z^{(m)}_{\rm in}$ and output polarization $z^{(m)}_{\rm out}$.

Let us apply this Lemma's condition that $z_{\rm in} = \Omega(\sqrt{\kappa\eta})$ and set
\be
z^{(m)}_{\rm in}\geq z_{\rm in}\geq c(\kappa\eta)^{1/2},
\ee
for some constant $c > 0$, where, as before, $z_{\rm in}\equiv z^{(1)}_{\rm in}$. Under this assumption, following the statement of Lemma~\ref{lem:compr_rel_z}, there exist $a$, $b \geq 0$ such that
\be\label{eq:noisy_relatio8}
z^{(m)}_{\rm out} \geq  \left(\frac{3}{2} - a\kappa\right) z^{(m)}_{\rm in}  - b\kappa\eta \geq  \left(\frac{3}{2} - a\kappa\right) z^{(m)}_{\rm in}  - b(z^{(m)}_{\rm in})^2/c^2  \geq \left(\frac{3}{2} - a\kappa - \frac{bz_{\rm out}}{c^2}\right) z^{(m)}_{\rm in},
\ee
where we put $z^{(m)}_{\rm in}\leq z_{\rm out}$ for all $m \leq k$. These inequalities establish the input-output relation for a noisy 3-qubit compressor, with the parameters $a$ and $b$ encoding the details of the circuit architecture. To satisfy the condition $z^{(m)}_{\rm in}\leq z^{(m+1)}_{\rm in}$, we must assume that 
\be
\kappa a + bz_{\rm out}/c^2 \leq 1
\ee
First, from our assumption it follows that  $\kappa \leq g R^{-1/d} \leq g_0 R^{-1/d}$. Then by setting $g_0 < a^{-1}$ and noting that $R \geq 1$, the first term can be made to satisfy $\kappa a <1$. This condition implies that we can operate only for sufficiently small $z_{\rm out}\leq c^2 (1-g_0 a)/b$.

To derive the compound compressor, we must also consider the noise between applications of elementary 3-qubit compressors. The noise accumulates as qubits are swapped between the output of cycle $m$ and the input of cycle $m+1$. Here, after each swap layer, we apply the noisy unitaries $\mathcal N\mathcal U_k$ that we used for the settling process in Section~\ref{sec:cooling_process}. Since the swaps do not generate neither entanglement nor classical correlation between qubits, we can use Lemma~\ref{def:contr_chan} to describe the effect of noise. Since this procedure diagonalizes the input and output states, we have $z_0 = z_{\rm out}^{(m)}$ and $z_T = z_{\rm in}^{(m+1)}$ for our settling process, where $T$ denotes that number of SWAPs needed. Using Lemma~\ref{def:contr_chan} we have the recurrence relation
\be
z_{k+1} \geq \eta + \chi(z_k-\eta) = \eta(1-\chi)+\chi z_k \geq \chi z_k~.
\ee
Since $\eta\geq 0$ and $\chi\leq 1$, we get that
\be\label{z_cool_swaps}
z_T \geq \chi^T z_0.
\ee
In general, a qubit should use no more than $\propto R^{1/d}$ swaps before being positioned as the input for the next layer. Then, after taking into account that $\eta\geq0$, it follows from Eq.~\eqref{z_cool_swaps} that after $T = O(R^{1/d})$ swaps and using Eq.~\eqref{z_cool_swaps} we get
\be\label{eq:next_to_prev32}
z^{(m+1)}_{\rm in} \geq (1-\kappa \Delta)^{O(R^{1/d})}z^{(m)}_{\rm out}\geq (1-f\kappa R^{1/d})z^{(m)}_{\rm out},
\ee
where $f>0$ is a parameter. 
According to the conditions of the Lemma, the value $f\kappa R^{1/d} \leq fg\leq fg_0$. Thus, for $g_0\leq f^{-1}$ we have $f\kappa R^{1/d}\leq 1$ and the bound in Eq.~\eqref{eq:next_to_prev32} is meaningful.

From the inequalities in Eqs.~\eqref{eq:noisy_relatio8}--\eqref{eq:next_to_prev32}, similar to Eq.~\eqref{eq:inequalities_small_eta}, we derive that the number of cycles $k$ required to produce the output with compression parameter $z_{\rm out} \equiv z^{(k)}_{\rm out}$ satisfies
\be
k\leq \frac{\log z_{\rm out}/z_{\rm in}}{\log [(3/2-a\kappa-bz_{\rm out}/c^2)(1-f\kappa R^{1/d})]}.
\ee
Thus we arrive at the noisy compressor in/out ratio in the form
\be\label{eq:noisy_r_bound}
R = \frac{N_{\rm in}}{N_{\rm out}} = 3^{k} \leq \left(\frac{z_{\rm out}}{z_{\rm in}}\right)^{\mu+\delta} \quad {\text{if}}\quad z_{\rm out}\leq c^2 (1-g_0 a)/b,
\ee
where $\mu = \log 3/\log(3/2)$ and $\delta$ is a correction that takes the form
 \be\label{eqs:delta_corr}
\delta = \frac{\log3}{\log[(3/2-a\kappa-bz_{\rm out}/c^2)(1-f\kappa R^{1/d})]}-\frac{\log 3}{\log 3/2} \leq O(z_{\rm out},g).
\ee
Similar to the previous section, the in/out ratio is only tight as a function of $z_{\rm in}$. To improve the bound as a function of $z_{\rm out}$, we consider a similar bound for the parameter $p^{(m)}_{\rm out} := 1-z^{(m)}_{\rm out}$.
According to Lemma~\ref{lem:compr_rel_p}, we get
\be
p^{(m)}_{\rm out} \leq a'\kappa + 3 \left(p^{(m)}_{\rm in}\right)^2.
\ee
In this case, we also need to include the effect of noise during the swaps of the output qubit to position for the next cycle. The noise channel must satisfy 
\be
p^{(m+1)}_{\rm in} \leq p^{(m)}_{\rm out} + f'\kappa R^{1/d},
\ee
for some $f'>0$. The additive term on the right corresponds to the effect of noise after $\propto R^{1/d}$ swaps given that the noise has strength $\kappa$. Using this inequality, we can obtain recurrence relation for the depolarization in the form
\be
p^{(m+1)}_{\rm in} \leq \frac 12 s\kappa R^{1/d} + 3 \left(p^{(m)}_{\rm in}\right)^2,
\ee
where, for simplicity, we defined the parameters $
s := 2(a'+f')$. As the next step, we use the substitution of the variables $ 
\tilde p^{(m)}_{\rm in} := p^{(m)}_{\rm in}-s\kappa R^{1/d}$,
which leads us to the inequality
\be
\tilde p^{(m+1)}_{\rm in} \leq 3 \left(\tilde p^{(m)}_{\rm in}\right)^2 + 3s\kappa R^{1/d} \left(2p^{(m)}_{\rm in}-\frac{1}{6}+s\kappa R^{1/d}\right)
\ee
The rightmost expression is negative and can be omitted without affecting the inequality if make an additional assumption
\be
2 p^{(m)}_{\rm in}\leq \frac{1}{6}-s\kappa R^{1/d}.
\ee
This allows us to rewrite
\be
\tilde p^{(m+1)}_{\rm in} \leq 3 \left(\tilde p^{(m)}_{\rm in}\right)^2.
\ee
The solution of the recurrence relation after $k$ cycles as
\be
\tilde p^{(k)}_{\rm out}\leq \tilde p^{(k+1)}_{\rm in}\leq\frac 1{3}\left(3 \tilde p^{(0)}_{\rm in}\right)^{2^k},
\ee
This result provides us with the bound on the required number of cycles as
\be
k \leq \log_2 \Biggl(\frac{\log 3\,(p_{\rm out}-s\kappa R^{1/d})}{\log 3\,(p_{\rm in}-s\kappa R^{1/d})}\Biggl)
\ee
Then, the in/out ratio satisfies
\be
R = 3^{k} \leq \left(\frac{\log 3\,(p_{\rm out}-s\kappa R^{1/d})}{\log 3\,(p_{\rm in}-s\kappa R^{1/d})}\right)^{\alpha} \leq \left(\frac{\log [3\,(p_{\rm out}-s\kappa R^{1/d})]}{\log [3 p_{\rm in}]}\right)^{\alpha}\quad \text{if} \quad p_{\rm in}\leq\frac 1{12}-\frac 12s\kappa R^{1/d},
\ee
where the exponent $\alpha = \log_23$.

Finally, similar to the previous Lemma, we merge the aforementioned two results to derive a hybrid bound on $R$. We construct our target compressor as a sequence of three smaller compressors. The first compressor goes from $z_{\rm in}\geq c(\kappa\eta)^{1/2}$ to a certain $\eta_1$ and has a ratio of $R_1$, the second compressor goes from a constant polarization $z_1\leq c^2 (1-g_0 a)/b$ to a constant $z_2> \frac{11}{12}+\frac 12s\kappa R^{1/d}$ and has a ratio of $R_2$, and the last compressor goes from a constant $z_2$ to a certain $z_{\rm out}\leq 1-s\kappa$ and has a ratio of $R_3$. For the ratio $R_1$ we use the bound Eq.~\eqref{eq:noisy_r_bound}, while for $R_3$ we use the Eq.~\eqref{eq:small_p_R}. The ratio $R_2$ is independent of both $z_{\rm in}$ and $z_{\rm out}$ and can be treated as a constant. The resulting total in/out ratio is
\be\label{eqs:noisy_input_output_r}
R = R_1R_2R_3\leq C \frac{\left(-\log 3[1-z_{\rm out}-O(g)]\right)^{\alpha}}{z_{\rm in}^{\mu_0+O(g)}}
\ee
for some $C>0$. This expression concludes our proof.
\end{proof}

\subsection{Cooling procedure and RESET process}
\label{sec:reset_process}

Once we have figured out the compression process, we need to find the cooling procedure so that the resulting depth and number of ancilla are compatible with the rest of the circuit. In particular, we want the implementation depth $T_{\rm RES}$ to satisfy $\kappa T_{\rm RES}\ll1$, otherwise the rest of the qubits will decohere before the RESET operation is complete. We also require that $\kappa N_a^{1/d}\ll 1$ to ensure that one can perform the circuit operations by naviagating the target qubits around the ancilla using SWAP operations.

The possibility of such a cooling schedule and the entire RESET process can be established using the following Lemma. Here, we employ standard big-Theta notation, $\Theta(x)$, along with big-O notation, $\tilde{O}(x)$, which omits logarithmic factors.

\begin{lem}\label{lem:cqc_circ}
Consider a $d$-dimensional noisy circuit composed of unitary operations followed by single-qubit noise described by a general contracting channel $\mathcal{N}$ as in Eq.~\eqref{eq:assump_contr} with a fixed-point purity $ \eta > 0$ and a strength $\|\mathcal{N} - \mathcal{I} \|_\diamond = \kappa > 0$ satisfying $ \kappa = O(\eta^{\mu / d})$, where $\mu = \log 3/\log(3/2) + \delta$ for any $\delta > 0$.  Such a circuit can implement a noisy RESET operation that outputs a single qubit with the following polarization
\be\label{eq:z_out_reset}
z_{\rm out} =  1- \Theta\Bigl((\kappa\eta^{-\mu/d})^{d/(\mu+d)}\Bigl)
\ee
using $N_a$ ancilla qubits and depth $T_{\rm RES}$ satisfying
\be
N_a = O\left(\frac{\log^{\alpha}(\kappa\eta^{-\mu/d}) }{(\kappa\eta)^{\mu d/(\mu+d)}}\right), \qquad \kappa T_{\rm RES} = \tilde O\Bigl((\kappa\eta^{-\mu/d})^{d/(\mu+d)}\Bigl),
\ee
for $\alpha = \log_23$.
\end{lem}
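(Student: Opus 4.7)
The plan is to chain the three subroutines already established—settling, compound compression, and swap—and tune their free parameters (the settling depth $T$ and the compressor output polarization $z_{\rm out}$) so that the constraints of Lemma~\ref{lem:noisy_lemma} together with the smallness conditions $\kappa T_{\rm RES} \ll 1$ and $\kappa N_a^{1/d} \ll 1$ are satisfied simultaneously. First I would prepare $N_a$ ancilla in the maximally mixed state and apply the settling process of Lemma~\ref{lem:settling} for $T$ layers, obtaining polarization $z_{\rm in} = \Theta(\eta \Delta \kappa T)$ whenever $\kappa T \ll 1$. These ancilla are then fed into the noisy CQC of Lemma~\ref{lem:noisy_lemma}, which returns one qubit of polarization $z_{\rm out}$ using $N_a$ ancilla and circuit depth $\Theta(N_a^{1/d}\log N_a)$; a final SWAP exchanges this qubit with the target, giving $T_{\rm RES} = T + \Theta(N_a^{1/d}\log N_a)$.

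The key optimization step balances the settling and compression contributions to the depth by requiring $T \asymp N_a^{1/d}\log N_a$. Combining this with $z_{\rm in} \sim \eta\kappa T$ and the scaling $N_a \sim z_{\rm in}^{-\mu}$ (up to polylogarithmic factors, from Eq.~\eqref{eqs:noisy_input_output_r2}), I would solve the resulting system to obtain
\be
T = \tilde\Theta\bigl((\kappa\eta)^{-\mu/(\mu+d)}\bigr), \qquad N_a = \tilde O\bigl((\kappa\eta)^{-\mu d/(\mu+d)}\bigr).
\ee
Substitution then yields $\kappa T_{\rm RES} = \tilde\Theta(\kappa N_a^{1/d}) = \tilde\Theta\bigl((\kappa\eta^{-\mu/d})^{d/(\mu+d)}\bigr)$, matching the claimed depth bound. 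Setting the target depolarization at the same scale, $1 - z_{\rm out} = \Theta\bigl((\kappa\eta^{-\mu/d})^{d/(\mu+d)}\bigr)$, reproduces Eq.~\eqref{eq:z_out_reset}, and the remaining $\log^\alpha$ factor in the formula for $N_a$ is inherited directly from the logarithmic dependence of the CQC input/output ratio on $1 - z_{\rm out}$ in Lemma~\ref{lem:noisy_lemma}.

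The main obstacle is the self-referential character of the bound on $N_a$: the quantity $g \geq \kappa N_a^{1/d}$ that controls the admissible perturbation of the exponent in Lemma~\ref{lem:noisy_lemma} depends on $N_a$ itself. I would handle this by treating $g$ as a free small parameter, absorbing the $O(g)$ corrections into the slack $\delta > 0$ already present in $\mu = \log 3/\log(3/2) + \delta$, and using the threshold hypothesis $\kappa = O(\eta^{\mu/d})$ to guarantee that $g = \tilde O\bigl((\kappa\eta^{-\mu/d})^{d/(\mu+d)}\bigr) = o(1)$ at the chosen schedule. A secondary bookkeeping task is to verify that the remaining hypotheses $z_{\rm in} = \Omega(\sqrt{\kappa\eta})$, $\kappa T \ll 1$, and $\kappa N_a^{1/d}\leq g$ are all maintained; each reduces to a power-law condition on the small parameter $\kappa\eta^{-\mu/d}$ which is implied by the assumed threshold on $\kappa$.
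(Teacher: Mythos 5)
Your proposal follows essentially the same route as the paper's: apply the settling lemma to get $z_{\rm in} = \Omega(\eta\Delta\kappa T)$, choose $T$ so the settling and swap/compression depths balance (the paper fixes $T \propto (\kappa\eta)^{-\mu/(\mu+d)}$ directly, which is exactly the balance you derive from $T \asymp N_a^{1/d}\log N_a$), invoke Lemma~\ref{lem:noisy_lemma} for the $N_a$ scaling, and close the self-referential $g$ loop by absorbing the $O(g)$ exponent correction into the slack $\delta$. The only presentational difference is that you make the depth-balancing optimization explicit, whereas the paper states the choice of $T$ and then verifies it a posteriori via the prefactor $F$ in its estimate of $\kappa R^{1/d}$.

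One subtlety you slightly overclaim (and which the paper also glosses over): the requirement $z_{\rm in} = \Omega(\sqrt{\kappa\eta})$ from Lemma~\ref{lem:noisy_lemma} is not simply ``implied by the assumed threshold on $\kappa$.'' Plugging in $z_{\rm in} \sim (\kappa\eta)^{d/(\mu+d)}$ and demanding this is $\Omega\bigl((\kappa\eta)^{1/2}\bigr)$ for small $\kappa\eta$ forces the exponent inequality $d/(\mu+d) \leq 1/2$, i.e., $\mu \geq d$, since $\kappa\eta<1$. This is a constraint on $\delta$ relative to $d$, not a smallness condition on $\kappa\eta^{-\mu/d}$. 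For $d = 1, 2$ any $\delta > 0$ works since $\mu_0 \approx 2.71 > 2$, but for $d \geq 3$ one needs $\delta \geq d - \mu_0$, so ``for any $\delta > 0$'' in the lemma statement tacitly assumes $d\leq 2$ or requires a $d$-dependent lower bound on $\delta$. Worth flagging explicitly when you verify the hypotheses.
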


\begin{proof} We start by describing the settling process. Consider maximally-mixed ancilla states that are subjected to $T\gg 1$ layers of circuit evolution.  Using the expression we derived in Lemma~\ref{lem:settling} (see Eq.~\eqref{eq:settling_process}) 
from general contracting properties of the noise channel $\mathcal N$, we get
\be\label{eq:z_in_setimate2}
z_{\rm in} = \eta_{T} =  \Omega(\kappa\eta T).
\ee
It is convenient 
to choose the settling time as
\be\label{eq:cooling_time}
T \propto (\kappa\eta)^{-\mu/(\mu+d)},
\ee 
where the symbol $\propto$ denotes a relation up to an arbitrary multiplicative factor, and $\mu>0$. 
As we see below, this choice of settling time can be made equal to the time needed for swapping ancilla qubits, and thus minimizing the overall reset time.  Then the total depth of the cooling part can be bounded as
\be
\kappa T \propto (\kappa\eta^{-\mu/d})^{d/(\mu+d)}.
\ee

Therefore, the polarization from Eq.~\eqref{eq:z_in_setimate} takes the form
\be\label{eq:cooling_z_in}
z_{\rm in} = \Omega\Bigl((\kappa\eta)^{d/(\mu+d)}\Bigl).
\ee
Consider CQC that outputs a single qubit, $N_{\rm out} = 1$ using full stack of ancilla qubits as an input, $N_{\rm in} = N_{a}$  
Using Lemma~\ref{lem:noisy_lemma} with $z_{\rm in}$ in Eq.~\eqref{eq:cooling_z_in} and $z_{\rm out}$ in Eq.~\eqref{eq:z_out_reset}, the total number of ancilla required for the CQC is
\be
N_a \equiv R = O\left(\frac{\log^{\alpha}(\kappa\eta^{-\mu/d}) }{(\kappa\eta)^{(\mu_0+O(g))d/(\mu+d)}}\right).
\ee
To fix the value of $\mu$, we note that
\be
\kappa R^{1/d} = F\cdot (\kappa\eta)^{(\mu-\mu_0+O(g))/(\mu+d)}
\ee
where the prefactor $F$ takes the asymptotic form
\be  F = O\left((\kappa\eta^{-\mu/d})^{d/(\mu+d)}\log^\alpha(\kappa\eta^{-\mu/d})\right)
\ee
Then  there is a choice of $\delta>0$ such that using $\mu = \mu_0 + \delta$ makes the expression above to satisfy $\kappa R^{1/d}<g$. Moreover, taking arbitrarily small $g$, we can use arbitrarily small $\delta$. Thus
\be
N_a =O\left(\frac{\log^{\alpha}(\kappa\eta^{-\mu/d}) }{(\kappa\eta)^{\mu d/(\mu+d)}}\right).
\ee
The total depth of the compressor is
\be
T_{\rm RES} = T + O(R^{1/d}\log R),
\ee
where $T_{\rm RES}$ is the total depth of the RESET operation, and $T$ is the depth required for settling, as shown in Eq.~\eqref{eq:cooling_time}. Here, we use the fact that the natural logarithm $\log$ and $\log_3$ are related by a constant factor, which can be omitted in big-O notation. Then we arrive at the epxression
\be
\kappa T_{\rm RES} = \tilde O\Bigl((\kappa\eta^{-\mu/d})^{d/(\mu+d)}\Bigl), 
\ee
where $\tilde O$ ignores logarithmic corrections.
This expression concludes our proof.
\end{proof}

\begin{cor} \label{thm:reset}  
Consider an $N$-qubit, $d$-dimensional noisy unitary circuit of depth $D$, where each qubit is subject to single-qubit noise described by a channel $\mathcal{N}$ with a fixed-point purity $ \eta > 0$ and a strength $\|\mathcal{N} - \mathcal{I} \|_\diamond = \kappa > 0$ satisfying $ \kappa = O(\eta^{\mu / d})$, where $\mu = \log 3/\log(3/2) + \delta$ for any $\delta > 0$. Then, such a circuit can simulate an $ N' $-qubit, $ d $-dimensional noisy circuit of depth $ D' $, incorporating unitary gates and RESET operations subject to a single-qubit noise $\mathcal N'$ with a strength
\be\label{eq:effective_kappa}
\kappa' := \|\mathcal N'-\mathcal I\|_\diamond = \tilde O\left((\kappa\eta^{-\mu/d})^{d/(\mu+d)}\right),
\ee
with overhead
\be\label{eq:reset_overhead}
N/N' =   O\left((\kappa\eta)^{-\mu d/(\mu+d)}{\rm polylog}(\kappa\eta^{-\mu/d}) \right), \qquad D/D' =   O\left((\kappa\eta)^{-\mu/(\mu+d)}\right).
\ee
\end{cor}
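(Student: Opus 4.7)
The plan is to build the simulating circuit by tiling the $d$-dimensional lattice with \emph{RESET blocks}, each consisting of one logical qubit together with $N_a$ ancilla qubits arranged in a sublattice of linear size $\Theta(N_a^{1/d})$, with $N_a$ and $T_{\rm RES}$ chosen as in Lemma~\ref{lem:cqc_circ}. This immediately yields $N/N' = O(N_a) = O((\kappa\eta)^{-\mu d/(\mu+d)}\,\mathrm{polylog}(\kappa\eta^{-\mu/d}))$, matching the first claim in Eq.~\eqref{eq:reset_overhead}.

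For the depth overhead, I would simulate one layer of the target circuit as follows: (i) use SWAP operations of depth $O(N_a^{1/d})$ to bring any pair of neighboring logical qubits into adjacency on the physical lattice, (ii) apply the simulated unitary gate, (iii) execute the RESET protocol of Lemma~\ref{lem:cqc_circ} on every ancilla block in parallel, so that each logical qubit has a freshly prepared ancilla available for the next simulated layer. Because the blocks are disjoint sublattices, parallel execution is allowed by the architecture. The per-layer depth is dominated by $T_{\rm RES}$, and substituting $\kappa T_{\rm RES} = \tilde O((\kappa\eta^{-\mu/d})^{d/(\mu+d)})$ from Lemma~\ref{lem:cqc_circ} gives $D/D' = O(T_{\rm RES}) = \tilde O((\kappa\eta)^{-\mu/(\mu+d)})$, as required.

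Next I would bound the effective single-qubit channel $\mathcal N'$ seen by a logical qubit across one simulated layer. During one such layer, the logical qubit is acted on by at most a constant number of physical gates, but it idles (or undergoes SWAPs) for the full duration $T_{\rm RES}$ while the RESET block processes its ancillas; subadditivity of the diamond norm under composition bounds the accumulated idle/swap error by $O(\kappa T_{\rm RES})$. In addition, the final SWAP between the logical qubit and the compressor output contributes a RESET-quality error of $O(1 - z_{\rm out})$. Both contributions have the same scaling $\tilde O((\kappa\eta^{-\mu/d})^{d/(\mu+d)})$ by Lemma~\ref{lem:cqc_circ}, so their sum satisfies $\kappa' = \|\mathcal N' - \mathcal I\|_\diamond = \tilde O((\kappa\eta^{-\mu/d})^{d/(\mu+d)})$, yielding Eq.~\eqref{eq:effective_kappa}.

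The main obstacle will be the bookkeeping that shows the different RESET blocks really act independently so that the overall effective channel factors as $\otimes_i \mathcal N'_i$ on the logical register. Two subtleties deserve attention: first, the initial randomization step of the RESET (Fig.~\ref{fig:reset}) must use independent randomness across blocks so that no classical correlations leak between logical qubits via reused ancillas; second, the assumption $\kappa = O(\eta^{\mu/d})$ must be used to guarantee $\kappa N_a^{1/d} = O(1)$, which is precisely the small-$g$ regime in which Lemma~\ref{lem:cqc_circ} applies and in which swap-induced error on the logical qubits inside a block remains of the same order as $\kappa'$ rather than overwhelming it. Once these two points are verified the corollary follows directly by composing the bounds on $N_a$, $T_{\rm RES}$ and $1-z_{\rm out}$ from Lemma~\ref{lem:cqc_circ}.
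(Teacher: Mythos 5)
Your proposal is correct and follows essentially the same route as the paper: take $N/N' = O(N_a)$, $D/D' = O(T_{\rm RES})$, and bound $\kappa'$ by the combination of idle/swap error $O(\kappa T_{\rm RES})$, navigation error $O(\kappa N_a^{1/d})$, and RESET-quality error $O(1-z_{\rm out})$, all of which scale as $\tilde O\bigl((\kappa\eta^{-\mu/d})^{d/(\mu+d)}\bigr)$ by Lemma~\ref{lem:cqc_circ}. You are in fact a bit more explicit than the paper's terse argument, spelling out the tiling into disjoint RESET blocks, the parallel execution across blocks, the need for independent randomness in the erasure step, and the role of the $\kappa N_a^{1/d}=O(1)$ condition; these are exactly the implementation details the paper takes for granted.
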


This result directly follows from the fact that RESET produces qubits with polariztion in Eq.~\eqref{eq:z_out_reset} and the fact that the rest of the qubits accumulate the noise
\be
\kappa' = \kappa T_{\rm RES} + O(\kappa N_a^{1/d}) =  \tilde O\left((\kappa\eta^{-\mu/d})^{d/(\mu+d)}\right).
\ee
The overhead can be computed as $N'/N = N_a$ because each circuit qubit must have $N_a$ auxiliary qubits to implement RESET operations. The depth overhead can be computed as $D'/D \leq T_{\rm RES} O(\kappa N_a^{1/d})$ since, in the worst case, each layer may contain a RESET requiring depth $T_{\rm RES}$.

\subsection{Connection between RESETs and fault tolerance}

In this section, we prove the threshold theorem that quantifies the overhead in simulating a noiseless unitary circuit  of depth $D'$ up to some error $\varepsilon$ using a $D$-depth noisy circuit with RESETs when the noise strength is below some certain constant value.  

\begin{theoremS} [Threshold Theorem] \label{thm:threshold} 
There exists a constant $\kappa_0>0$ such that for all $\kappa: = \|\mathcal N-\mathcal I\|_{\diamond}<\kappa_0$ a $N$-qubit $d$-dimensional noisy unitary circuit with RESETs of depth $D$ can simulate $N'$-qubit $d$-dimensional noiseless unitary circuit of depth $D'$ up to error $\epsilon$ with overhead
\be\label{eq:thresholN_overhead}
\begin{split}
&D/D'\sim N/N' = O({\rm log}(N'D'/\epsilon)).
\end{split}
\ee
\end{theoremS}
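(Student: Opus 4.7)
The strategy is to invoke the standard concatenated-code threshold theorem, specialized to the measurement-free setting in which the only non-unitary primitive is a noisy RESET operation. Since the fault-tolerance constructions in the cited works (Gottesman, Aliferis--Gottesman, Boykin--Mor--Roychowdhury, Heussen et al.) only require ancilla preparation plus noisy universal unitary gates, the RESET-enabled circuit model established in Corollary~\ref{thm:reset} directly meets their assumptions. The proof therefore reduces to assembling three ingredients: (i) a measurement-free fault-tolerant code from the literature, (ii) Corollary~\ref{thm:reset} to bound the effective noise of the RESET-enabled model, and (iii) the standard concatenation calculus to translate a constant noise threshold into logarithmic-in-$N'D'/\epsilon$ overhead.

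The first step is to fix a base code admitting a measurement-free fault-tolerant gadget set for a universal set of logical gates, with a constant noise threshold $\kappa_\star > 0$. At the base level, one logical qubit is encoded into $c_0$ physical qubits, and each logical gate location is replaced by a constant-depth gadget built solely from RESETs and local noisy unitaries. Since each gadget acts on a constant number of qubits, it can be embedded in a $d$-dimensional local architecture with only constant SWAP routing, whose extra noise can be absorbed into the effective per-gate channel. The second step is to choose $\kappa_0$ small enough that the effective post-RESET noise strength $\kappa'$ from Eq.~\eqref{eq:effective_kappa}, inflated by the routing overhead, lies below $\kappa_\star$. A standard concatenation argument then shows that after $L$ levels of concatenation the logical error per gadget is at most $\kappa_\star (\kappa'/\kappa_\star)^{2^L}$, a doubly-exponential suppression in $L$.

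Finally, to achieve total simulation error $\epsilon$ on an $N'$-qubit logical circuit of depth $D'$, I would pick $L$ so that the per-gadget logical error is at most $\epsilon/(N'D')$; by a union bound over the $N'D'$ logical locations, this gives $L = O(\log\log(N'D'/\epsilon))$. Each level of concatenation multiplies the qubit count and the circuit depth by a constant factor, so the total overhead is $c_0^L = {\rm polylog}(N'D'/\epsilon)$, matching the claimed scaling up to the absorbed logarithm. The main obstacle I anticipate is verifying that the existing measurement-free fault-tolerance gadgets compose cleanly with the locality constraint for arbitrary $d \geq 1$; this should reduce to the observation that within a constant-size code block SWAP routing contributes only a constant factor to the per-gadget noise, so that hierarchical concatenation preserves the threshold at every level.
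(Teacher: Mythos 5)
Your proposal follows essentially the same route as the paper: fix the concatenated Steane code with measurement-free fault-tolerant gadgets (citing Heussen et al.\ for the measurement-free QEC gadget, Boykin et al.\ for the magic-state gadgets), verify that noisy RESETs plus noisy local unitaries satisfy the hypotheses, run the standard level-recursion to get per-gadget logical error $\sim (\kappa'/\kappa_\star)^{2^L}$, and pick $L = O(\log\log(N'D'/\epsilon))$ to meet the error budget, giving $n_B^L$ overhead. The paper's proof is the same argument carried out explicitly: it introduces level-$m$ logical Pauli extensions, Definitions~\ref{efn:transv_op} and \ref{dfn:qec_gadget} for fault-tolerant operations and QEC gadgets, and proves the recursion $\kappa_m = N_O \mu^{-1}(\mu\kappa)^{2^{m-1}}$ by pushing noise channels through logical operations layer by layer, culminating in $\tilde\rho^1_{\rm out} = \bigl(\prod_m \mathcal N^{(m,\kappa_m)}\bigr)\rho_{\rm out}$; you invoke this bookkeeping as a black box. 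One small point worth noting: with $L = \Theta(\log\log(N'D'/\epsilon))$ the raw overhead is $n_B^L = (\log(N'D'/\epsilon))^{\Theta(\log n_B)}$, i.e.\ polylogarithmic (as you correctly write), whereas the theorem statement asserts $O(\log)$; the paper's own derivation yields the same polylog quantity and then writes it as $O(\log)$, so this is an imprecision in the paper rather than a gap in your proof. Finally, your concern about locality is legitimate, but your dismissal (``constant routing within a code block'') is only correct at the base level — at level $m$ a block occupies $\Theta(n_B^{m/d})$ linear size, so routing depth grows geometrically with level. The resolution (and the reason the paper leans on \cite{gottesman2000fault,aliferis2005quantum}) is that the per-gadget noise is already suppressed doubly-exponentially in $m$, so the polynomial-in-$n_B^m$ routing blowup is harmless; you should state this explicitly rather than assert the threshold is ``preserved at every level'' without justification.
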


Before proving this theorem, we begin by introducing some definitions. First, consider the standard notation of a $[[n,1,3]]$ quantum code, which encodes a single logical qubit into $n$ physical qubits and is capable of tolerating any single-qubit error \cite{gottesman1997stabilizer}. The we use this code to build a concatenated code. For each encoded subset, we assign $n_A$ ancilla qubits, which are necessary for performing error correction and implementing fault-tolerant gates. Consequently, the code and ancilla together form a block consisting of $n_B = n + n_A$ qubits. Thus, given a fixed number of qubits $N'$ in the target noiseless circuit, we consider the noisy circuit containing $N = N' n_B^L$ qubits, where $L \geq 0$ represents the total number of concatenation levels.

Next, we define the logical operators for each concatenation level. First, we set $P^{0}_0 := I$, $P^{0}_1 := X$, $P^{0}_2 := Y$, and $P^{0}_3 := Z$ as the single-qubit Pauli matrices. For each concatenation level $L \geq m \geq 0$, we then define the level-$m$ logical Pauli operators $P^{m}_{\pmb{\alpha}}$,
\be
P^m_{\pmb{\alpha}} : = \bigotimes_{i=1}^{N_m}P^{m}_{\alpha_i},
\ee
where $\pmb{\alpha} \in \mathbb{Z}_4^{N_m}$ is an integer-valued vector with components $\alpha_i \in \{0,1,2,3\}$, $N_m = N' n_B^{L-m}$ is the number of logical qubits at $m$th level, and we define level-$m$ Pauli operators for $m\geq 0$ through the recurrence relation
\be
P^{m+1}_{\alpha} := \bigl(P^{m}_{\alpha}\bigl)^{\otimes n} \otimes \bigl(P_0^{m} \bigl)^{\otimes n_A},
\ee
where we have chosen first $n$ qubits as encoding qubits and the rest $n_A$ qubits as ancilla.

After defining the logical operators, we define the logical superoperators that describe quantum channels. First, for any $k$-qubit channel $\mathcal{E}(\cdot)$ acting on a physical qubits subset $\Omega$ and any pair of vectors $\pmb{\alpha}, \pmb{\beta} \in \mathbb{Z}_4^k$, we define $T_{\pmb{\alpha} \pmb{\beta}}$ as the coefficients of the (Pauli) transfer matrix of the channel such that
\be
\forall \pmb{\alpha}\in \mathbb Z_4^{k}, O\in \mathsf O_{F\setminus\Omega}: \qquad \mathcal E^\dag(P^0_{\pmb{\alpha}}\otimes O) = \sum_{\pmb{\beta}\in \mathbb Z_4^{k}}T_{\pmb{\alpha}\pmb{\beta}}P^0_{\pmb{\beta}}\otimes O .
\ee
where $O$ belong to the set of operators $\mathsf O_{F\setminus\Omega}$ acting on the subset $F\setminus \Omega$ complementary to $\Omega$.
Transfer matrix uniquely defines the quantum channel acting on physical qubits. Then we define a quantum channel $\mathcal E^{(m)}(\cdot)$ to be a level-$m$ logical extension of $\mathcal E(\cdot)$ if its action follows $T_{\pmb{\alpha}\pmb{\beta}}$ such that
\be\label{eqs:m-elevl channel}
\forall \pmb{\alpha}\in \mathbb Z_4^{k}, O\in \mathsf O_{F\setminus\Omega}:\qquad \mathcal {E}^{(m)\dag} (P^{m}_{\pmb{\alpha}}\otimes O)= \sum_{\pmb{\beta}\in \mathbb Z_4^k}T_{\pmb{\alpha}\pmb{\beta}}P^{m}_{\pmb{\beta}}\otimes O.
\ee
There may be multiple extensions of a given Pauli channel, distinguished by their action on Pauli operators other than $ P^{m}_{\pmb{\alpha}}$. 

Finally, let us define the logical gates. Before that, to simplify the expressions, we define extend the traditional product notation to the ordered product of superoperators as follows.
\begin{notn} We denote the multiplication operator according to an ordered list of operations as
\be
\prod_{l\in (l_1,\dots, l_N)}{\mathcal O}_l := {\mathcal O}_{l_N} \dots {\mathcal O}_{l_1}.
\ee
\end{notn}
We also use a notation for an unspecified single-qubit noise model of given strength.
\begin{notn} \label{not:noise_channel} By $\mathcal N^{(m,\kappa)}$ we denote an unspecified CPTP map that satisfies two conditions:
\begin{enumerate}
    \item $\mathcal N^{(m,\kappa)} = \mathcal N^{(m)}_1\otimes \dots \otimes \mathcal N^{(m)}_{N_m}$, where $\mathcal N^{(m)}_i$ is a $m$-level extension of a single-qubit channel  $\mathcal N_i$.
    \item  $\|\mathcal N_i-\mathcal I\|_{\diamondsuit}\leq \kappa$;
\end{enumerate}
\end{notn}
Using these definitions, we now outline the properties of the gates used in the evolution. By the Solovay-Kitaev theorem, one can argue that any circuit can be represented with a limited set of quantum gates up to an arbitrarily small error. Therefore, without loss of generality, we assume that the target circuit is already composed of such gates and disregard compilation errors.

The logical extension of the standard gates must satisfy a notion of fault tolerance, which can be formulated as follows.
Note that the operations must also include the RESET.

\begin{dfn}[Fault-tolerant logical operation] \label{efn:transv_op} We call a subcircuit represented by a list of gate indices $O$ a fault-tolerant logical operation if the noisy and noiseless operations defined as
\be\label{eq:log_operations}
\mathcal O^{(m)} = \prod_{i\in O} {\mathcal O}^{(m-1)}_i, \qquad \tilde {\mathcal O}^{(m)} = \prod_{i\in O} {\mathcal N}^{(m-1,\kappa)} {\mathcal O}^{(m-1)}_i
\ee
for any $\kappa, \kappa'\geq0$ satisfy the condition
\be\label{eq:transvers_gate_def}
\tilde{\mathcal O}^{(m)} {\mathcal N}^{(m-1,\kappa')} = {\mathcal N}^{(m-1,\kappa'+\kappa |O|)} \mathcal  O^{(m)},
\ee
where $|O|$ is the length of the list $O$ (i.e. the number of operations in $O$).
\end{dfn}

In addition to logical operations, the logical circuit requires operations that mitigate the effect of errors. We formulate the main property of such a quantum error correction (QEC) gadget as follows.

\begin{dfn}[QEC gadget] \label{dfn:qec_gadget} We call a subcircuit represented by a list of gate indices $C$ an quantum error correction (QEC) gadget if for the operation defined as
\be
\tilde {\mathcal C}^{(m)} := \prod_{i\in C} {\mathcal N}^{(m-1,\kappa)} {\mathcal O}^{(m-1)}_i
\ee
there exists $c>0$ such that for all $\kappa,\kappa'\geq0$ the condition
\be
\tilde{\mathcal C}^{(m)} {\mathcal N}^{(m-1,\kappa')}\mathcal P^m =  {\mathcal N}^{(m-1,\kappa|C|)} {\mathcal N}^{(m,c(\kappa'^2+\kappa^2))}\mathcal P^m ,
\ee
 is staisfied, where $|C|$ is the length of the list $C$, $\mathcal P^m = \sum_{\nu=1}^{2^{N_m}} |\nu,m\>\<\nu,m|$ is the projector on the $m$-level logical space, and $|\nu,m\>$ are $m$-level logical codewords. 
\end{dfn}

The construction of logical gates and QEC gadgets using only unitary gates and RESET operations is well established in the literature. Here, we summarize the results with the following proposition.

\begin{prop}\label{prop:existense_steane_code}
For $[[7,1,3]]$ Steane code there exists a complete set of fault-tolerant operations, according to Definition~\ref{efn:transv_op}, that includes RESET and a computationally universal set of gates, as well as a QEC gadget according to Definition~\ref{dfn:qec_gadget}.
\end{prop}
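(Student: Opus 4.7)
The plan is to verify the three required ingredients—(i) a fault-tolerant RESET, (ii) a computationally universal set of fault-tolerant gates, and (iii) a QEC gadget—directly from the standard structure of the $[[7,1,3]]$ Steane code, while adhering to the measurement-free constraint built into Definitions~\ref{efn:transv_op} and~\ref{dfn:qec_gadget} (only unitary gates and RESETs are available, so every classical-feedback step must be replaced by a coherent controlled operation followed by an ancilla RESET).

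First, I would handle the transversal Clifford gates. Because the Steane code is a CSS code built from the Hamming $[7,4,3]$ code, the Clifford generators $H$, $S$, and $\text{CNOT}$ act transversally at the physical level, so implementing $G^{\otimes 7}$ realizes the corresponding logical gate. A transversal gate automatically satisfies Eq.~\eqref{eq:transvers_gate_def}: i.i.d.\ single-qubit noise channels commute through a tensor product of single-qubit unitaries, and the strength accumulates additively, so the identity follows for $m=1$ and extends to all $m$ by induction using Eq.~\eqref{eqs:m-elevl channel}. For RESET, I would apply the level-$(m{-}1)$ RESET to each of the seven physical qubits of a block to produce $|0\rangle^{\otimes 7}$, then apply a Clifford encoding circuit followed by a measurement-free verification implemented as a coherent comparison with an auxiliary block whose ancillas are discarded via RESET, as in \cite{boykin2010algorithms, heussen2024measurement}; this prevents correlated fault paths from propagating out of the preparation gadget.

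To complete a computationally universal set, the non-Clifford $T$ gate must also be fault-tolerantly implemented. Since $T$ is not transversal on the Steane code, I would use magic-state injection via measurement-free gate teleportation: prepare $|T\rangle = T|+\rangle$ offline (using only Cliffords and RESETs, together with a measurement-free distillation round), entangle it with the target via a Clifford circuit, apply the required Clifford correction coherently as a controlled operation, and RESET the ancilla. The QEC gadget is then built from standard Steane-style syndrome extraction: ancilla blocks prepared in logical $|0\rangle$ and $|+\rangle$, transversal $\text{CNOT}$s copying the $X$- and $Z$-syndrome information into them, a coherent lookup-table unitary performing the Pauli correction, and a final ancilla RESET. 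The quadratic error suppression demanded by Definition~\ref{dfn:qec_gadget} follows from the distance-$3$ property: a single fault (at rate $\kappa'$ in the input noise or $\kappa$ among the $|C|$ gadget locations) is correctable, so an uncorrectable logical error requires at least two independent faults, giving a contribution bounded by $c(\kappa'^2+\kappa^2)$ for a constant $c$ that depends only on the number of fault locations in the Steane QEC circuit.

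The main obstacle is not the high-level construction, which is textbook, but verifying that every measurement-plus-feedback step, once replaced by a coherent-unitary-plus-RESET, still obeys the strict algebraic identities of Definitions~\ref{efn:transv_op} and~\ref{dfn:qec_gadget}. Concretely, one must show that a single fault inside the coherent correction block cannot become a logical error, and that the RESET of the syndrome ancilla does not coherently leak information back into the data—both of which require a careful fault-path bookkeeping essentially identical to that of \cite{aliferis2005quantum}, adapted to the measurement-free setting of \cite{boykin2010algorithms, heussen2024measurement}. My proof would therefore invoke those constructions directly to supply the circuit implementations and check only that the resulting effective channels match the algebraic form required by Eq.~\eqref{eq:transvers_gate_def} and Definition~\ref{dfn:qec_gadget}, rather than reprove the full threshold-style analysis from scratch.
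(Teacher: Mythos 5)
Your proposal is correct and follows essentially the same route as the paper's proof: rely on transversality of $H$, $S$, and CNOT for the Steane code, invoke the measurement-free fault-tolerant $T$-gate (and Toffoli) construction of \cite{boykin2010algorithms}, and invoke the measurement-free RESET and QEC gadget of \cite{heussen2024measurement}. The paper's proof is terser (it delegates essentially everything to the cited references), whereas you additionally sketch the fault-path bookkeeping needed to see that the measurement-free replacements satisfy the algebraic forms in Definitions~\ref{efn:transv_op} and~\ref{dfn:qec_gadget}; this is a useful elaboration but not a different argument.
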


\begin{proof}Consider a universal set of gates consisting of the $S$-gate, Hadamard gate, CNOT, $T$, and Toffoli gate. For the $[[7,1,3]]$ Steane code, the Hadamard, $S$-gate, and CNOT gates are transversal, i.e., they can be implemented as a product of single-qubit or pairwise two-qubit operations \cite{zeng2011transversality}. As a result, Eq.~\eqref{eq:transvers_gate_def} is satisfied automatically. A scheme for devising a fault-tolerant $T$-gate and Toffoli gate is provided in Ref.~\cite{boykin2010algorithms}. The measurement-free RESET and QEC gadget are presented in Ref.~\cite{heussen2024measurement}.
\end{proof}

Using the notation and definitions introduced above, we can proceed to the proof of Theorem~\ref{thm:threshold}, which constitutes the main result of this section. The proof employs similar ideas to those in previous works \cite{gottesman2000fault, aliferis2005quantum}, but explicitly avoids using mid-circuit measurements.

\begin{proof} Consider $[[7,1,3]]$ Steane code and a family noiseless concatenated logical circuit $\mathcal U$ consisting of two parts: (i) initialization of each $L$-level logical qubit $q$
in 
$|0\rangle$ state using $L$-level encoding operation $\mathcal E^L_q$ and (ii) a sequence of $L$-level fault-tolerant logical operations ${\mathcal O}^L_k$ including logical unitary gates and logical RESETs.
Part (ii) can be divided into $N_L\geq 0$ layers. The initialization and each operation in turn is represented as a combination of $(L-1)$-level logical operations forming $N_{L-1}$ layers, and so on. As a result, for any level of concatenation, there exists a circuit representation that implements the target unitary operation,
\be\label{eq:noiseless_ft_circ}
\begin{split}
\forall m\in[L], \quad  \exists\, D_m\geq 0, d_{mr}\geq 1,\, &M^m_r\in \mathbb Z_S^{d_{m r}}: \qquad \\
&\rho_{\rm out} :=\mathcal U(|0\>\<0|^{\otimes N}) \equiv \prod_{r=1}^{D_m} \left(\prod_{k\in M^m_r} {\mathcal O}^{(m)}_k \right) \prod_{q\in Q_m}  {\mathcal E}^{(m)}_q (|0\>\<0|^{\otimes N}),
\end{split}
\ee
where  $Q_m = [N'n_B^{L-m}]$ is the list of $m$-level logical qubits, ${\mathcal O}^{(m)}_k$ are logical gates being $m$-level extension of physical gates ${\mathcal O}^{(0)}_k$ that satisfy the recurrence 
relation in Eq.~\eqref{eq:log_operations}, ${\mathcal E}^{(m)}_q$ is $m$-level encoding operation for a logical zero state,
$M^m_r$ is the list of parallel $m$-level logical gates indices for layer $r$, $D_m$ is the number of logical gate layers, $d_{mr}$ is the number of gats in layer $r$, $S$ is the total set of fault-tolerant operations including RESET. We assume that $M^L_r$ exactly matches the decomposition of the target noiseless circuit we want to simulate with the noisy circuit.

Next, consider a family of noisy \textit{implementations} of the process in Eq.~\eqref{eq:noiseless_ft_circ}. For $m$-level implementation, we conside the output and the process
\be\label{eq:noisy_ft_circ}
\begin{split}
\tilde \rho^{(m)}_{\rm out} := \mathcal U_m (|0\>\<0|^{\otimes N}) \equiv \prod_{r=1}^{D_m} \left(\prod_{q\in Q_m} \tilde {\mathcal C}^{(m)}_q \prod_{k\in \tilde M^m_r} \tilde {\mathcal O}^{(m)}_k \right) \prod_{q\in Q_m}  \tilde {\mathcal E}^{(m)}_q (|0\>\<0|^{\otimes N}),
\end{split}
\ee
where $\tilde {\mathcal C}^{(m)}_q$ are noisy QEC gadgets (see Definition~\ref{dfn:qec_gadget}) and $\tilde {\mathcal O}^{(m)}_k$ are noisy $m$-level logical operations that satisfy the recurrence 
relation
\be\label{eq:O-noisy-def}
\tilde {\mathcal O}^{(m)}_j := \prod_{i\in O_j} {\mathcal N}^{(m-1,\kappa_m)} {\mathcal O}^{(m-1)}_i, \qquad \mathcal O^m_j := \prod_{i\in O_j} {\mathcal O}^{(m-1)}_i,
\ee
where ${\mathcal N}^{(m,\kappa_m)}$ denotes an unspecified noise channel (see Notation~\ref{not:noise_channel}) of $m$-dependent noise strength $\kappa_m\geq0$. 
Our goal is to find proper strength values $\kappa_m$ and sets $\tilde M_r^m$ such that we can relate $\rho^{(m)}_{\rm out}$ to $\rho^{(0)}_{\rm out}$ in Eq.~\eqref{eq:noisy_ft_circ} and to the noiseless circuit output $\rho_{\rm out}$ in Eq.~\eqref{eq:noiseless_ft_circ}. 

Let us start our analysis at concatenation level $m=1$, where we set $\kappa_1 \equiv \kappa$ to be the physical noise strength. The product of noisy fault-tolerant operations in can be represented as a combination of noisy physical operations and satisfies
\be\label{eq:ps9edu}
\tilde {\mathcal O}^1_j\mathcal P^1 = \left(\prod_{i\in O_j} {\mathcal N}^{(0,\kappa)} {\mathcal O}^{(0)}_i\right)\mathcal P^1 = {\mathcal N}^{(0,N_O\kappa)}\left(\prod_{i\in O_j} {\mathcal O}^{(0)}_i\right)\mathcal P^1  = {\mathcal N}^{(0,N_O\kappa)} {\mathcal O}^1_j\mathcal P^1 = {\mathcal N}^{(0,N_O\kappa)}\mathcal P^1 {\mathcal O}^1_j,
\ee
where ${\mathcal O}_j^1 \equiv \prod_{i\in O_j} {\mathcal O}^{(0)}_i$ is  noiseless 1-level logical operation, $O_j$ is the list of all gates in the logical operation ${\mathcal O}_j^1$, and $N_O := \sup_j |O_j|$ is the largest number of gates in any such logical operation. Here, in the second transformation, we used the property of operation fault-tolerance (see Definition~\ref{efn:transv_op}). Similar statement can be done about the noisy RESET operations
\be
\begin{split}
\prod_{q\in Q_1}\mathcal {\tilde E}^{(1)}_q (|0\>\<0|^{\otimes N}) &:= \prod_{q\in Q_1}\prod_{i\in E_q} \mathcal N^{(0,\kappa)} \mathcal O^{(0)}_i(|0\>\<0|^{\otimes N}) \\
&= \mathcal N^{(0,N_O\kappa)}\prod_{q\in Q_1}\prod_{i\in E_q} \mathcal O^{(0)}_i (|0\>\<0|^{\otimes N}) = \mathcal N^{(0,N_O\kappa)} \mathcal P^1 \prod_{q\in Q_1}\mathcal E^{(1)}_q (|0\>\<0|^{\otimes N})
\end{split}
\ee
where $\mathcal E^{(1)}_q$ is the noiseless 1-level encoding operation and $E_q$ is the list of all gates in $\mathcal E^{(1)}_q$. The last transition is justified by the fact that noiseless RESET operation $\mathcal E^{(1)}_q$ a $m$-level pure logical state $|0\>\<0|_q$ satisfying $|0\>\<0|_q = \mathcal P^1(|0\>\<0|_q)$. Next, the corrected $m$-level logical operations satisfy
\be
\begin{split}
\prod_{q\in Q_1} \mathcal{\tilde C}^1_q \prod_{j\in \tilde M^1_r} \mathcal {\tilde O}^1_i\, \mathcal N^{(0,N_O\kappa)}\mathcal P^1 & = \prod_{q\in Q_1} \mathcal{\tilde C}^1_q\, \mathcal N^{(0,2N_O\kappa)}\mathcal P^1 \prod_{k\in \tilde M^1_r} \mathcal {O}^{(1)}_i \\
&=  \mathcal N^{(0,N_O\kappa)} \mathcal P^1\mathcal N^{(1,c(4N^2_O+1)\kappa^2)}
 \prod_{k\in \tilde M^1_r} \mathcal O^{(1)}_k
\end{split}
\ee
This expression leads to the expression for the noisy circuit channel
\be\label{eq:soidcp78}
\tilde \rho^2_{\rm out} = \mathcal N^{(0,N_O\kappa)} \prod_{r=1}^{N_1} \left( \prod_{k\in \tilde M^1_r} \mathcal N^{(1,c(4N^2_O+1)\kappa^2)}\mathcal O^{(1)}_k \right)\prod_{q\in Q_1}\mathcal E_q^{(1)}(|0\>\<0|^{\otimes N})
\ee
For any desirable set of layer instructions $\{\tilde M_r^2\}$, there exist number of parallel layers $D_2<D_1$ and a set of instructions $\{\tilde M^1_r\}$ such that
 \be\label{eqs:0ipdojc}
\prod_{r=1}^{D_1} \left( \prod_{k\in \tilde M^1_r} \mathcal O^{(1)}_k \right) \prod_{q\in Q_1} \mathcal E_q^{(1)} = \prod_{r=1}^{D_2} \left( \prod_{q\in Q_2} {\mathcal C}^{(2)}_q\prod_{k\in \tilde M^2_r} \mathcal O^{(2)}_k \right) \prod_{q\in Q_2} \mathcal E^{(2)}_q 
 \ee
To see this, we can just also decompose $\mathcal E^{(2)}_q = \left(\prod_{i\in E_q} \mathcal O^{(1)}_i\right)\prod_{q\in Q_1} \mathcal E_q^{(1)}$ and $\mathcal O^{(2)}_j=\sum_{i\in O_j} \mathcal O^{(1)}_i$ and combine all gates in the new set of instructions. Using equivalence in Eq.~\eqref{eqs:0ipdojc}, the expression in Eq.~\eqref{eq:soidcp78} becomes
 \be\label{eqs:s09dc9ikmm,m}
\begin{split}
\tilde {\mathcal U}_1(|0\>\<0|^{\otimes N})=\mathcal N^{(0,N_O\kappa)} \prod_{r=1}^{N_2} \left(\prod_{q\in Q_2} \tilde {\mathcal C}^{(2)}_q \prod_{k\in \tilde M^2_r} \tilde {\mathcal O}^{(2)}_k \right) \prod_{q\in Q_2}  \tilde {\mathcal R}^{(2)}_q(|0\>\<0|^{\otimes N})= \mathcal N^{(0,N_O\kappa)} \tilde{\mathcal U}_2(|0\>\<0|^{\otimes N})
\end{split}
\ee
where $\tilde {\mathcal O}^2_k$ satisfies Eq.~\eqref{eq:O-noisy-def} with $\kappa_2 = c(4N^2_O+1)\kappa^2$. Repeating steps in Eqs.~\eqref{eq:ps9edu}-\eqref{eqs:s09dc9ikmm,m} replacing $1\to m$ and $2\to m+1$, we get the recurrence relation
\be
\rho^{(m)}_{\rm out}:=\tilde {\mathcal U}_{m}(|0\>\<0|^{\otimes N}) = \mathcal N^{(m-1,\kappa_{m})}\tilde {\mathcal U}_{m+1}(|0\>\<0|^{\otimes N}) = \mathcal N^{(m-1,\kappa_{m})}\rho^{(m+1)}_{\rm out}.
\ee
for $\kappa_m := N_O\mu^{-1} (\mu\kappa)^{2^{m-1}}$ for $m>1$ and $\mu = c(4N_O^2 + 1)$. This recurrence relation has a solution
\be
\tilde {\mathcal U}_{1}(|0\>\<0|^{\otimes N}) = \left(\prod_{m\in[L]} \mathcal N^{(m-1,\kappa_m)}\right)\tilde {\mathcal U}_{L-1}(|0\>\<0|^{\otimes N})
\ee
Next, we choose the set $\{\tilde M^{L-1}_r\}$ such that
 \be
\prod_{r=1}^{N_{L-1}} \left( \prod_{k\in \tilde M^{L-1}_r} \mathcal O^{(L-1)}_k \right) \prod_{q\in Q_L} \mathcal E^{(1)}_q  = \prod_{r=1}^{D} \left(\prod_{k\in M^L_r} \mathcal O^{(L)}_k \right) \prod_{q\in Q_L} \mathcal E^L_q 
 \ee
Using the fault-tolerance of the operations, we get
\be
\tilde{\mathcal U}_L = \mathcal N^{(L,G\kappa_{L})}{\mathcal U},
\ee
where $G = N'(D'+1)$ is the total number of gates in the operations in the $L$-level representation of the circuit $\mathcal U$, including RESETs and excluding QEC gadgets (since they do not affect $L$-level logic errors).
\be
\tilde \rho^1_{\rm out} = \left(\prod_{m\in[L-1]} \mathcal N^{(m,\kappa_m)}\right)\mathcal N^{(L,G\kappa_{L})}\rho_{\rm out},
\ee
where $G = N'D'$ is the total number of gates in the operations in $L$-level representation of the circuit. This expression connects the output physically implementable circuit $\tilde \rho^1_{\rm out}$ in Eq.~\eqref{eq:noisy_ft_circ} and the output of the logical circuit $\rho_{\rm out}$ in Eq.~\eqref{eq:noiseless_ft_circ}. Without the loss of generality, we consider individual qubit measurements in z-basis, i.e.
\be
z_q = \Tr (Z_q^{(L)}\tilde \rho_{\rm out}^1)
\ee
wjere $Z^{(L)}$ is $L$-level Pauli-Z operator applied to qubit $q$.

 For concatenated Steane code, the errors for levels $m\leq L-1$ can be mitigated by taking a majority vote over logical block for each logical qubit. Thus, the probability that the channel $\mathcal N^{(m,\kappa_m)}$ will affect the value of a logical level-$L$ qubit measurement is given by 
\be
\epsilon_m = O\left(\kappa_m^{\frac 12 (n^{L-m}+1)}\right) = O\left(\bigl[N_O\mu^{-1} (\mu\kappa)^{2^{m-1}}\bigl]^{\frac 12 (n^{L-m}+1)}\right) = O\left((\mu\kappa)^{2^{L-2}}\right)
\ee
In the case $\kappa<\kappa_0:=\mu^{-1}$, all these errors are marginal compared to the logical error from $L$th level of concatenation that can be expressed as
\be
\epsilon = O(N'D'\kappa_L) = O(N'D'(\mu\kappa)^{2^L}).
\ee
Thus, the number of concatenation levels to achieve the target error $\epsilon$ is
\be
L = O(\log\log N'D'/\epsilon).
\ee
Taking into account that each block requires $n_B$ qubuts and each gadget in Proposition~\ref{prop:existense_steane_code} requires no more than $Q$ layers of gates, we get
\be
N/N' = n_B^L = O(\log(D'N'/\epsilon)),\qquad D/D' = Q^L = O(\log(D'N'/\epsilon)).
\ee
This expression concludes our proof.
\end{proof}

\subsection{Main result}
\label{sec:main_result}

In this section we prove our main results by combining results from previous sections. 

\begin{thrm}[Formal]\label{thm:main}
Consider quantum circuits with noise generated by a general contracting channel, $\mathcal N$, characterized by $\kappa =  \|\mathcal{N} - \mathcal{I}\|_{\diamond}$ and a fixed point $\sigma^*$ with purity parameter $\eta = \sqrt{2\Tr[(\sigma^*)^2] - 1}$. Then, for any $\eta>0$, $\epsilon>0$, and integer $d\geq 1$, there exists $\kappa = O(\eta^{\mu/d})$ for any $\mu > \mu_0 = \log 3/\log(3/2)$, such that an $N$-qubit, $d$-dimensional noisy circuit of depth $D$ can simulate the output of any $N'$-qubit, $d$-dimensional noiseless circuit of depth $D'$ with an error of at most $\epsilon$ and overhead
\be
\begin{split}
& N/N' =   O\left((\kappa\eta)^{-\mu d/(\mu+d)}{\rm polylog}(\kappa\eta^{-\mu/d}) {\rm log}(N'D'/\epsilon)\right),\\
& D/D' =   O\left((\kappa\eta)^{-\mu/(\mu+d)}{\rm log}(N'D'/\epsilon)\right).
\end{split}
\ee
\end{thrm}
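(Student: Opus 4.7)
The plan is to stitch together the two main technical ingredients already developed in the preceding subsections: Corollary~\ref{thm:reset}, which converts a purely unitary noisy circuit into a noisy circuit augmented with RESET operations at the price of a quantified (but larger) effective noise strength $\kappa'$, and Theorem~\ref{thm:threshold}, which says that once $\kappa'$ lies below a universal constant threshold $\kappa_0$, the noisy-circuit-with-RESETs can faithfully simulate a fully noiseless circuit at only logarithmic qubit and depth overhead via concatenated Steane coding. The proof is essentially the composition of these two reductions, with a careful bookkeeping of how the two overheads multiply.

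First I would fix $\mu = \mu_0 + \delta$ for arbitrary $\delta > 0$ as in Corollary~\ref{thm:reset}, and assume $\kappa \leq c\,\eta^{\mu/d}$ for a sufficiently small constant $c > 0$. Applying the corollary gives an intermediate noisy-circuit-with-RESETs acting on $N''$ qubits over depth $D''$, subject to an effective single-qubit noise channel $\mathcal{N}'$ of strength
\be
\kappa' \;=\; \tilde O\!\left((\kappa\eta^{-\mu/d})^{d/(\mu+d)}\right),
\ee
with intermediate overheads
\be
N/N'' \;=\; O\!\left((\kappa\eta)^{-\mu d/(\mu+d)}\,\mathrm{polylog}(\kappa\eta^{-\mu/d})\right), \qquad D/D'' \;=\; O\!\left((\kappa\eta)^{-\mu/(\mu+d)}\right).
\ee

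Next I would observe that by taking the constant $c$ small enough (equivalently, tightening the hidden prefactor in $\kappa = O(\eta^{\mu/d})$), the expression $\kappa\eta^{-\mu/d}$ can be forced below any prescribed value, so $\kappa'$ falls below the universal threshold $\kappa_0$ supplied by Theorem~\ref{thm:threshold}. Applying that threshold theorem to the RESET-enabled circuit then yields a simulation of any $N'$-qubit, depth-$D'$ noiseless unitary circuit up to error $\epsilon$, with
\be
N''/N' \;=\; O(\log(N'D'/\epsilon)), \qquad D''/D' \;=\; O(\log(N'D'/\epsilon)).
\ee
Here I would also briefly check that the noise produced by Corollary~\ref{thm:reset} is of the general CPTP type covered by the threshold statement (the latter only requires a bound on $\|\mathcal{N}'-\mathcal{I}\|_{\diamond}$), which is immediate since $\kappa'$ is defined via the diamond norm.

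Finally, multiplying the two overhead factors produces the claimed bounds
\be
N/N' \;=\; (N/N'')(N''/N') \;=\; O\!\left((\kappa\eta)^{-\mu d/(\mu+d)}\,\mathrm{polylog}(\kappa\eta^{-\mu/d})\,\log(N'D'/\epsilon)\right),
\ee
\be
D/D' \;=\; (D/D'')(D''/D') \;=\; O\!\left((\kappa\eta)^{-\mu/(\mu+d)}\,\log(N'D'/\epsilon)\right),
\ee
completing the proof. The only subtle step is the first one, namely guaranteeing that the scaling hypothesis $\kappa = O(\eta^{\mu/d})$ is simultaneously tight enough (i) to make the RESET construction of Corollary~\ref{thm:reset} applicable with the parameters $g$ and $\delta$ chosen as in Lemma~\ref{lem:noisy_lemma} and Lemma~\ref{lem:cqc_circ}, and (ii) to drive $\kappa'$ below the fault-tolerance threshold $\kappa_0$; but both requirements are satisfied simultaneously by tightening the implicit constant in the $O(\eta^{\mu/d})$ bound, since the map $\kappa \mapsto \kappa'$ is continuous and vanishes as $\kappa \to 0$. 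Everything else is routine composition of overheads.
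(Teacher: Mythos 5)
Your proposal is correct and follows essentially the same route as the paper: combine Corollary~\ref{thm:reset} (to trade the purely unitary noisy circuit for a RESET-enabled one with effective noise strength $\kappa'$) with Theorem~\ref{thm:threshold} (to drive the logical error to $\epsilon$ once $\kappa'$ is below the constant threshold), and multiply the respective overheads. The extra care you take to note that $\kappa\mapsto\kappa'$ vanishes as $\kappa\to 0$, so that tightening the implicit constant in $\kappa=O(\eta^{\mu/d})$ simultaneously satisfies the hypotheses of Lemma~\ref{lem:cqc_circ} and pushes $\kappa'<\kappa_0$, is a useful clarification but does not change the underlying argument.
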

\begin{proof}The proof of this theorem follows directly by combining Corollary~\ref{thm:reset} and Theorem~\ref{thm:threshold}.  Corollary~\ref{thm:reset} implies that there exists a local noisy  unitary circuit with noise strength $\kappa>0$ that can emulate any given local RESET-enabled noisy unitary circuit with a noise strength $\kappa' > 0$ in Eq.~\eqref{eq:effective_kappa}. Then, by invoking Theorem~\ref{thm:threshold}, we demonstrate that this RESET-enabled circuit can be employed to simulate the target unitary circuit up to error $\epsilon$ with an overhead of $O\left( \log \left( D' N'/\epsilon \right) \right)$, provided that $\kappa'$ is below a fixed constant threshold. According to the expression in Eq.~\eqref{eq:effective_kappa}, this condition can be satisfied for sufficiently small values of $\kappa = O(\eta^{\mu/d})$. Then, the total overheads in terms of the qubit count $N/N'$ and circuit depth $D/D'$ are obtained by taking the product of the respective overheads for error correction, given in Eq.~\eqref{eq:thresholN_overhead}, and for the implementation of RESET, as defined in Eq.~\eqref{eq:reset_overhead}. This yields the statement of this theorem. 
\end{proof} 

\section{Average-case complexity}

In this section, we prove our results corresponding to random quantum circuits in the presence of noise. 

\subsection{Algebra of subset distances}

We represent a noisy Haar random circuit as a sequence of gate operations, where each gate is described by a superoperator denoted as $\mathcal{C}_k(\cdot)$, which acts on a specific subset of qudits $\Omega_k$ and can be expressed in the following form:
\be\label{eq:noise_circle}
\mathcal C_k = \mathcal N_k\circ\mathcal U_k, \qquad \mathcal U_k(\rho) := U_k \rho U_k^\dag, \qquad \mathcal N_k(\rho) := \sum_{i=1}^{q^{2|\Omega_k|}-1} K_k^i \rho K_k^{i\dag},
\ee
where $U_k \sim \mathcal{B}$ is a unitary transformation acting on qudits in $\Omega_k$, sampled from the Haar distribution $\mathcal{B}$, $K_k^i$ are Kraus operators satisfying $\sum_i K_k^{i\dagger} K_k^i = I$, and $|\Omega_k|$ represents the number of qudits in the set $\Omega_k$. Following standard approximations, we assume that operators $K_k^i$ are independent of the gates $U_k$.

Then, the state of the qubits after applying $k$ gates takes the form
\be\label{eq:dth_state}
\rho_k := \mathcal{C}_k \circ \cdots \circ \mathcal{C}_1(\rho_{k=0}),
\ee
where $\rho_{k=0}$ denotes the input state.

Given an arbitrary pair of states $\rho_k$ and $\sigma_k$ as defined in Eq.~\eqref{eq:dth_state}, we focus on their \textit{subset distances}. First, we define the marginal trace distance for a subset $G$ as
\be\label{eq:trace_distance}
T_{G,k} := \frac{1}{2} \mathbb{E}_{\mathcal{B}} \bigl\| \Tr_{F \setminus G}(\rho_k - \sigma_k) \bigr\|_1 \equiv \frac{1}{2} \mathbb{E}_{\mathcal{B}} \Tr_{G} \bigl| \Tr_{F \setminus G}(\rho_k - \sigma_k) \bigr|,
\ee
where $F$ denotes the set of all qubits, $\Tr_A$ represents the partial trace over the subset of qubits $A$, $\mathbb{E}_{\mathcal{B}}$ denotes the gate average over the Haar distribution, $|O| := \sqrt{O^\dagger O}$ is the absolute value of the operator, and $\|M\|_1 := \Tr \sqrt{M^\dagger M}$ is the trace norm. The set-theoretic notation is illustrated in Fig.~\ref{fig:topology}.

\begin{figure}[t!]
    \centering
    \includegraphics[width=1\textwidth]{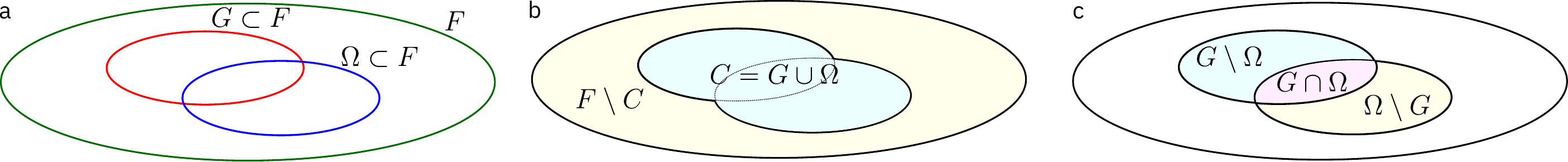}
    \caption{\textbf{Set theory notations.} Venn diagrams illustrating different subsets and their intersections. Panel a shows the complete set $F$ and the subsets $G \subset F$ and $\Omega \subset F$. Panel b shows the union of two sets $C = G \cup \Omega$ (blue) and the difference of two sets $F \setminus C$ (yellow). Panel c, along with the set differences $G \setminus \Omega$ (blue) and $\Omega \setminus G$ (yellow), also illustrates the intersection $G \cap \Omega$ of the subsets $G$ and $\Omega$ (red).}
    \label{fig:topology}
\end{figure}

We also define the marginal Hilbert-Schmidt (HS) distance as
\be\label{eq:HS_distance}
S_{G,k} := \frac{1}{2} \mathbb{E}_{\mathcal{B}} \bigl\| \Tr_{F \setminus G}(\rho_k - \sigma_k) \bigr\|^2_2 \equiv \frac{1}{2} \mathbb{E}_{\mathcal{B}} \Tr_{G} \bigl[ \Tr_{F \setminus G}(\rho_k - \sigma_k)^2 \bigr],
\ee
where $\|M\|_2 = \sqrt{\Tr M^\dagger M}$ is the Frobenius norm. The trace distance and the HS distance are related by the inequalities \cite{golub2013matrix}
\be\label{TS_ineq}
S_{G,k} \leq T_{G,k}, \qquad T^2_{G,k} \leq N_G S_{G,k},
\ee
where $N_G := q^{|G|}$ is the Hilbert space dimension of the subset $G$. The first inequality follows from the monotonicity of the Schatten norms \cite{raissouli2010various}, while the second inequality results from the Cauchy-Schwarz inequality. By default, $S_{\emptyset,k} = T_{\emptyset,k} = 0$, where $\emptyset$ is the empty set of qubits.

In the derivations, we use the HS distance as a convenient measure that can be averaged over the Haar distribution using the well-known expression for the fourth moments of random matrix elements. One can then identify simple linear transformations that describe the evolution of these distances. We summarize these transformations in the following Proposition.

\begin{prop}[Evolution of Hilbert-Schmidt distances] Suppose the $d$-th gate is a noisy Haar random gate in the form of Eq.~\eqref{eq:noise_circle} and applied to a subset $\Omega_k \equiv \Omega$. For any subset $G\subset F$, the HS distance in Eq.~\eqref{eq:HS_distance} satisfies
\be\label{lem_step_statement}
S_{G,k} = c_1\,S_{G\setminus\Omega,k-1}+c_2\,S_{G\cup\Omega,k-1},
\ee
where the coefficients are
\be\label{eq:c1c2def}
\begin{split}
c_1 = \frac{N_{G\cap\Omega}(N^2_{\Omega\setminus G}u-v)}{N_\Omega^2-1},\quad 
c_2 = \frac{N_{\Omega\setminus G}(N^2_{G\cap\Omega}v-u)}{N_\Omega^2-1},
\end{split}
\ee
 and noise parameters are
\be\label{eq:uv_params}
\begin{split}
&u = \frac{1}{N_{\Omega}N_{\Omega\setminus G}}\Tr_{G\cap\Omega}  \left|\Tr_{\Omega\setminus G} \sum_iK_k^i K_k^{i\dag}\right|^2,\\
&v = \frac{1}{N_\Omega N_{ G\cap\Omega}}\sum_{ij}\Tr_{\Omega\setminus G}  \left|\Tr_{\Omega\cap G} K_k^i K_k^{j\dag}\right|^2.
\end{split}
\ee
\end{prop}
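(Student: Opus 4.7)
The plan is to use the two-copy swap trick together with the Haar second-moment formula on $U(N_\Omega)$, and then reorganize all tensor factors along the four-piece partition $F = (G\cap\Omega)\sqcup(\Omega\setminus G)\sqcup(G\setminus\Omega)\sqcup(F\setminus(G\cup\Omega))$. Setting $\delta=\rho_{k-1}-\sigma_{k-1}$ and using $\|M\|_2^2 = \Tr[(M\otimes M)\mathbb{F}]$ on two copies (primes label the second copy), I first rewrite
\[
S_{G,k} = \tfrac{1}{2}\Tr\!\left[\,\mathbb{E}_{\mathcal{B}}\bigl(\mathcal{C}_k(\delta)\bigr)^{\otimes 2}\,\Pi_{GG'}\,\right],\qquad \Pi_{GG'}:=\mathbb{F}_{GG'}\otimes I_{(F\setminus G)(F\setminus G')}.
\]
Because the Kraus operators $K^i_k$ are independent of $U_k$, the noise $\mathcal{N}_k^{\otimes 2}$ pulls out of the Haar expectation, reducing the problem to evaluating $\mathbb{E}_{\mathcal{B}}[U_k^{\otimes 2}(\delta\otimes\delta)(U_k^\dagger)^{\otimes 2}]$ on the subsystem $\Omega\Omega'$.

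The standard two-design integral then returns $\alpha_\delta\, I_{\Omega\Omega'}+\beta_\delta\,\mathbb{F}_{\Omega\Omega'}$, where $\alpha_\delta,\beta_\delta$ are operators on $(F\setminus\Omega)(F\setminus\Omega)'$ given by an explicit linear combination (with denominator $N_\Omega^2-1$) of the two reductions $\mathcal{A}:=\Tr_{\Omega\Omega'}(\delta\otimes\delta)=(\Tr_\Omega\delta)^{\otimes 2}$ and $\mathcal{B}:=\Tr_{\Omega\Omega'}[\mathbb{F}_{\Omega\Omega'}(\delta\otimes\delta)]$. The key algebraic identity is the factorization of both swaps along the partition,
\[
\mathbb{F}_{GG'}=\mathbb{F}_{G\cap\Omega}\otimes\mathbb{F}_{G\setminus\Omega},\qquad \mathbb{F}_{\Omega\Omega'}=\mathbb{F}_{G\cap\Omega}\otimes\mathbb{F}_{\Omega\setminus G},
\]
so that the trace separates into a product of a factor on $\Omega\Omega'$ (involving the noise and the swap $\mathbb{F}_{G\cap\Omega}$) and a factor on $(F\setminus\Omega)(F\setminus\Omega)'$ (involving $\alpha_\delta$ or $\beta_\delta$ and the swap $\mathbb{F}_{G\setminus\Omega}$).

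For the outer factor I would verify that $\Tr[\mathcal{A}(\mathbb{F}_{G\setminus\Omega}\otimes I)]=2S_{G\setminus\Omega,k-1}$ (by recognizing $(\Tr_\Omega\delta)^{\otimes 2}$ as the two-copy kernel of the reduced state on $G\setminus\Omega$) and $\Tr[\mathcal{B}(\mathbb{F}_{G\setminus\Omega}\otimes I)]=2S_{G\cup\Omega,k-1}$ (using $\mathbb{F}_{\Omega\Omega'}\cdot\mathbb{F}_{G\setminus\Omega}=\mathbb{F}_{G\cup\Omega}$ since $\Omega$ and $G\setminus\Omega$ are disjoint). For the inner factor on $\Omega\Omega'$, applying $\mathcal{N}_k^{\otimes 2}$ to $I_{\Omega\Omega'}$ yields $(\sum_iK^iK^{i\dagger})\otimes(\sum_jK^jK^{j\dagger})$, and contracting with $\mathbb{F}_{G\cap\Omega}\otimes I_{\Omega\setminus G}$ produces $\Tr_{G\cap\Omega}|\Tr_{\Omega\setminus G}\sum_iK^iK^{i\dagger}|^2=N_\Omega N_{\Omega\setminus G}\,u$; applying $\mathcal{N}_k^{\otimes 2}$ to $\mathbb{F}_{\Omega\Omega'}$ yields $\sum_{ij}\mathbb{F}_{\Omega\Omega'}(K^jK^{i\dagger}\otimes K^iK^{j\dagger})$, whose analogous contraction gives $\sum_{ij}\Tr_{\Omega\setminus G}|\Tr_{G\cap\Omega}K^iK^{j\dagger}|^2=N_\Omega N_{G\cap\Omega}\,v$.

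Assembling the four pieces and simplifying via $N_\Omega=N_{G\cap\Omega}N_{\Omega\setminus G}$ should reproduce the claimed $c_1,c_2$ in Eq.~\eqref{eq:c1c2def}. The main obstacle will be the bookkeeping in this last step, since both $\mathcal{A}$ and $\mathcal{B}$ enter $\alpha_\delta,\beta_\delta$ with opposite signs and must be cross-multiplied with both $u$- and $v$-type Kraus contractions so that the asymmetric prefactors $N_{G\cap\Omega}N_{\Omega\setminus G}^2$ (multiplying $u$ in $c_1$) and $N_{\Omega\setminus G}N_{G\cap\Omega}^2$ (multiplying $v$ in $c_2$) emerge correctly; the observation that makes everything clean is the swap identity $\mathbb{F}_{\Omega\Omega'}\cdot\mathbb{F}_{G\setminus\Omega}=\mathbb{F}_{G\cup\Omega}$, which fuses the residual swap pattern into exactly the two target subsets $G\setminus\Omega$ and $G\cup\Omega$.
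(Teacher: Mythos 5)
Your proposal is correct and recovers the claimed $c_1,c_2$. It rests on the same core ingredients as the paper's proof—the Haar second moment on the $\Omega$ subsystem and the four-piece partition $F = (G\cap\Omega)\sqcup(\Omega\setminus G)\sqcup(G\setminus\Omega)\sqcup(F\setminus(G\cup\Omega))$—but packages the Weingarten calculus differently. The paper works in explicit index notation: it introduces basis labels $\xi,A,\alpha,a$ for the four pieces, writes $S_{G,k}$ as a sum of products of matrix elements $U_{\cdot\cdot}, U^*_{\cdot\cdot}, K^i_{\cdot\cdot}$, inserts the fourth-moment formula as a sum of eight Kronecker-delta products over $N_\Omega^2-1$, and then re-identifies the surviving contractions as $u$, $v$, $S_{G\setminus\Omega}$, $S_{G\cup\Omega}$. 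You instead vectorize via the two-copy swap trick $\|M\|_2^2=\Tr[(M\otimes M)\mathbb{F}]$, let the two-design integral return $\alpha_\delta\otimes I_{\Omega\Omega'}+\beta_\delta\otimes\mathbb{F}_{\Omega\Omega'}$, and then exploit the factorizations $\mathbb{F}_{GG'}=\mathbb{F}_{G\cap\Omega}\otimes\mathbb{F}_{G\setminus\Omega}$ and $\mathbb{F}_{\Omega\Omega'}\cdot\mathbb{F}_{G\setminus\Omega}=\mathbb{F}_{G\cup\Omega}$ to split the trace into inner ($\Omega\Omega'$) factors yielding $N_\Omega N_{\Omega\setminus G}u$, $N_\Omega N_{G\cap\Omega}v$, and outer factors yielding $2S_{G\setminus\Omega,k-1}$, $2S_{G\cup\Omega,k-1}$. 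I checked that assembling these with $\alpha_\delta=\tfrac{N_\Omega\mathcal{A}-\mathcal{B}}{N_\Omega(N_\Omega^2-1)}$, $\beta_\delta=\tfrac{N_\Omega\mathcal{B}-\mathcal{A}}{N_\Omega(N_\Omega^2-1)}$ and using $N_\Omega=N_{G\cap\Omega}N_{\Omega\setminus G}$ indeed reproduces the stated $c_1,c_2$, so the "bookkeeping" you postponed does close. Your swap-operator route makes the origin of the two target subsets $G\setminus\Omega$, $G\cup\Omega$ structurally obvious via the fusion identity, whereas the paper's index computation is more elementary but more error-prone; both prove exactly the same statement.
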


\begin{proof} Since the entire proof concerns the effect of a single gate $k$, we can simplify the notation by omitting the index $k$, i.e., setting $K_{k}^i \equiv K^i$, $U_{k} \equiv U$, and $\Omega_{k} \equiv \Omega$. Combining the definition of the HS distance in Eq.~\eqref{eq:HS_distance}, the explicit form of the gate channel in Eq.~\eqref{eq:noise_circle}, and the set-theoretic notation (see Fig.~\ref{fig:topology}), we write
\be\label{eq:explicitSGd}
S_{G,k} = \mathbb{E}_U \Tr_G \Bigl( \Tr_{F \setminus G} K^i U \delta \rho_{k-1} U^\dagger K^{i \dagger} \Bigr)^2,
\ee
where $\delta \rho_k := \rho_k - \sigma_k$, $\mathbb{E}_U$ denotes the average over the gate unitary $U \sim \mathcal{B}_\Omega$, sampled from the Haar distribution over the subset $\Omega$. The sum is taken over $i = 1, \dots, 4^{|\Omega|} - 1$, as in Eq.~\eqref{eq:noise_circle}. Here and for the rest of the proof, we also adopt the ``silent sum'' convention: if an index appears twice in a single term, it implies summation over that index without explicitly writing the summation symbol.

Next let us divide the entire set on four complementary parts, $F\setminus (G\cup \Omega)$, $G\setminus \Omega$, $G\cap\Omega$, and $\Omega\setminus G$ (see Fig.~\ref{fig:topology} for illustration). Using this division, we can introduce two basis states defined on full qubit set $F$ and the subset $G$ respectively,
\be
\begin{split}
&|\xi A\alpha a\> := |\xi\>_{F\setminus (G\cup \Omega)}\otimes |A\>_{G\setminus \Omega}\otimes |\alpha\>_{G\cap\Omega}\otimes|a\>_{\Omega\setminus G},\\
&|\alpha a\> := |\alpha\>_{G\cap\Omega}\otimes|a\>_{\Omega\setminus G}.
\end{split}
\ee
In these expressions, $|\xi\rangle_{F\setminus (G\cup \Omega)}$, $|A\rangle_{G\setminus \Omega}$, $|\alpha\rangle_{G\cap\Omega}$, and $|a\rangle_{\Omega\setminus G}$ represent arbitrary basis states of qubits within the sets $F\setminus (G\cup \Omega)$, $G\setminus \Omega$, $G\cap\Omega$, and $\Omega\setminus G$. Using these basis states, we define the following three matrix elements
\be
\begin{split}
&[\delta\rho]_{\xi A\alpha a,\xi' B\beta b} := \<\xi A\alpha a|\delta\rho_{d-1}|\xi'B\beta b\>,\\
&K^i_{\alpha a,\beta b} := \<\alpha a|K^i|\beta b\>,\\
&U_{\alpha a,\beta b} := \<\alpha a|U|\beta b\>.\\
\end{split}
\ee
These notations allow us explicitly rewrite
Eq.~\eqref{eq:explicitSGd} as
\be\label{eq:SGN_using_matrix_elements}
\begin{split}
S_{G,k} = \mathbb E_U[U_{\beta'b',\beta b}&U^*_{\alpha' a',\alpha a}U_{\mu'm',\mu m}U^*_{\nu'n',\nu n}]\\
&\times K^i_{\gamma' g,\beta' b'}K^{i*}_{\gamma g,\alpha'a'}K^j_{\gamma g',\mu'm'}K^{j*}_{\gamma'g',\nu'n'}[\delta\rho]_{\xi B \beta b,\xi A \alpha a}[\delta\rho]_{\xi'A\mu m,\xi' B \nu n},
\end{split}
\ee
where used the ``silent sum'' convention.

Now we can use the standard expression for the fourth moment of matrix elements in a Haar ensemble
\be\label{eq:4thmoment}
\begin{split}
\mathbb E_U[U_{\beta'b',\beta b}U^*_{\alpha' a',\alpha a}U_{\mu'm',\mu m}U^*_{\nu'n',\nu n}] = \frac{1}{N_\Omega^2-1}\Biggl(&\delta_{\alpha\beta}\delta_{ab}\delta_{\mu\nu}\delta_{mn}\delta_{\alpha'\beta'}\delta_{a'b'}\delta_{\mu'\nu'}\delta_{m'n'}\\
&+\delta_{\alpha\mu}\delta_{am}\delta_{\beta\nu}\delta_{bn}\delta_{\beta'\nu'}\delta_{b'n'}\delta_{\alpha'\mu'}\delta_{a'm'}\\
&-\frac{1}{N_\Omega}\delta_{\alpha\beta}\delta_{ab}\delta_{\mu\nu}\delta_{mn}\delta_{\beta'\nu'}\delta_{b'n'}\delta_{\alpha'\mu'}\delta_{a'm'}\\
&-\frac{1}{N_\Omega}\delta_{\alpha\mu}\delta_{am}\delta_{\beta\nu}\delta_{bn}\delta_{\alpha'\beta'}\delta_{a'b'}\delta_{\mu'\nu'}\delta_{m'n'}
\Bigl),
\end{split}
\ee
where $\delta_{ab}$ is Kronecker delta and $N_\Omega := 2^{|\Omega|}$ is the Hilbert space dimension of the subset $\Omega$. Note that since each symbol is mentioned once in each term, the silent sum does not apply to this expression. Inserting Eq.~\eqref{eq:4thmoment} into Eq.~\eqref{eq:SGN_using_matrix_elements}, we get
\be
\begin{split}
S_{G,k} = \frac{1}{N_\Omega^2-1}\Biggl(&K^i_{\gamma' g,\alpha' a' }K^{i*}_{\gamma g,\alpha'a'}K^j_{\gamma g',\mu'm'}K^{j*}_{\gamma'g',\mu'm'}[\delta\rho]_{\xi B \alpha a,\xi A \alpha a}[\delta\rho]_{\xi'A\mu m,\xi' B \mu m}\\
& + K^i_{\gamma' g, \beta' b}K^{j*}_{\gamma' g',\beta'b'}K^j_{\gamma g',\alpha'a'}K^{i*}_{\gamma g,\alpha'a'}[\delta\rho]_{\xi B \beta b,\xi A \alpha a}[\delta\rho]_{\xi'A\alpha a,\xi' B \beta b}\\
&-\frac{1}{N_\Omega} K^i_{\gamma' g, \beta' b}K^{j*}_{\gamma' g',\beta'b'}K^j_{\gamma g',\alpha'a'}K^{i*}_{\gamma g,\alpha'a'}[\delta\rho]_{\xi B \alpha a,\xi A \alpha a}[\delta\rho]_{\xi'A\mu m,\xi' B \mu m}\\
&-\frac{1}{N_\Omega}K^i_{\gamma' g,\alpha' a' }K^{i*}_{\gamma g,\alpha'a'}K^j_{\gamma g',\mu'm'}K^{j*}_{\gamma'g',\mu'm'}[\delta\rho]_{\xi B \beta b,\xi A \alpha a}[\delta\rho]_{\xi'A\alpha a,\xi' B \beta b}\Biggl).
\end{split}
\ee

These products of matrix elements can be represented as follows,
\be
\begin{split}
S_{G,k+1} = \frac{1}{N_\Omega^2-1}\Biggl(&\Tr_{G\cap\Omega}  \Bigl|\sum_\alpha\Tr_{\Omega\setminus G} K^i K^{i\dag}\Bigl|^2 S_{G\setminus\Omega,k}+\sum_{ij}\Tr_{\Omega\setminus G}  \left|\Tr_{\Omega\cap G} K^i K^{j\dag}\right|^2S_{G\cup\Omega,k}\\
&-\frac{1}{N_\Omega}\sum_{ij}\Tr_{\Omega\setminus G}  \left|\Tr_{\Omega\cap G} K^i K^{j\dag}\right|^2 S_{G\setminus\Omega,k}-\frac{1}{N_\Omega}\Tr_{G\cap\Omega}  \Bigl|\sum_\alpha\Tr_{\Omega\setminus G} K^i K^{i\dag}\Bigl|^2S_{G\cup\Omega,k}\Biggl).
\end{split}
\ee

Next, we use the definitions of the parameters $u$ and $v$ given in Eq.~\eqref{eq:uv_params} to obtain
\be\label{eq:SGdalmostthere}
\begin{split}
S_{G,k+1} = \frac{1}{N_\Omega^2-1}\Biggl(\Bigl[N_{\Omega}N_{\Omega\setminus G}u-N_{ G\cap\Omega}v\Bigl]S_{G\setminus\Omega,k}+\Bigl[N_{\Omega}N_{ G\cap\Omega}v-N_{\Omega\setminus G}u\Bigl]S_{G\cup\Omega,k}\Biggl).
\end{split}
\ee

Finally, we consider that $\Omega \equiv (\Omega\setminus G)\cup (G\cap\Omega)$. Therefore, the dimension of the Hilbert space for the set $\Omega$ is given by
\be
N_\Omega = N_{\Omega\setminus G}N_{ G\cap\Omega}.
\ee
Inserting this expression into Eq.~\eqref{eq:SGdalmostthere}, we obtain the result in Eqs.~\eqref{lem_step_statement}-\eqref{eq:c1c2def}. This last step completes our proof.
\end{proof}

\subsection{Two-qudit gates case, uniform symmetric noise}

Consider the case of a circuit that consists of two-qudit gates and noise model that does not depend on the gate (i.e., uniform) and symmetric in respect of swapping the qubits.
We introduce the following two sets of parameters. The first set is
\be\label{eq:uv_2q_gate}
a = \frac{1}{q^4} \sum_{ij} \bigl| \Tr_\Omega K^i K^{j \dagger} \bigr|^2, \qquad b = \frac{1}{q^2} \Tr_\Omega \Bigl| \sum_i K^i K^{i \dagger} \Bigr|^2.
\ee
Let $\Omega_k = \{m,m'\}$, where $m$ and $m'\neq m$ are indices of two qubits that constitute the set $\Omega_k$. Then we define
\be\label{eq:uv_2q_gate2}
A = \frac{1}{q^2} \sum_{ij} \Tr_m\bigl| \Tr_{m'} K^i K^{j \dagger} \bigr|^2, \qquad B = \frac{1}{q^2} \Tr_m \Bigl| \Tr_{m'}\sum_i K^i K^{i \dagger} \Bigr|^2.
\ee
where $\Tr_{m}$ and $\Tr_{/m'}$ indicate partial traces taken over qudit $m$ and $m'$ of the set $\{m,m'\}$, respectively. Note that $A$ and $B$ are independent on the order of the qubits. Using these parameters, we analyze in detail the rules in Eq.~\eqref{lem_step_statement} for three different cases of alignment between the two-qudit gate subspace $\Omega$ and the subset $G$.

\subsubsection{Case 1: the gate applies within the subset}

Consider $\Omega \in G$, where the gate is applied within the qudit subset $G$. Then Eq.~\eqref{lem_step_statement} leads to
\be\label{eqs:same_set_rule}
S_{G,k} = (\alpha - r) S_{G \setminus \Omega, k-1} + r S_{G, k-1},
\ee
where we define the parameters
\be \label{eq:rule_1}
\alpha := \frac{q^2 a + b}{q^2 + 1}, \qquad r := \frac{q^4 a - 1}{q^4 - 1}.
\ee
Note that for a unitary noise channel we have $a = b = 1$, and therefore, $S_{G,k} = S_{G, k-1}$.

\subsubsection{Case 2: the gate intersects with the subset}

Consider $\Omega_k \cap G \neq \{\emptyset\}$ and $\Omega_k \setminus G \neq \{\emptyset\}$, and let there be two qudits $m$ and $m'$ such that $G \cap \Omega_k = \{m\}$ and $G \setminus \Omega_k = \{m'\}$. Then Eq.~\eqref{lem_step_statement} leads to
\be\label{eq:cross_gate_expr}
S_{G,k} = \frac{1}{2} (\beta + \mu) S_{G \setminus \Omega_k, k-1} + \frac{1}{2} (\beta - \mu) S_{G \cup \Omega_k, k-1},
\ee
where we defined
\be\label{eq:rule_2}
\beta := \frac{q}{q^2 + 1} (A + B), \qquad \mu := \frac{q}{q^2 - 1} (B - A).
\ee
Again, for a unitary noise channel, $A_{mm'} = B_{m'm} = 1$, we get
\be
S_{G,k} = \frac{q^2}{q^2 + 1} \left( S_{G \setminus \Omega_k, k-1} + S_{G \cup \Omega_k, k-1} \right).
\ee
This expression is exactly the same as for a noiseless circuit because multiplying a Haar-random unitary by any unitary does not change the distribution.

\subsubsection{Case 3: the gate applies outside the subset}

Finally, consider the case $\Omega \cap G =\{\emptyset\}$, i.e. the gate applies to the complementary subset $F\setminus G$. Then Eq.~\eqref{lem_step_statement} leads to a simple relation
\be
S_{G,k} = S_{G,k-1}.
\ee

\subsection{Logarithmic-depth lower bound for local circuits}
\label{sec:log_depth_bound}

Consider the case of qubits $q=2$. We recall that the \textit{parallel layer} is a part of the circuit where each qubit participates in one and only one gate.

\begin{thrm}[Restatement]
Consider a noisy random unitary circuit consisting of $D$ parallel layers of two-qubit gates. For any bit strings $\vec{z}, \vec{z}' \in \{0,1\}^n$, with $\vec{z} \neq \vec{z}'$, the corresponding initial quantum states $\rho = |\vec{z}\rangle \langle \vec{z}|$ and $\sigma = |\vec{z}'\rangle \langle \vec{z}'|$ satisfy
\be
\frac{1}{2} \mathbb{E}_{\mathcal B} \|\mathcal C(\rho - \sigma)\|_1 \geq e^{-\Gamma D},
\ee
where $\Gamma>0$ is a constant, $\mathcal{C}$ is a noisy random circuit channel, and $\mathbb{E}_{\mathcal B}$  denotes the expectation over the random two-qubit unitaries.
\end{thrm}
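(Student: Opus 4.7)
The plan is to reduce the target trace distance to a single-qubit Hilbert--Schmidt distance and lower-bound the latter via a backward path in the subset-distance recursion. By monotonicity of the trace norm under partial trace, $T_{F,D} \geq T_{\{i\},D}$ for any qubit $i$. Choose $i$ with $z_i \neq z_i'$, which exists since $\vec{z} \neq \vec{z}'$. Using the bound $T_{G,k} \geq S_{G,k}$ from~\eqref{TS_ineq}, it then suffices to prove $S_{\{i\},D} \geq e^{-\Gamma D}$. For computational-basis product states, $S_{G,0} = 1$ whenever $G \ni i$.

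\textbf{Path expansion.} The Case~1--3 update rules are non-negative linear combinations. Iterating them gate by gate expresses
\[
S_{\{i\},D} \;=\; \sum_{\mathrm{paths}} \Bigl(\prod_{k} c_k\Bigr)\, S_{G_0,0},
\]
where a path is a sequence $G_D = \{i\} \to G_{D-1} \to \cdots \to G_0$ with $G_{k-1}$ appearing on the right-hand side of the update for $S_{G_k,k}$ and $c_k \geq 0$ the corresponding coefficient. Restricting to a single carefully chosen path already gives a valid lower bound, so the task is to exhibit one for which $i \in G_0$ (ensuring $S_{G_0,0} = 1$) and $\prod_k c_k \geq c^D$ for a positive constant $c$ depending only on the noise model and $d$.

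\textbf{Greedy grow/shrink construction.} I build the path greedily. For each gate $\Omega_k$: if $\Omega_k \cap G_k = \emptyset$, keep $G_{k-1} = G_k$ with coefficient $1$ (Case~3); if $\Omega_k \subseteq G_k$, keep $G_{k-1} = G_k$ with coefficient $r$ (Case~1); and if $|\Omega_k \cap G_k| = 1$ at some qubit $m$ (Case~2), \emph{grow} to $G_{k-1} = G_k \cup \Omega_k$ with coefficient $(\beta-\mu)/2$ when $m = i$, and \emph{shrink} to $G_{k-1} = G_k \setminus \Omega_k$ with coefficient $(\beta+\mu)/2$ otherwise. This rule preserves $i \in G_k$ at every step, so $i \in G_0$. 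Because each parallel layer contains at most one gate touching $i$, grows occur at most once per layer; the invariant $|G_k| \geq 1$ forces the number of shrinks to be at most the number of grows, and a simple potential argument then controls the total count of non-trivial contributions over all $D$ layers.

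\textbf{Conclusion and obstacle.} Each non-trivial $c_k$ is one of $r$ or $(\beta \pm \mu)/2$, and each is a strictly positive constant as seen from the Haar-average formulas~\eqref{eq:rule_1}--\eqref{eq:rule_2} combined with the symmetric CPTP structure of the two-qubit noise channel. Multiplying these lower bounds gives $\prod_k c_k \geq c^D$, hence $T_{F,D} \geq e^{-\Gamma D}$ with $\Gamma = -\log c$. The principal technical obstacle is the amortized counting argument: one must verify that the adaptive grow/shrink schedule keeps $|G_k|$ bounded uniformly by a constant (depending on $d$) across worst-case parallel architectures, so that the total number of non-trivial factors grows only linearly in $D$ rather than super-linearly, and confirm strict positivity of all the relevant coefficients for the full class of symmetric two-qubit noise channels.
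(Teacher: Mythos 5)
Your approach is sound and closely parallel to the paper's, but organized differently: the paper runs the recursion forward, tracking the maximum two-qubit Hilbert--Schmidt distance $S^{(2)}_l := \max_{\{i,j\}\in C_{l+1}} S_{\{i,j\},l}$ and showing $S^{(2)}_l \geq e^{-\Gamma} S^{(2)}_{l-1}$; you run it backward, committing to a single path through subsets. Both rely on the same core update rules (Cases 1--3 from Eq.~\eqref{lem_step_statement}) and the same two per-layer coefficients $r$ and $\tfrac14(\beta^2-\mu^2)$. The student's route is a valid alternative, and in fact once the missing step is supplied it yields exactly the paper's $\Gamma = -\log\min\{r,\tfrac14(\beta^2-\mu^2)\}$.

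The gap you flag is real and must be filled, but it is filled by a stronger and cleaner fact than your potential argument suggests. The issue is that your accounting (``shrinks $\leq$ grows'') only controls the grow/shrink factors and not the Case-1 factors: if $|G_k|$ could grow linearly, a single layer could contain $\Theta(|G_k|)$ gates lying entirely inside $G_k$, each contributing an extra factor $r<1$, and the product would decay as $r^{\Theta(D^2)}$. Fortunately, the greedy rule never lets $|G_k|$ exceed $2$ after the first processed layer. Concretely: starting from $G=\{i,a\}$, if the layer gate on $i$ is $\{i,a\}$, Case 1 applies and $|G|$ stays $2$; if it is $\{i,b\}$ with $b\neq a$, you grow to $\{i,a,b\}$, but then $a$'s gate partner $c$ cannot be $i$, $b$, or $a$ (all already occupied in that parallel layer), so $c\notin G$ and the forced shrink returns $|G|=2$. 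Hence each layer contributes at most one Case-1 factor $r$ or one grow/shrink pair $\tfrac14(\beta^2-\mu^2)$, giving a product $\geq c^D$ as required. (Note this uses only that each qubit is in exactly one gate per layer; the bound is $|G_k|\leq 2$ with no dependence on the lattice dimension $d$, which does not appear in this theorem.) Your proposal also still needs the observation $S_{\{i\},0}=1$ when $z_i\neq z_i'$, which is the correct choice of starting point; and the ``strict positivity'' worry only affects whether $\Gamma$ is finite, not the truth of the claimed inequality.
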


\begin{proof} Consider the marginal HS distance $S_{\{i,j\},l}$ between states $\rho$ and $\sigma$ after $0 \leq l \leq D - 1$ parallel layers of the circuit for any pair of qubits $i$ and $j$. Let us define the quantity
\be
S^{(2)}_l :=  \max_{\{i,j\}\in C_{l+1}} S_{\{i,j\},l},
\ee
where $C_l$  (layer connectivity) is the set of all two-qubit pairs on which gates at the layer $l$ act on.
Let us now consider the evolution of $S_{\{i,j\},k}$. For parallel circuit, depending on the choice of qubit group and the layer, there are two possible scenarios.

In the first scenario, qubits $i$ and $j$ are subject to the same two-qubit gate. Then, we can use Eq.~\eqref{eqs:same_set_rule} and rewrite
\be\label{eq:option1_99s}
S_{\{i,j\},l} = rS_{\{i,j\},l-1},
\ee
where we have taken into account that the HS distance for any empty subset is zero by default, and $r$ is
taken from Eq.~\eqref{eq:rule_1}.

In the second scenario, the qubits $i$ and $j$ are each subjected to different gates, coupled to qubits $i'$ and $j'$, respectively.
First, consider the action of the gate between qubits $i$ and $i'$. For this, we introduce the ``midlayer" values of HS distance $S'_{\Omega,l}$ for the state of the qubits before the gate between qubits $i$ and $i'$ has been applied. Using Eq.~\eqref{eq:cross_gate_expr}, we get
\be\label{eqs:step1_0adc}
S_{\{i,j\},l} = \frac{1}{2}(\beta+\mu)S_{\{j\},l}+\frac{1}{2}(\beta-\mu)S'_{\{i,i',j\},l},
\ee
where
\be
\beta = \frac{q}{q^2 + 1} (A_{ii'} + B_{i'i}), \qquad \mu = \frac{q}{q^2 - 1} (B_{i'i} - A_{ii'}).
\ee
Next, we take into account the effect of gate between $j$ and $j'$. It leads us to
\be\label{eqs:step2_0adc}
\begin{split}
&S'_{\{j\},l} = \frac{1}{2}(\beta-\mu)S_{\{j,j'\},l-1}\\
&S'_{\{i,i',j\},l} = \frac{1}{2}(\beta+\mu)S_{\{i,i'\},l-1}+\frac{1}{2}(\beta-\mu)S_{\{i,i',j,j'\},l-1}
\end{split}
\ee
where 
\be
\beta' = \frac{q}{q^2 + 1} (A + B), \qquad \mu' = \frac{q}{q^2 - 1} (B - A).
\ee
Combining the expressions in Eq.~\eqref{eqs:step1_0adc} and \eqref{eqs:step2_0adc}, we get
\be\label{eq:option2_99s}
\begin{split}
S_{\{i,j\},l} &= \frac{1}{4}(\beta^2-\mu^2)\Bigl(S_{\{i,i'\},l}+S_{\{j,j'\},l-1}\Bigl)+\frac{1}{4}(\beta-\mu)^2S_{\{i,i',j,j'\},l-1}\\
&\geq \frac{1}{4}(\beta^2-\mu^2)\left(S_{\{i,i'\},l-1}+S_{\{j,j'\},l-1}\right)
\end{split}
\ee
where we used the fact that HS distance is non-negative. We observe that after applying either Eq.~\eqref{eq:option1_99s} or Eq.~\eqref{eq:option2_99s}, depending on the specific case, to the HS of all qubit pairs in $C_{l+1}$, at least one pair will be expressed in terms of the most recent HS from $C_l$. Thus,
\be
S^{(2)}_{l} \geq e^{-\Gamma} S^{(2)}_{l-1}, \qquad \Gamma = -\log\left(\min\left\{r,\frac{1}{4}(\beta^2-\mu^2)\right\}\right).
\ee
This expression allows us to derive the full solution
\be
 \max_{i\neq j} S_{\{i,j\},D} \geq e^{-\Gamma} S^{(2)}_{D-1} \geq S^{(2)}_{0} e^{-\Gamma D} =  e^{-\Gamma D},
\ee
where we have taken into account that for two distinct bitstrings $S^{(2)}_{0} = 1$.
Consider qubit indices $\{a,b\}$ such that $S_{\{a,b\},D} = \max_{i\neq j}S_{\{i,j\},D}$. Then, using the inequality in Eq.~\eqref{TS_ineq}, we get
\be
T_{F,D}: = \frac{1}{2} \mathbb{E} \|\mathcal C(\rho - \sigma)\|_1  \geq T_{\{a,b\},D}\geq S_{\{a,b\},D} \geq e^{-\Gamma D},
\ee
which completes the proof.
\end{proof}

\subsection{Linear-depth upper bound for local circuits}
\begin{theoremS}
Consider an $n$-qubit noisy random unitary circuit consisting of $D$ parallel layers of two-qubit gates.  For any pair of input states $\rho, \sigma \in \mathsf S_{2^n}$, we have
\be
\frac 12\mathbb E_{\mathcal B} \|\mathcal C(\rho-\sigma)\|_1\leq 2^{n/2}e^{-\Gamma' D},
\ee
where $\Gamma'>0$ is a constant, $\mathcal{C}$ is a noisy random circuit channel, and $\mathbb{E}$ denotes the expectation over the random two-qubit unitaries.
\end{theoremS}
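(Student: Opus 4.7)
My plan is to bound the trace distance via the Schatten-norm inequality in Eq.~\eqref{TS_ineq} and then establish exponential decay of the full Hilbert--Schmidt distance. Applying $T_{F,D}\leq \sqrt{N_F\,S_{F,D}} = 2^{n/2}\sqrt{S_{F,D}}$ together with $S_{F,0}\leq 1$ (valid for any pair of density matrices, since $\|\rho-\sigma\|_2^2\leq 2$) reduces the theorem to establishing a uniform per-layer contraction $S_{F,k}\leq \lambda\,S_{F,k-1}$ with $\lambda<1$ depending only on the noise model; iterating over $D$ layers then yields $T_{F,D}\leq 2^{n/2}\,e^{-\Gamma' D}$ with $\Gamma'=-\tfrac12\log\lambda$.

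For the per-layer contraction, I would iterate the Haar-averaged recurrence in Eq.~\eqref{lem_step_statement}. For $G=F$ every gate on $\Omega\subset F$ is in Case~1, so Eq.~\eqref{eqs:same_set_rule} specializes to $S_{F,k}=(\alpha-r)\,S_{F\setminus\Omega,k-1}+r\,S_{F,k-1}$. Since the $n/2$ gates of a parallel layer act on disjoint two-qubit supports $\Omega_1,\ldots,\Omega_{n/2}$, their single-gate updates on the family of subset distances $(S_{G,k})_{G\subseteq F}$ commute and compose into a linear transfer with a block-product structure whose entries are the coefficients $c_1,c_2$ from Eqs.~\eqref{eq:c1c2def}, \eqref{eq:rule_1}, \eqref{eq:rule_2}. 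To extract the contraction I would construct a Lyapunov functional $\Phi_k=\sum_G w_G\,S_{G,k}$ with positive weights $w_G$ built as products over blocks of a left Perron--Frobenius eigenvector of the per-block transfer matrix; by construction $\Phi_k\leq\lambda\,\Phi_{k-1}$ with $\lambda$ the Perron eigenvalue, and the block-product structure keeps the ratio $S_{F,k}/\Phi_k$ bounded by an $n$-independent constant, giving $S_{F,k}\leq C\lambda^k$. In the special case of an architecture with fixed pairing across layers one can bypass the Perron argument entirely: the only backward path from $G^{(D)}=F$ to an initial union-of-blocks subset $G^{(0)}$ that ever returns to $F$ is the path that keeps every block at every gate, of weight $r^{nD/2}$, yielding the explicit bound $S_{F,D}\leq r^{nD/2}$ and hence $\Gamma'=-\tfrac14\log r$ via $r^{n/4}\leq r^{1/4}$.

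The main obstacle is verifying $\lambda<1$ strictly for any two-qubit noise channel that is not a unitary channel, including nonunital ones. This reduces to the identity $a=q^{-4}\sum_{ij}|\Tr K^iK^{j\dag}|^2$ in Eq.~\eqref{eq:uv_2q_gate}, which coincides up to normalization with the purity of $\mathcal N$'s Choi state; a quantum state has purity equal to one only if it is pure, so $a<1$ strictly unless the Choi state is pure, i.e.\ unless $\mathcal N$ is a unitary channel. This forces $r<1$ in Eq.~\eqref{eq:rule_1}, and an analogous Choi-state purity bound applied to the coefficients $A,B$ of Eq.~\eqref{eq:uv_2q_gate2} controls $\beta\pm\mu$ from Eq.~\eqref{eq:rule_2}, keeping the Perron eigenvalue $\lambda$ strictly below one. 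Nonunitality enters only through the parameters $b,A,B$ and does not close the spectral gap: the Haar average on the fifteen-dimensional traceless two-qubit Pauli subspace is a strict Hilbert--Schmidt contraction for any genuinely noisy CPTP map, and the identity component of $M$ is irrelevant throughout because $c_I=0$ is preserved by trace-preservation of every channel in the circuit. Hence $\lambda<1$ strictly and the resulting $\Gamma'>0$ depends only on the noise model, completing the proof.
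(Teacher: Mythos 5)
Your reduction via Eq.~\eqref{TS_ineq} to bounding $S_{F,D}$ matches the paper's, and your Choi-purity observation (that $a<1$ strictly iff the noise channel is non-unitary, forcing $r<1$) is a nice explicit justification of the strictness that the paper leaves implicit. However, your route from there diverges and has gaps. The per-layer contraction $S_{F,k}\leq\lambda S_{F,k-1}$ does not hold (the recursion couples $S_F$ to $S_{F\setminus\Omega}$, not just to itself), and your pivot to a Lyapunov functional $\Phi_k=\sum_G w_G\,S_{G,k}$ is not carried through: the ``block-product'' transfer structure factorizes only within a single parallel layer, while in a brickwork architecture the pairing changes between layers, so a product-form left Perron eigenvector cannot be propagated across layers without further argument; and the claim that the ratio $S_{F,k}/\Phi_k$ stays bounded by an $n$-independent constant is asserted, not derived (since $\Phi$ sums over $2^n-1$ subsets, the normalization of $w_G$ and the size of $w_F$ need to be controlled explicitly). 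The fixed-pairing special case is also incorrect as stated: paths that leave $F$ still contribute positively to $S_{F,D}$ through their endpoints $S_{G^{(0)},0}$ with $G^{(0)}\neq F$, so the inequality $S_{F,D}\leq r^{nD/2}$ does not follow from ``the only path that ever returns to $F$.''

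The key device in the paper's proof that you are missing is the modified family $S'_{G,k}$ defined in Eqs.~\eqref{eq:s_prime_1}--\eqref{eq:s_prime_4}: it overestimates $S$ only in the corner case $G=\Omega$ (replacing the weight $r$ by $\alpha\geq r$), which makes the total weight picked up at each gate \emph{exactly} $\alpha$ (gate inside $G$), $\beta$ (gate straddling $G$), or $1$ (gate outside $G$). With this normalization, every backward path starting from the nonempty set $F$ must pick up at least one factor of $\alpha$ or $\beta$ per parallel layer (any nonempty set meets some gate in each layer), yielding $S_{F,D}\leq S'_{F,D}\leq\max(\alpha,\beta)^{\Omega(D)}$ by a simple counting argument, without any Perron--Frobenius machinery or path enumeration. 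Your Lyapunov strategy could in principle be made rigorous, but as written it is substantially harder than the paper's $S'$ trick and leaves the critical quantitative steps open.
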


\begin{proof}  Let us introduce a new set of parameters $\{S'_{G,k}\}$ in the following way. First we set that if the $d$-th gate applies within set $G$, i.e. $\Omega\subset G$, but $\Omega\neq G$, we have
\be\label{eq:s_prime_1}
S'_{G,k} = S_{G,k} = \alpha\Bigl[\left(1-\frac{r}\alpha\right) S_{G\setminus\Omega,k-1}+\frac r\alpha S_{G,k-1}\Bigl],
\ee
 where $\alpha$ is defined in Eq.~\eqref{eq:rule_1}. The exception to this is when $G=\Omega$, then
\be\label{eq:exception}
S'_{\Omega,k} = \alpha S_{\Omega,k-1}.
\ee
In this case, $S_{\Omega,k} = rS_{\Omega,k-1}\leq S'_{\Omega,k}$ because $r\leq\alpha$. Next, if $\Omega$ and $G$ overlap but $\Omega\setminus G \neq \{\emptyset\}$, we get
\be
S'_{G,k} = S_{G,k} = \beta\Bigl[\left(1+\frac{\mu}{2\beta}\right) S_{G\setminus\Omega,k-1}+\left(1-\frac{\mu}{2\beta}\right)S_{G,k-1}\Bigl],
\ee
Finally, if the gate applies outside $G$, i.e. $G\cap \Omega = \{\emptyset\}$, we get
\be\label{eq:s_prime_4}
S'_{G,k} =S_{G,k} = S_{G,k-1}.
\ee
Summarizing, the quantities $S'_{G,k}=S_{G,k}$ for all $G$ except when $G=\Omega$, in which case they are given by Eq.~\eqref{eq:exception}. Therefore, we set that $S'_{G,k}\leq S_{G,k}$. 

Using Eqs.~\eqref{eq:s_prime_1}-\eqref{eq:s_prime_4}, we get that there exist integers $N_1$ and $N_2$ and coefficients $1\geq p(Q,G)\geq 0$ corresponding to each subset $Q$, $G$ in $F$ excluding empty set (denoted as $\mathcal F\setminus\emptyset$) and satisfying $\sum_{G\in\mathcal F\setminus\emptyset}p(Q,G) = 1$ such that
\be
\forall G'\in \mathcal F\setminus\emptyset: \qquad S'_{Q,k} = \alpha^{N_1}\beta^{N_2}\sum_{G\in\mathcal F\setminus\emptyset}p(Q,G)S_{G,k-ND/2}.
\ee
Next, because every qudit participates ate least at each gate and because $Q$ cannot be empty, we have
\be
2\leq N_1+N_2\leq ND/2.
\ee
Then, we can transform
\be
S_{F,k}\leq S'_{F,k} \leq e^{-2\gamma D}\sum_{G\in\mathcal F\setminus\emptyset}p(Q,G)S_{G,0} \leq e^{-2\gamma D}\sum_{G\in\mathcal F\setminus\emptyset}p(Q,G) \leq e^{-2\Gamma' D}.
\ee
where $\gamma = -\log \max(\alpha,\beta)$ as in the condition of the Theorem. Using Eq.~\eqref{TS_ineq}, we then get
\be
T_{F,k}\leq \sqrt{N_F S_{F,k}} = 2^{n/2}e^{-\Gamma' D}~,
\ee
which completes the proof.
\end{proof}

\subsection{Constant-depth upper bound for all-to-all random circuits}
\label{sec:const_depth_rc}

Consider a model where each layer consists of a global Haar random $n$-qubit gate, followed by a layer of single-qubit noise channels. We then consider the operator
\be
X_d := \mathcal C_d(\rho-\sigma),
\ee
where we assume that $\rho$ and $\sigma$ are two orthogonal pure states. Here, $\mathcal{C}_d$ represents the noisy unitary circuit channel,
\be\label{eqs:all_to_all_channel}
\mathcal C_d := \mathcal N^{\otimes n}\circ \mathcal U_d\cdots \mathcal N^{\otimes n}\circ \mathcal U_1,
\ee
where $\mathcal U_k$ is the unitary transformation through the $k$th Haar random unitary gate, and $\mathcal N$ is the single-qubit noise channel,
\be
\mathcal N(\rho) = \sum_{i=1}^q K_i\rho K_i^\dag,
\ee
which results in the total channel
\be
\mathcal N^{\otimes n} (\rho) := \sum_{{\bf s}\in C_n} \hat K_{\bf s}\rho \hat K^\dag_{\bf s}, \qquad \hat K_{\bf s} = K_{s_1}\circ K_{s_2}\circ \dots \circ K_{s_n},
\ee
where $K_1, \dots, K_q$ are the Kraus operators for a single-qubit channel, and $C_n$ is the set of all combinations of $n$ numbers taking values $s_i \in \{1, \dots, q\}$. Our goal is to find the expectation of the trace distance between the initially orthogonal states $\rho$ and $\sigma$, i.e.
\be\label{eqs:avg_trace_dis}
\frac 12\mathbb E_{\mathcal U}\|X_d\|_1 \equiv \frac 12\mathbb E_{\mathcal U} \Tr \sqrt{X_d^2},
\ee
where $\mathbb E_{\mathcal U}$ is the expectation over Haar random gates. Then, we prove that for the replacement channel (see Definition~\ref{dfn:rep_channel}) the following result holds:

\begin{thrm}[Restatement]
Consider an $n$-qubit all-to-all random circuit of depth $D > 1$ subject to noise generated by a replacement channel with the rate parameter $0 < \gamma \leq 1$. Then
\be
\frac{1}{2} \mathbb{E} \|\mathcal{C}(\rho - \sigma)\|_1 \leq O\left(2^{n/2} \left(1 - \frac{\gamma}{2}\right)^{n(D - 1)/2}\right),
\ee
where $\mathcal{C}$ denotes the channel corresponding to the noisy random circuit, and $\mathbb{E}$ denotes the expectation over the gate unitaries.
\end{thrm}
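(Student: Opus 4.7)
The plan is to reduce the trace-distance bound to a Hilbert-Schmidt (HS) distance bound via the inequality $T_{F,D}^2 \leq N_F S_{F,D}$ from Eq.~\eqref{TS_ineq} (with $N_F = 2^n$), and then iterate the HS distance across the $D$ layers using the subset-distance algebra already developed. Since every layer applies a Haar-random unitary on the full set $\Omega_k = F$, the general recursion in Eq.~\eqref{lem_step_statement} with $G = \Omega = F$ collapses to $S_{F,k} = c_2\, S_{F,k-1}$ (the $c_1$ term drops because $S_{\emptyset,k}=0$ by convention). For orthogonal pure initial states $S_{F,0} = 1$, so iterating yields $S_{F,D} = c_2^D$, and the whole problem reduces to controlling the single per-layer factor $c_2 = (N^2 v - u)/(N^2 - 1)$ from Eq.~\eqref{eq:uv_params}.

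The second step is to compute $u$ and $v$ explicitly for the replacement channel. Taking single-qubit Kraus operators $K_0 = \sqrt{1-\gamma}\,I$, $K_1 = \sqrt{\gamma}|0\rangle\langle 0|$, $K_2 = \sqrt{\gamma}|0\rangle\langle 1|$, the $n$-qubit Kraus operators factor as $K^{\vec s} = \bigotimes_k K_{s_k}$, so both $u$ and $v$ become products of single-qubit quantities. A direct single-qubit calculation of $\sum_{s,s'}|\Tr K_s K_{s'}^\dag|^2 = 4 - 6\gamma + 4\gamma^2$ and $\Tr[(\sum_s K_s K_s^\dag)^2] = 2(1+\gamma^2)$ then gives $v = (1 - 3\gamma/2 + \gamma^2)^n$ and $u = (1 + \gamma^2)^n$, which fully determines $c_2$.

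The final step is the algebraic bound $c_2 \leq (1 - \gamma/2)^n$, which rests on two elementary inequalities valid for $\gamma \in [0,1]$: $(1 - 3\gamma/2 + \gamma^2) \leq (1 - \gamma/2)$, equivalent to $\gamma(\gamma-1) \leq 0$, handles the $N^2 v$ term, while $(1+\gamma^2) \geq (1 - \gamma/2)$ ensures subtracting $u$ does not violate the bound. Combining everything with the HS-to-trace reduction and using $(1-\gamma/2) \leq 1$ in the last step yields
$$T_{F,D} \leq \sqrt{N_F S_{F,D}} = 2^{n/2} c_2^{D/2} \leq 2^{n/2}(1-\gamma/2)^{nD/2} \leq 2^{n/2}(1-\gamma/2)^{n(D-1)/2}.$$
The main obstacle I anticipate is the clean tensor-factorization computation of $u$ and $v$ together with checking that the $-u$ correction in the numerator of $c_2$ does not spoil the target exponent; once the single-qubit Kraus sums are in hand, the remainder is routine algebra.
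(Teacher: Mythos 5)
Your proposal is correct and takes a genuinely different route from the paper. The paper's proof works directly on the trace distance: it applies concavity of $\sqrt{\cdot}$ to pull the expectation inside, then uses Weingarten calculus (fourth moment of Haar matrices) to show that $\mathbb E_k\,\mathcal U_k^\dag[X_k^2]$ stays in the two-dimensional span of $X_{k-1}^2$ and $\Tr(X_{k-1}^2)I$, iterating a $2\times 2$ transfer matrix whose eigenvalues $\alpha_n$ and $\delta_n$ control the decay. You instead reuse the subset-distance recursion from the local-circuit section, specialized to $G=\Omega=F$: the $c_1$ term drops since $S_{\emptyset,k}=0$, reducing the whole problem to bounding the single scalar $c_2=(N_F^2 v-u)/(N_F^2-1)$. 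This is cleaner, needs no new Weingarten machinery, and in fact yields the slightly stronger exponent $D$ in place of $D-1$; the paper's $D-1$ comes from the off-diagonal piece of its transfer matrix, which your recursion never generates. The price you pay is that your route gives only the upper bound, whereas the paper's transfer-matrix decomposition exposes more structure (the operator $X_D^2$ is explicitly identified up to order $\alpha_n,\gamma_n,\delta_n$), which the authors also exploit for the corollary about constant memory.

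One small gap: you chose the Kraus operators $K_0=\sqrt{1-\gamma}\,I$, $K_1=\sqrt{\gamma}\,|0\rangle\langle 0|$, $K_2=\sqrt{\gamma}\,|0\rangle\langle 1|$, which is the replacement channel whose fixed point is the pure state $|0\rangle\langle 0|$ (purity $\eta=1$). The theorem is stated for a replacement channel with an arbitrary fixed state $\sigma^*\in\mathsf S_2$, as in Definition~\ref{dfn:rep_channel}, and the bound claims no dependence on $\eta$. Repeating your single-qubit calculation with the five Kraus operators from that definition gives $\frac12\Tr(\sum_s K_sK_s^\dag)^2 = 1+\gamma^2\eta^2$ and $\frac14\sum_{s,t}|\Tr K_sK_t^\dag|^2 = 1-\tfrac{3\gamma}{2}+\tfrac{(3+\eta^2)\gamma^2}{4}$, so $u=(1+\gamma^2\eta^2)^n$ and $v=\left(1-\tfrac{3\gamma}{2}+\tfrac{(3+\eta^2)\gamma^2}{4}\right)^n$. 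The same two elementary checks close the argument in general: $1-\tfrac{3\gamma}{2}+\tfrac{(3+\eta^2)\gamma^2}{4}\le 1-\tfrac{\gamma}{2}$ is equivalent to $(3+\eta^2)\gamma\le 4$, which holds for $\gamma,\eta\in[0,1]$; and $1+\gamma^2\eta^2\ge 1\ge 1-\tfrac{\gamma}{2}$ trivially. Hence $c_2\le(1-\gamma/2)^n$ for all $\eta$, and your conclusion stands unchanged. You should state that check explicitly rather than implicitly restricting to $\eta=1$.
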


\begin{proof} To obtain an upper bound on this expectation, we use the invariance of the trace distance with respect to the unitary transformation and the concavity of the square root operation, which gives us the inequality
\be
\mathbb E \Tr \sqrt{X_d^2} =  \mathbb E \Tr \sqrt{\mathcal U^\dag [X_d^2]} \leq \frac 12\mathbb  \Tr \sqrt{ \mathbb E\, \mathcal U^\dag [X_d^2]} = \mathbb  \Tr \sqrt{ \mathbb E_1\,\mathcal U_1^\dag\dots \mathbb E_d\,\mathcal U_d^\dag [X_d^2]},
\ee
where $\mathcal U := \prod_k \mathcal U_k$ is the noiseless circuit representation. Next, for any $X$ that satisfies $\Tr X = 0$, we get (see Eq.~\eqref{eqs:weing_calc_09} in Appendix)
\be
\mathbb E_k\, \mathcal U_k^\dag\bigl [(\mathcal N \circ \mathcal U_k [X])^2\bigl] = \mathbb E\sum_{ij} U_k^\dag (\hat K_i U_k X U_k^\dag \hat K^\dag_i)(\hat K_j U_k X U_k^\dag \hat K_j^\dag) U_k = \alpha_n X^2 +\gamma_n\Tr (X^2)I,
\ee
where $I$ is the $n$ qubit identity operator, $\alpha_n$ and $\gamma_n$ are described by Eq.~\eqref{eq:alpha_expression} and \eqref{eq:gamma_expression} in Appendix respectively. Also, we get (see Eq.~\eqref{eqs:weing_calc_10})
\be
\mathbb E \Tr (\mathcal N \circ \mathcal U_k [X])^2 = \mathbb E_{U} \sum_{ij}\Tr \Bigl[ \hat K_j^\dag \hat K_i UX U^\dag \hat K^\dag_i\hat K_j U X U^\dag  \Bigl] = \delta_n\Tr(X^2),
\ee
where the expression for $\delta_n$ is shown in  Eq.~\eqref{eq:delta_expression} in Appendix. Based on this algebra, we get
\be
\mathbb E_k\, \mathcal U^\dag_{k}[X^2_{k}] = \alpha_n X^2_{k-1} + \gamma_n \Tr(X^2_{k-1}) I.
\ee
Using these formulas self-consistently, we derive that
\be
\begin{split}
\|\mathcal C_d(\rho-\sigma)\|_1 \leq \frac 12\sqrt{ AX_0^2+B\Tr(X_0^2) I}, \qquad X_0 \equiv \rho-\sigma,
\end{split}
\ee
 the coefficients $A$ and $B$ can be found as a solution of
\be\label{eqs:matrxi_eq}
\begin{pmatrix}
A \\
B
\end{pmatrix} = \mathcal T^d \begin{pmatrix}
1 \\
0
\end{pmatrix}, \qquad \mathcal T = \begin{pmatrix}
\alpha & 0 \\
\gamma & \delta
\end{pmatrix},
\ee
where $\mathcal T$ is a simple $2\times 2$ transfer matrix. The solution of Eq.~\eqref{eqs:matrxi_eq} is
\be
A =  \alpha_n^d, \qquad B = \gamma \frac{\alpha_n^d - \delta_n^d}{\alpha_n - \delta_n} = O\left(\gamma \max(\alpha,\delta)^{d-1}\right),
\ee
where in the last expression we take an asymptotic limit of large $d$. Using the fact the initial states are pure and orthogonal, we get
\be\label{eqs:final_bounN_alltoall}
\begin{split}
\Delta(d) \leq \sqrt{A+2B}+(N_H/2-1)\sqrt{2B},
\end{split}
\ee
where $N_H = 2^n$ is the dimension of the Hilbert space. 

Using the estimates of $\alpha_n$ and $\gamma_n$ provided in Eq.~\eqref{eq:replacement_architecture_ag1} of the Appendix, we get that $\alpha_n = O((1-7\kappa/8)^n)$, $\gamma_n = O(2^{-n})$, and $\delta_n = O((1-\gamma/2)^n)$. In this limit, $\alpha_n/\delta_n \to 0$ as $n\to\infty$, therefore for large enough $n$ we have $\max(\alpha_n,\delta_n) = \delta_n$. This means
\be
\Delta(d) = O\left(\alpha_n^{d/2}\right) + O\left(N_H\gamma^{1/2}\delta_n^{d/2}\right) = O\left(2^{n/2}\left(1-\frac \gamma2\right)^{n(d-1)/2}\right)
\ee
This expression conlcudes our proof.
\end{proof}

The direct corollary of the theorem is that for $d>N_0 = 1-1/\log_2(1-\gamma/2)$ the distance decays exponentially in the total circuit volume $V = nd$. Thus all-to-all random circuits posses only constant memory. 

\section{Appendix}

\subsection{Quantum channels}

Let $\mathcal S_n = \{\rho: \mathbb C^{2^n}\to \mathbb C^{2^n} \mid \Tr\rho = 1, \rho \geq 0\}$. Below we provide definitions of replacement and generalized-damping channels.

\begin{dfn}[\textbf{Replacement Channel}]  \label{dfn:rep_channel}
For any $0 < \gamma \leq 1$ and a fixed state $\sigma^* \in \mathcal S_2$, we define the replacement channel $\mathcal N_{\rm repl}$ such that for any state $\rho \in \mathcal S_2$, the action of the channel is given by
\be
\mathcal N_{\rm rep}(\rho) := (1-\gamma)\rho +  \gamma \sigma^*.
\ee
\end{dfn}

In this formulation, the channel mixes the input state $\rho$ with the fixed state $\sigma^*$, where $\gamma$ controls the degree of replacement. Consider the case when the state $\sigma^*$ has purity $\eta = \sqrt{2\Tr((\sigma^*)^2)-1}$ this channel can be written in the form
\be
\mathcal N_{\rm rep}(\rho) = \sum_{\mu=0}^4  K_\mu \rho K^\dag_\mu,
\ee
where $K_\mu$ are the following Kraus operators
\be
\begin{split}
&K_0 = \sqrt{1-\gamma}\, I, \quad K_1 = \sqrt{{\gamma\frac{1+\eta}{2}}}|\psi_0\>\<\psi_0|,\quad K_2 = \sqrt{{\gamma\frac{1+\eta}{2}}}|\psi_0\>\<\psi_1|,\\
&K_3 = \sqrt{{\gamma\frac{1-\eta}{2}}}|\psi_1\>\<\psi_0|, \quad K_4 = \sqrt{{\gamma\frac{1-\eta}{2}}}|\psi_1\>\<\psi_1|,
\end{split}
\ee
and $\{|\psi_\mu\>\}$ are eigenstates of $\sigma^*$. The case $\eta = 0$ consititutes the case of depolarizing channel. 

\begin{dfn}[\textbf{Generalized Damping Channel}] For any $0 \leq \eta \leq 1$ and $0 < \gamma \leq 1$, we define the (generalized) damping channel as
\be\label{eq:noise_channel}
\mathcal N_{\rm damp}(\rho) := \sum_{\mu=1}^4  K_\mu \rho K^\dag_\mu,
\ee
where

\be
\begin{split}
&K_1 = 
\sqrt{\frac{1+\eta}{2}}\begin{pmatrix}
1 & 0\\
0 & \sqrt{1-\gamma}
\end{pmatrix},\quad 
K_2 = 
\sqrt{\frac{1-\eta}{2}}\begin{pmatrix}
\sqrt{1-\gamma} & 0\\
0 & 1
\end{pmatrix},\\
&K_3 = 
\sqrt{\frac{1+\eta}{2}}\begin{pmatrix}
0 & \sqrt{\gamma}\\
0 & 0
\end{pmatrix},\quad
\quad
K_4 = 
\sqrt{\frac{1-\eta}{2}}\begin{pmatrix}
0 & 0\\
\sqrt{\gamma} & 0
\end{pmatrix}.
\end{split}
\ee
\label{dfn:gen_damping}
\end{dfn}
The parameters are chosen such that this channel also has the purity of the steady state $\sigma^*$ equal to $\eta$. The case $\eta = 1$ corresponds to the commonly used notion of the amplitude damping channel.

\subsection{Weingarten calculus}
\label{sec:wg_calc}

Consider $\mathcal N$ to be a single-qubit noise channel, represented as
\be
\mathcal N(\rho) = \sum_{i=1}^q K_i\rho K_i^\dag,
\ee
Also, consider its action on $m$ qubits forming a subset $\Omega$ of the full set,
\be
\mathcal N^{\otimes m} (\rho) := \sum_{{\bf s}\in C_n} \hat K_{\bf s}\rho \hat K^\dag_{\bf s}, \qquad \hat K_{\bf s} = K_{s_1}\circ K_{s_2}\circ \dots \circ K_{s_m},
\ee
where $K_1, \dots, K_q$ are the Kraus operators for a single-qubit channel, and $C_m$ is the set of all combinations of $n$ numbers taking values $s_i \in \{1, \dots, q\}$.

Using the Weingarten calculus, we derive that
\be\label{eqs:weing_calc_09}
 E_{U}\sum_{ij} U^\dag \hat K_i UX U^\dag \hat K^\dag_i \hat K_j U X U^\dag \hat K_j^\dag U = \alpha_m X^2 + \beta_m (\Tr_\Omega X)^2\otimes I_\Omega+\omega_m\Tr_\Omega (X^2)\otimes I_\Omega,
\ee
where $\mathbb E_U$ is the expectation over the Haar distribution of $m$-qubit unitaries $U$, and $I_\Omega$ is the identity operator acting on the subset $\Omega$ that $U$ acts on, $\Tr_\Omega$ is the partial trace over the subset $\Omega$, and the coefficients are expressed by the Kraus operators of the noise channel as

\be\label{eq:alpha_expression}
\begin{split}
\alpha_m =  \frac{1}{d (d^4 - 5 d^2 + 4)}&\sum_{ij} d^2 \bigl[\Tr(K_i)\Tr(K_i^\dag K_j) \Tr(K_j^\dag)\bigl]^m +  4 \bigl[\Tr(K_i^\dag K_j K_j^\dag K_i)]^m + d^2[\Tr(K_j^\dag K_i^\dag K_j K_i)]^m  \\
&-2d \left[[\Tr(K_i)\Tr(K_i^\dag K_jK_j^\dag)]^m+  [\Tr(K_iK_i^\dag K_j )\Tr(K_j^\dag)]^m + [\Tr(K_i^\dag K_j) \Tr(K_j^\dag K_i)]^m  \right],
\end{split}
\ee

\be
\begin{split}
\beta_m = \frac{1}{d (d^4 - 5 d^2 + 4 )} \sum_{ij} & 2 \, [\mathrm{Tr}(K_i) \mathrm{Tr}(K_j^\dag) \, \mathrm{Tr}(K_i^\dag K_j)]^m + (d^2-2) \, [\mathrm{Tr}(K_i^\dag K_j K_j^\dag K_i)]^m + 2 \, [\mathrm{Tr}(K_j^\dag K_i^\dag K_j K_i)]^m\\
&-d \left[ [\mathrm{Tr}(K_i^\dag K_j) \, \mathrm{Tr}(K_j^\dag K_i)]^m + \, [\mathrm{Tr}(K_j^\dag) \, \mathrm{Tr}(K_i^\dag K_j K_i)]^m + \, [\mathrm{Tr}(K_i) \mathrm{Tr}(K_j^\dag K_i^\dag K_j) ]^m\right] 
\end{split}
\ee

\be\label{eq:gamma_expression}
\begin{split}
\omega_m = \frac{1}{d (d^4 - 5 d^2 + 4)} & \sum_{ij} 2  [\mathrm{Tr}(K_i) \mathrm{Tr}(K_j^\dag K_i^\dag K_j)]^m 
+ 2[\mathrm{Tr}(K_i^\dag K_j K_i) \mathrm{Tr}(K_j^\dag)]^m + \left(d^2 - 2 \right) [\mathrm{Tr}(K_i^\dag K_j) \mathrm{Tr}(K_j^\dag K_i)]^m \\
& -d \left[ [\mathrm{Tr}(K_i) \mathrm{Tr}(K_i^\dag K_j) \mathrm{Tr}(K_j^\dag)]^m + [\mathrm{Tr}(K_i K_i^\dag K_j K_j^\dag)]^m + [\mathrm{Tr}(K_i K_j^\dag K_i^\dag K_j)]^m \right].
\end{split}
\ee
where $m$ is the number of qubits participating in the gate and $d := 2^m$. Also, we derive that for operators that satisfy $\Tr_\Omega X = 0$, we get
\be\label{eqs:weing_calc_10}
 \mathbb E_U  \sum_{ij}\Tr \Bigl[ \hat K_j^\dag \hat K_i UX U^\dag \hat K^\dag_i\hat K_j U X U^\dag  \Bigl] = \delta\Tr(X^2),
\ee
where
\be\label{eq:delta_expression}
\delta_m = \frac 1 {d(d^2-1)} \sum_{ij}{d[\Tr(K_i^\dag K_j)\Tr(K_j^\dag K_i)]^m - [\Tr K_i K_i^\dag K_jK_j^\dag]^m}
\ee
is a real parameter.

\subsubsection*{Replacement channel}

For a single-qubit replacement channel we have
\be
\begin{split}
&\alpha_m = \frac{1}{N_m} \Bigl( N_m^2 E_1^m + 4 E_2^m + N_m^2 E_3^m - 2N_m \left( E_4^m + E_5^m + E_6^m \right) \Bigl),\\
&\omega_m = \frac{1}{N_m} \left( 2 E_4^m + 2 E_5^m + (N_m^2 - 2) E_6^m - N_m \left( E_1^m + E_2^m + E_3^m \right) \right)
\end{split}
\ee
where $N_m = 2^m$, $N_m = N_m (N_m^4 - 5 N_m^2 + 4)$, and we used the following notations
\be
\begin{split}
&E_1 : = \sum_{ij}\Tr(K_i)\Tr(K_i^\dag K_j) \Tr(K_j^\dag) =  8 - 12 \kappa + \frac{1}{2}(9 + \epsilon^2) \kappa^2,\\
&E_2 : = \sum_{ij}\Tr( K_i K_i^\dag K_j K_j^\dag ) = 2 \left(1 + \epsilon^2 \kappa^2 \right) ,
\\
&E_3 : = \sum_{ij}\Tr(K_j^\dag K_i^\dag K_j K_i) = 2 + \frac{1}{2} (\epsilon^2-3) \kappa^2,
\\
&E_4 : = \sum_{ij}\Tr(K_i)\Tr(K_i^\dag K_jK_j^\dag) = 4 - 3 \kappa + \epsilon^2 \kappa^2,
\\
&E_5 : = \sum_{ij}\Tr(K_iK_i^\dag K_j )\Tr(K_j^\dag) = 4 - 3 \kappa + \epsilon^2 \kappa^2,\\
&E_6 : = \sum_{ij}\Tr(K_i^\dag K_j) \Tr(K_j^\dag K_i)  = 4 - 6 \kappa + (3 + \epsilon^2) \kappa^2.
\end{split}
\ee
To derive the expression in the asymptotic limit of large $m$, we first note that the normalization factor behaves as $N_m = \Theta(2^{5m})$. At the same time, we can bound each component of the larger formula as
\be
\begin{split}
&\frac 12 \leq E_1\leq 8\left(1-\frac 78 \kappa\right),\qquad  2\leq E_2\leq 4 \qquad \frac 12 \leq  E_3\leq 2, \qquad 1\leq E_4,E_5\leq 4\qquad 1\leq E_6\leq 4 \left(1-\frac 12\kappa\right).
\end{split}
\ee
Most of terms in the expressions above become marginal, which leaves us with
\be\label{eq:replacement_architecture_ag1}
\begin{split}
\alpha_m = O\Biggl(\left(1-\frac 78\kappa\right)^m\Biggl) + O(2^{-2m}), \qquad \gamma_m = O(2^{-m}), \qquad \delta_m = O\Biggl(\left(1-\frac 12\kappa\right)^m \Biggl)
\end{split}
\ee

\subsubsection*{Generalized damping channel}

\be
\begin{split}
&E_1 := \sum_{ij}\Tr(K_i)\Tr(K_i^\dag K_j) \Tr(K_j^\dag) = -2 \left(\sqrt{1-\kappa }+2\right) \kappa +4 \left(\sqrt{1-\kappa }+1\right)+\frac{1}{2}
   \kappa ^2 \left(\epsilon ^2+1\right)\\
&E_2 := \sum_{ij}\Tr(K_i^\dag K_j K_j^\dag K_i) = 2 \left(1+\kappa ^2 \epsilon ^2\right)
\\
&E_3 := \sum_{ij}\Tr(K_j^\dag K_i^\dag K_j K_i) = 2 \left(\sqrt{1-\kappa }-1\right) \kappa +\frac{1}{2} \kappa ^2 \left(\epsilon
   ^2+1\right)+2
\\
&E_4 := \sum_{ij}\Tr(K_i)\Tr(K_i^\dag K_jK_j^\dag) = -\kappa +2 \sqrt{1-\kappa }+\kappa ^2 \epsilon ^2+2
\\
&E_5 := \sum_{ij}\Tr(K_iK_i^\dag K_j )\Tr(K_j^\dag) = -\kappa +2 \sqrt{1-\kappa }+\kappa ^2 \epsilon ^2+2\\
&E_6 := \sum_{ij}\Tr(K_i^\dag K_j) \Tr(K_j^\dag K_i)  = -4 \kappa +\kappa ^2 \left(\epsilon ^2+1\right)+4
\end{split}
\ee

\be
\begin{split}
&\alpha = \frac{1}{45} (\kappa -1) \left(-10 \kappa ^2+3 \left(4 \sqrt{1-\kappa }+9\right) \kappa
   -3 \left(8 \sqrt{1-\kappa }+7\right)\right),\\
   &\gamma = \frac{1}{180} (\kappa -1) \left(-12 \left(\sqrt{1-\kappa }-2\right) \kappa +24
   \left(\sqrt{1-\kappa }-1\right)+\kappa ^3 \left(6 \epsilon ^2+3\right)-\kappa ^2
   \left(18 \epsilon ^2+11\right)\right)
\end{split}
\ee

\be
\begin{split}
&1\leq E_1\leq 8(1-c\kappa), \qquad 2\leq E_2\leq 4 \qquad \frac 12 \leq  E_3\leq 2, \qquad 1\leq E_4,E_5\leq 4 \qquad 1\leq E_6\leq 4(1-\kappa/2)
\end{split}
\ee
where $c \approx 0.8452>5/6$.

In the asymptotic limit of large $m$ we get
\be\label{eq:replacement_architecture_ag_ampdad}
\begin{split}
\alpha = O\Bigl((1-c_{\rm damp}\kappa)^m + 2^{-2m}\Bigl), \qquad \gamma = O(2^{-m}).
\end{split}
\ee

\bibliographystyle{ieeetr}
\bibliography{references}

\end{document}